%% file: Main.tex
\documentclass[a4paper,UKenglish,cleveref, autoref, thm-restate,numberwithinsect]{lipics-v2021}

\pdfoutput=1 %
\hideLIPIcs  %

\usepackage{amsmath}
\usepackage{amsthm}
\usepackage{amssymb}
\usepackage{hyperref}

\usepackage{xspace}
\usepackage{caption}
\usepackage{xcolor}
\usepackage{tikz}
\usetikzlibrary{positioning}
\usepackage{booktabs}
\usepackage[capitalise]{cleveref}

\usepackage{algorithm}
\usepackage{algpseudocode}
\algrenewcommand\algorithmicrequire{\textbf{Input:}}
\algrenewcommand\algorithmicensure{\textbf{At every step $t$:}}
\algrenewcommand\algorithmicdo{\textbf{Initialize:}}
\newcommand{\cost}{\mathrm{cost}}

\newcommand{\cP}{\mathcal{P}}
\newcommand{\cT}{\mathcal{T}}

\newcommand{\brho}{\boldsymbol{\rho}}

\newcommand{\OPT}{\mathrm{OPT}}

\newcommand{\cm}{\left\lceil \frac{m}{2}\right\rceil}

\allowdisplaybreaks

\usepackage[appendix=append, bibliography=common]{apxproof}
\theoremstyle{plain}
\newtheoremrep{claimapp}[theorem]{Claim}
\theoremstyle{plain}
\newtheoremrep{theoremapp}[theorem]{Theorem}
\theoremstyle{plain}
\newtheoremrep{lemmaapp}[theorem]{Lemma}
\newtheorem{question}{Question}
\newtheorem{assumption}[theorem]{Assumption}
\title{Improved Algorithms and Lower Bounds for Parametrized Metrical Service Systems} 

\author{Junhao Gan}{School of Computing and Information Systems, The University of Melbourne, Australia}{junhao.gan@unimelb.edu.au}{https://orcid.org/0000-0001-9101-1503}{Supported in part by the Australian Government through the Australian Research Council ARC DP230102908.}

\author{Xiao Sun}{School of Computing and Information Systems, The University of Melbourne, Australia}{xiao.sun.1@student.unimelb.edu.au}{https://orcid.org/0009-0003-2635-8764}{Supported by the Australian Government through the Australian Research Council DP240101353 and by the University of Melbourne through the Melbourne Research Scholarship.}

\author{Seeun William Umboh}{School of Computing and Information Systems, The University of Melbourne, Australia \and ARC Training Centre in Optimisation Technologies, Integrated Methodologies, and Applications (OPTIMA), Australia}{william.umboh@unimelb.edu.au}{https://orcid.org/0000-0001-6984-4007}{Supported by the Australian Government through the Australian Research Council DP240101353.}

\authorrunning{J. Gan, X. Sun, S. W. Umboh}

\Copyright{Junhao Gan, Xiao Sun, Seeun William Umboh} %

\ccsdesc[500]{Theory of computation~Online algorithms}

\keywords{online algorithms, competitive analysis, metrical service systems, metrical task systems}

\acknowledgements{We thank Christian Coester for insight into the case of $m=3$ on general metrics.}

\nolinenumbers %

\begin{document}

\maketitle

\begin{abstract}
We consider the parametrized setting of the classical metrical service system (MSS) problem first studied by Bubeck and Rabani (APPROX/RANDOM 2020). In this setting, the adversary is restricted to a set of $m$ distinct request types, known to the algorithm in advance. The goal is to obtain competitive ratio bounds in terms of $m$. In this work, we make significant progress in understanding the landscape of parametrized MSS and resolve several open problems from Bubeck and Rabani.

Our first main result is a tight bound for parametrized MSS on weighted stars. Previously, Bubeck and Rabani gave a randomized lower bound of $\Omega(m)$ and deterministic upper bound of $O(2^m)$. We show that, surprisingly,   a deterministic $O(m)$-competitive algorithm exists, matching the randomized lower bound. Our key insight is an interval covering formulation of MSS on weighted stars which enables an application of the primal-dual method.

Our second main contribution is an improved lower bound construction for parametrized MSS on hierarchically separated trees (HSTs). Bubeck and Rabani's construction gave a $\omega(1)$ lower bound when $m \geq 6$. Our improved lower bounds are tight for $2$-level HSTs and also rule out $O(1)$-competitive algorithms on HSTs when the parameter $m\geq 4$. We also complement these results by giving a deterministic $O(1)$-competitive algorithm on general metrics when $m=2$ while showing that it is impossible when $m\geq 3$.  
\end{abstract}

\input{intro}
\input{preliminaries}
\input{weighted-star}

\input{hst}
\begin{toappendix}
\input{two-sets}
\end{toappendix}
\section{Open Problems}

Following the initial study of parametrized metrical task systems (MTS) and metrical service systems (MSS) by Bubeck and Rabani~\cite{DBLP:conf/approx/BubeckR20}, our research answers several open problems posed by Bubeck and Rabani. We provide a deterministic $O(m)$-competitive algorithm on weighted stars, tighten the gap between the upper and lower bounds of randomized competitive ratios on HSTs of different levels, and show that it is possible to design a strictly constant-competitive algorithm on general metrics when $m=2$. However, multiple problems remain open: as \cref{ques3} has been resolved on general metrics by showing the contrast between $m=2$ and $m=3$ cases, the focus is now on special classes of metrics, e.g., is it possible to obtain a finite competitive ratio for chasing $3$ sets on HSTs or line metrics? Another interesting open problem is whether the $O(\binom{m}{\lceil m/2\rceil})$ upper bound of randomized algorithms on $2$-level HSTs can also be achieved by a deterministic algorithm. The progress on this problem will not only close the gap between the upper and lower bounds known for deterministic algorithms for this particular class of metrics, but also deepen our fundamental understanding of the parametrized MSS problem, as it concerns whether deterministic algorithms are essentially as powerful as randomized algorithms on $2$-level HSTs and other classes of metrics. Moreover, the technique that we use in \cref{Section3} is a  linear programming primal-dual method that exploits the properties of parametrized MSS with $m$ request types, and we wonder whether this framework can be generalized to prove the results of the more general parametrized MTS on uniform metrics (the tight deterministic bound of $\Theta(m\log(en/m))$ was proven by Bubeck and Rabani) or weighted stars (no result yet).

\bibliography{Reference}

\end{document}

%% file: intro.tex
\section{Introduction}

The \emph{metrical task system} (MTS) problem introduced by Borodin et al.~\cite{DBLP:conf/stoc/BorodinLS87} is a general framework that models a wide range of online computing problems, e.g., paging~\cite{DBLP:journals/cacm/SleatorT85} and $k$-server~\cite{DBLP:journals/jal/ManasseMS90}. In this problem, one server moves between different points (also called states) in an $n$-point metric space to process a sequence of requests that arrive one by one; each request incurs different costs when the server is at different states. The total cost is the sum of the transition cost of moving between different states and the cost of serving the requests. An important special case is the \emph{metrical service system} (MSS) problem~\cite{DBLP:conf/dimacs/ChrobakL91} where every request has cost either $0$ or infinity everywhere in the whole metric space. Note that paging and $k$-server are both MSS problems.

Since its introduction almost four decades ago, MTS has been intensively studied on both general metrics and some special classes of metrics. On general metrics, in a series of recent breakthrough results, Bubeck et al.~\cite{DBLP:journals/siamcomp/BubeckCLL21} showed that the randomized competitive ratio is $O(\log^2 n)$ and Bubeck et al.~\cite{DBLP:conf/stoc/BubeckCR23} gave a matching \emph{existential lower bound}\footnote{The existential lower bound means that there exist $n$-point metrics on which the lower bound holds.}, 
refuting the previous widely-believed conjecture of the existence of a logarithmic upper bound. Bubeck et al.~\cite{DBLP:conf/stoc/BubeckCR23} also showed a \emph{universal lower bound}\footnote{The universal lower bound means that it holds for every $n$-point metric space.} of $\Omega(\log n)$, which is also tight in the sense that an $O(\log n)$ upper bound can be achieved on some special metrics, e.g. uniform metrics~\cite{DBLP:conf/stoc/BorodinLS87} and weighted stars~\cite{DBLP:journals/siamcomp/BubeckCLL21}. While MSS is a special case of MTS, the best competitive ratio that can be achieved by both deterministic and randomized algorithms for MSS in all cases is asymptotically the same as that of MTS only up to a constant factor.

While many online problems are metrical task systems, the best bounds on their competitive ratios are much lower than those implied by the general worst-case MTS bounds. For instance, while the $k$-server problem in an $N$-point uniform metric space can be reduced to an MSS instance with $n = \binom{N}{k}$ states, the randomized competitive ratio is known to be $O(\log k)$, which is much lower than the lower bound of $\Omega(\log\binom{N}{k})$ implied by the universal lower bound for MSS with $\binom{N}{k}$ states. The reason for this discrepancy is because while there are $2^n - 1$ possible requests in a general $n$-state MSS, there are only $N$ possible requests for a $k$-server. 

The above example motivates studying the competitive ratio of MTS in terms of a parameter of the set of possible requests, instead of the number of states $n$. Burley and Irani~\cite{DBLP:journals/algorithmica/BurleyI97} proposed examining request sequences drawn from a given set. For uniform metrics, they provided a deterministic algorithm with competitive ratio $O(\log n)$ times the (unknown) best possible competitive ratio. Another example that shows the power of restricting the set of possible requests is the power management problem~\cite{DBLP:journals/tecs/IraniSG03} in which there are only two possible requests and the competitive ratio is $O(1)$~\cite{DBLP:journals/siamcomp/AugustineIS08}. For MSS, one parameter that has been studied is the \emph{width}  (e.g.~\cite{DBLP:conf/focs/FiatFKRRV91,DBLP:conf/soda/Ramesh93,DBLP:journals/jal/Burley96}), which is the maximum size of every requested set. Width is well-motivated as  MSS can also be viewed as online set chasing, in the sense that the server needs to move to a state in the feasible set where the cost of the current request is $0$ rather than $\infty$, and we only need to consider transition cost while the service cost is zero if the server stays in the feasible set at every step.

\subparagraph{Parametrized MTS.} The parametrized MTS proposed by Bubeck and Rabani~\cite{DBLP:conf/approx/BubeckR20} studies the number of possible requests as the parameter. In particular, the adversary can only generate requests from a fixed set of $m$ requests known to the algorithm in advance. On uniform metrics, they showed that the deterministic competitive ratio is $\Theta(m\log (en/m))$, which is significantly better than the general bound of $\Theta(n)$. Interestingly, the randomized competitive ratio on uniform metrics is still $\Theta(\log n)$ even when $m = 2$. On the other hand, for two-level hierarchically separated tree (HST) metrics, the deterministic competitive ratio is $\Theta(n)$ and the randomized competitive ratio is $\Theta(\log n)$, even when $m$ is constant. In summary, restricting $m$ only improves the deterministic competitive ratio for uniform metrics, and does not help for two-level HSTs.

\subparagraph{Parametrized MSS.} The more interesting case  is parametrized MSS, where parametrization leads to substantial improvements of the competitive ratio from the general case. For MSS, restricting possible requests within a fixed set of size $m$ improves both deterministic and randomized competitive ratios: for uniform metrics, the deterministic and randomized competitive ratios are both $\Theta(m)$ and thus significantly better when $n\gg m$. Another interesting aspect of this result is that deterministic algorithms are essentially as powerful as randomized algorithms, which refutes the folklore that the randomized competitive ratio is polylogarithmic in the deterministic competitive ratio for most problems that can be interpreted as metrical task systems. As for the case of two-level HSTs, there are still gaps between the currently known upper and lower bounds: the deterministic competitive ratio is between $\Omega(\binom{m}{\lfloor m/2\rfloor})=\Omega(2^{m}/\sqrt{m})$ and $O(m2^{m})$ and the randomized competitive ratio is at least $\Omega(2^{\lfloor m/2\rfloor})$.

Bubeck and Rabani~\cite{DBLP:conf/approx/BubeckR20} proposed the following open problems in the study of parametrized MSS, which involve not only the exact tight bound of the competitive ratio in certain scenarios, but also understanding how restricting the number of possible requests $m$ can help improve the competitive ratio, e.g., can we devise an online algorithm whose competitive ratio is purely in terms of $m$ rather than metric space size $n$, and can we do this for some special classes of metrics like HSTs or even general metrics?

\begin{question}\label{ques1}
What is the deterministic competitive ratio on weighted star metrics?
\end{question}

\begin{remark}It is easy to design an $O(2^{m})$-competitive deterministic algorithm and an $O(m)$-competitive randomized algorithm by reducing the problem to MSS on a weighted star where there are only $2^{m}-1$ different states and applying the state-of-the-art result for MTS on weighted stars. For the special case of uniform metrics, the deterministic and randomized competitive ratios are both known to be $\Theta(m)$ so it is natural to ask if the deterministic algorithm can be generalized to  weighted stars.
\end{remark}

\begin{question}\label{ques2}
What is the tight bound of the worst-case competitive ratio on all HSTs of two and higher levels?
\end{question}

\begin{question}\label{ques3}
For what constant values of $m$ can we get constant-competitive algorithms for general metrics or some special classes of metrics like HSTs?
\end{question}

\subsection{Our Contributions}

In this paper, we make substantial progress towards understanding the competitiveness of parametrized MSS.

\subparagraph{Weighted stars.} Our first main result answers \cref{ques1} affirmatively.

\begin{theorem}
\label{star1}
    There is a $2m$-competitive deterministic algorithm for parameterized MSS on weighted stars.
\end{theorem}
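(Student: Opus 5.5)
Our plan is to use the interval covering formulation mentioned in the abstract together with an online primal-dual argument. On a weighted star with leaves of weights $w_1,\dots,w_n$, moving from leaf $i$ to leaf $j$ costs $w_i+w_j$. Up to an additive constant, this equals twice the sum of weights of the leaves we depart from, so it suffices to pay $w_i$ whenever we leave leaf $i$.

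\textbf{Step 1: normalize the instance.} We first group leaves by their \emph{type}, i.e.\ the subset of the $m$ request sets containing them. Within each nonempty type, only a minimum-weight leaf is ever useful. Since moving between leaves of the same type never helps, any algorithm can be restricted to these representatives at no loss. The request sequence $\sigma_1,\sigma_2,\dots$ determines, for each leaf $i$, the times at which $i$ is infeasible.

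\textbf{Step 2: write an interval covering LP.} A solution is a partition of time into maximal stays. A stay of leaf $i$ over a time interval $[s,t]$ is allowed only if $i\in\sigma_\tau$ for all $\tau\in[s,t]$, and it costs $w_i$. We use variables $x_{i,I}$ for the allowed intervals $I$ and impose the covering constraint $\sum_{i,I\ni t}x_{i,I}\ge 1$ for every time $t$. The dual maximizes $\sum_t y_t$ subject to $\sum_{t\in I}y_t\le w_i$ for every allowed interval $I$ of leaf $i$. Every dual-feasible $y$ lower-bounds $\OPT$, up to the additive constant.

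\textbf{Step 3: run a deterministic primal-dual algorithm.} At each time we raise $y_t$ continuously. For each leaf $i$ we maintain a counter: the dual mass accumulated since leaf $i$'s current maximal feasible run began. The counter of leaf $i$ resets whenever $i$ becomes infeasible. The server stays at its current leaf while that leaf is feasible. When the current leaf becomes infeasible, we raise $y_t$ until some leaf $j$ that is feasible for $\sigma_t$ has a tight constraint, meaning its counter reaches $w_j$, and then move there. Any stays we move away from in this way are paid for by dual growth. If the server moves while $y$ is infeasible, we scale down.

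\textbf{Step 4: charge the algorithm's cost to the dual.} Each time the algorithm leaves leaf $j$ it pays $w_j$, and we charge this payment to the dual mass that made $j$ tight. The main obstacle is proving the dual is nearly feasible: for any allowed interval $I$ of leaf $i$, we need $\sum_{t\in I}y_t\le m\cdot w_i$ or similar. The idea is that within a maximal feasible run of leaf $i$, its counter grows to $w_i$ at most once before $i$ is chosen, and we track how many distinct type classes can become tight during that run. Because requests come from only $m$ sets, the number of relevant events during the run is bounded by $m$: each time the dual increases inside the run, the server was at a leaf whose type excludes some request set, and the request sets used inside the run must all contain $i$.

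I would try to show that the dual increments within $I$ can be charged to at most $m$ distinct request types, each contributing at most $w_i$. The intended mechanism is that after leaf $i$ becomes tight, the algorithm would move to $i$ or to a leaf no heavier than $i$, and the latter can happen at most once per request type. Scaling $y$ by $1/m$ then gives a feasible dual, and since each move costs twice a departure weight, the algorithm's cost is at most $2m\cdot\OPT$. This charging argument is the step I expect to be delicate.
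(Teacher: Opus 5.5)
\textbf{There is a genuine gap in Step 4, not in the setup.} Your LP and algorithm are essentially the paper's. Indexing constraints by leaves is equivalent to $\sum_{t=i}^{j}y_t\le d_{S_{i,j}}$, where $d_S$ is the cheapest leaf whose label contains $S$. Since $y\ge 0$, the binding constraint of a leaf among intervals ending at $t$ is its current maximal run. So your counter rule is the paper's rule $y_t=\min_{i\le t}\{d_{S_{i,t}}-\sum_{\ell=i}^{t-1}y_\ell\}$.

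The problem is that you put the factor $m$ in the wrong place. Raising $y_t$ only until the first constraint becomes tight keeps $y$ \emph{exactly} feasible. Intervals ending before $t$ are untouched, and at the end of step $t-1$ every counter is at most its leaf's weight. So there is nothing to prove about near-feasibility and nothing to scale.

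What your scaling step actually needs is that the algorithm's payments are at most $\sum_t y_t$, i.e.\ that charging $w_j$ to ``the dual mass that made $j$ tight'' never charges the same mass twice. It does charge it repeatedly. That mass is $\sum_{\ell=r_j}^{t}y_\ell$, the increment $y_t$ at the move step is only a top-up, and the tight intervals of successive moves overlap. In fact the disjoint version is false in general. Together with exact feasibility it would make your algorithm $O(1)$-competitive, contradicting the $\Omega(m)$ lower bound already known for uniform metrics. Your proposed mechanism (moving to $i$ or to a leaf no heavier than $i$, at most once per type) does not address this.

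\textbf{The missing idea is a bound on the overlap:} each $y_\ell$ lies in at most $m$ of the tight intervals $[r_j,t]$ at which the algorithm enters a leaf $j$. Let $S_{a,b}$ denote the set of types requested in steps $a,\dots,b$. Suppose the algorithm enters leaves at steps $t_1<t_2$ via tight intervals $[i_1,t_1]\ni\ell$ and $[i_2,t_2]\ni\ell$.
\begin{itemize}
\item The next forced move after $t_1$ happens at some $t'\in(t_1,t_2]$.
\item Its request is not in the label of the leaf chosen at $t_1$, and that label contains $S_{i_1,t_1}\supseteq S_{\ell,t_1}$.
\item That request does lie in $S_{\ell,t_2}$, because $\ell\le t_1<t'\le t_2$.
\end{itemize}
Hence $S_{\ell,t_1}\subsetneq S_{\ell,t_2}$. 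So the entries charged to $y_\ell$ form a strictly increasing chain of nonempty subsets of $[m]$, and there are at most $m$ of them. This gives $\mathrm{ALG}\le m\sum_t y_t\le m\cdot\OPT$ in the one-sided cost model, up to your additive terms. The factor $2$ from the star reduction then gives $2m$. This is the paper's Lemma~\ref{lem:alg-ub}, and it is where the parameter $m$ genuinely enters the analysis.
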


The result brings more insight into the foundation of this problem and the power of randomization as well, because it matches the lower bound of $\Omega(m)$ for randomized algorithms since the uniform metric is a special class of the weighted star. \cref{star1} shows that deterministic algorithms are essentially as powerful as randomized algorithms in achieving the best possible competitive ratio of $O(m)$ on weighted stars, the same as the case for the parametrized metrical service system problem on uniform metrics.

\subparagraph{$2$-level HSTs.} Next, we address \cref{ques2} by settling the randomized competitive ratio on $2$-level HSTs. Let $c_{k,m}$ denote the best possible randomized competitive ratio of the MSS problem that can be achieved on all $k$-level HSTs with $m$ request types:

\begin{theorem}
\label{twohst}
    $c_{2,m}=\Theta( \binom{m}{\lceil m/2\rceil})=\Theta({2^{m}}/\sqrt m)$.
\end{theorem}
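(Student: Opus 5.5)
The statement has two halves, and I would prove them separately. The upper bound $O(\binom{m}{\lceil m/2\rceil})$ comes from a compression argument. The lower bound $\Omega(\binom{m}{\lceil m/2\rceil})$ comes from an improved adversarial construction.

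\emph{Upper bound.} Let $T$ be a $2$-level HST. Each leaf $x$ has a \emph{type}, namely the set $S_x\subseteq[m]$ of request types it satisfies. Within one first-level subtree (a uniform metric of small diameter) only the set of types $U_j=\bigcup_{x\in\text{subtree }j}S_x$ matters up to constant factors, together with which types are realized by which leaves. I would reduce to MSS on a weighted star in two steps.
\begin{enumerate}
\item \emph{Pruning subtrees.} If $U_j\subseteq U_{j'}$ and the cost of entering subtree $j'$ is at most that of $j$, then subtree $j$ can be discarded. After this, the subtrees that matter correspond to an antichain-like structure over subsets of $[m]$.
\item \emph{Charging inside subtrees.} Inside a subtree we run the $\Theta(m)$ uniform-metric algorithm. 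Since intra-subtree distances are geometrically smaller, the intra-subtree cost is charged against the cross-subtree cost.
\end{enumerate}
The key quantitative claim is that the effective number of distinct states at the top level, after removing dominated ones, is at most the width of the Boolean lattice, $\binom{m}{\lfloor m/2\rfloor}$, by Sperner's theorem. Randomized MTS on a weighted star with $N$ points is $O(\log N)$-competitive, but that bound is too weak here, since it gives only $O(m)$. So I instead expect the right argument to use a deterministic work-function or primal-dual scheme (as in \cref{star1}) whose ratio scales with the number of \emph{maximal} feasible type-sets at any moment.

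Concretely, the plan is a phase-based algorithm. Each phase ends only when every subtree has become infeasible at least once. A chain-decomposition or Sperner-type counting argument then shows that OPT pays $\Omega(1)$ per phase while the algorithm pays $O(\binom{m}{\lceil m/2\rceil})$ times the top-level weight. Weights are handled by doubling classes in the manner of the weighted-star primal-dual method.

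\emph{Lower bound.} I would take the top level to consist of $N=\binom{m}{\lceil m/2\rceil}$ subtrees indexed by the $\lceil m/2\rceil$-subsets $A\subseteq[m]$. Subtree $A$ contains many leaves, each leaf satisfying all types in $A$ plus extra structure, so that staying inside a subtree forces a uniform-metric-like game. Requesting type $i\notin A$ evicts subtree $A$, and requests are chosen so that for any two subtrees some request sequence kills one but not the other. This is a Sperner antichain property: incomparable sets are separated. The adversary then simulates the classic randomized lower bound for a uniform metric on $N$ points. At each step it requests types that eliminate the subtree where the algorithm has the most probability mass, while OPT sits in the last surviving subtree. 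This gives $\Omega(\log N)$, which is only $\Omega(m)$, so it is not enough.

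Getting $\Omega(N)$ rather than $\Omega(\log N)$ for a randomized algorithm is therefore the main obstacle. I would try a nested construction that exploits the second level. Each subtree has internal uniform structure, so the algorithm must pay inside the subtree repeatedly (cost $\varepsilon$ per internal eviction, accumulated to $\Omega(1)$) while OPT pays once. The adversary cycles through the $N$ subtrees in an order where each subtree, once killed, can be revived only by requests that kill others. The intended result is that the algorithm pays top-level cost $\approx N$ times per phase while OPT pays $O(1)$. Using intra-subtree uniform metrics to defeat randomization, by forcing an expected $\Omega(1)$ of internal cost per top-level revival, is the step I expect to require the most care.
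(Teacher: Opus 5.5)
Both halves of your proposal have genuine gaps, and both trace back to one miscount: how many top-level subtrees can matter.

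\textbf{Upper bound.} Sperner's theorem bounds the useful leaves \emph{inside a single} $1$-level subtree. If two leaves of the same subtree have labels $S_1\subseteq S_2$, the first can be discarded, leaving an antichain of at most $\binom{m}{\lceil m/2\rceil}$ leaves. It says nothing about the number of subtrees. A subtree is determined by a whole antichain, so there can be up to the Dedekind number $D(m)=2^{\Theta(\binom{m}{\lceil m/2\rceil})}$ of them, and the paper's lower-bound tree really does have $\binom{A_{1,m}}{\lceil A_{1,m}/2\rceil}$ pairwise non-dominated subtrees. Your pruning by unions $U_j$ is also unsound. A subtree with the single leaf $\{1,2\}$ and a subtree with leaves $\{1\},\{2\}$ have equal unions, yet only the first serves $1,2,1,2,\ldots$ at no cost. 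Your own remark is a warning sign here. $O(\log N)$ with $N=\binom{m}{\lceil m/2\rceil}$ would give an $O(m)$ upper bound. That is not ``too weak'' but too strong: it would contradict the lower bound you are trying to prove. With the correct count, no phase-based or primal-dual scheme is needed (yours is left unspecified anyway). Prune each subtree to an antichain and merge identical subtrees; then $O(D(m)\,A_{1,m})$ leaves remain. By Kleitman--Markowsky, $\log D(m)=O(A_{1,m})$. The $O(\log n)$-competitive randomized MTS algorithm on HSTs then gives $c_{2,m}=O(A_{1,m})$, which is the paper's argument.

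\textbf{Lower bound.} The same count shows your construction cannot succeed. With $N$ top-level subtrees, each pruning to at most $N$ useful leaves, there are only $O(N^2)$ relevant states. So the HST algorithm above is $O(m)$-competitive on any such instance, whatever internal structure or request schedule you use. A randomized $\Omega(A_{1,m})$ bound needs $2^{\Omega(A_{1,m})}$ top-level subtrees.

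The missing idea is to push the Sperner structure down to the leaves and run the uniform-metric permutation argument one level up:
\begin{itemize}
\item Label the leaves by the $A_{1,m}$ sets in $\binom{[m]}{\lceil m/2\rceil}$.
\item Let the $1$-level subtrees be all $\lceil A_{1,m}/2\rceil$-subsets of these labels, with distance $1$ inside a subtree and $C_1$ across subtrees.
\item Draw a uniformly random order $S_1,\ldots,S_{A_{1,m}}$ of the leaf labels, and in phase $t$ repeat a fixed ordering of $S_t$ $C_1$ times.
\end{itemize}
If the algorithm's current subtree has no leaf labeled $S_t$, it pays at least $C_1$ in phase $t$. For $t\le\lfloor A_{1,m}/2\rfloor$ this happens with probability at least $(\lfloor A_{1,m}/2\rfloor-t+1)/(A_{1,m}-t+1)$, and these sum to $\Omega(A_{1,m})$ by \cref{prop:alg}. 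OPT serves the early phases in the subtree containing $S_1,\ldots,S_{\lceil A_{1,m}/2\rceil}$ and the rest in the subtree containing $S_{\lfloor A_{1,m}/2\rfloor+1},\ldots,S_{A_{1,m}}$, paying at most $2C_1+A_{1,m}$. Taking $C_1=\omega(A_{1,m})$ and applying Yao's principle gives $\Omega(A_{1,m})$.
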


This theorem contains two separate claims: $c_{2,m}=O( \binom{m}{\lceil m/2 \rceil})$ (\cref{up2}) and $c_{2,m}=\Omega( \binom{m}{\lceil m/2\rceil}) $ (Theorem~\ref{lower2}). The lower bound is existential, in the sense that there exists a $2$-level HST and $m$ request types on it such that no randomized algorithm can achieve a better worst-case competitive ratio on all possible request sequences. \cref{twohst} improves upon previous upper and lower bounds. Previously, the lower bound of $\Omega( \binom{m}{\lceil m/2\rceil})$ is only known to hold for deterministic algorithms, and we prove that it also holds for randomized algorithms using a new hard instance.

\subparagraph{Higher level HSTs.} Finally, we turn to  \cref{ques3}. Previously, Bubeck and Rabani~\cite{DBLP:conf/approx/BubeckR20} obtained lower bounds on $c_{k,m}$ for $k\geq 3$ by iterating their lower bound construction for $k = 2$. As a consequence, their lower bound rules out $O(1)$-competitive algorithms for $m \geq 6$. By iterating our construction in the proof of \cref{lower2} to higher level HSTs, we also obtain improved lower bounds on $c_{k,m}$ for $k \geq 3$. The formal theorem statement (\cref{lowerboundG}) is somewhat technical so we defer it to \cref{sec:generalhst} and instead state the following corollary.

\begin{corollary}\label{imposs}
    For $m \geq 4$, there is no algorithm that achieves constant competitive ratio on HSTs.
\end{corollary}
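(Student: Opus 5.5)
We derive \cref{imposs} from the general lower bound for $k$-level HSTs (\cref{lowerboundG}), obtained by iterating the construction behind \cref{lower2}. The idea is to show that for fixed $m\ge 4$, $c_{k,m}\to\infty$ as $k\to\infty$. Since every $k$-level HST is an HST, no algorithm can then have constant competitive ratio on all HSTs.

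First, fix the base case $m=4$ and extract from the proof of \cref{lower2} a concrete 2-level gadget. The gadget should yield a lower bound $c_{2,4}\ge \alpha$ for some explicit constant $\alpha>1$. More importantly, it should have a self-similar structure: each leaf-cluster of the top level can be replaced by a scaled copy of a lower-level hard instance that uses the same $4$ request types, with the requests relabeled consistently.

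Second, set up the recursion that \cref{lowerboundG} presumably encodes:
\[
c_{k+1,m} \;\ge\; f_m\bigl(c_{k,m}\bigr).
\]
Here $f_m$ is the amplification one level of the construction gives when each subtree is itself a hard instance. The model case is a multiplicative or additive gain: an online algorithm forced to traverse the top-level gadget pays, inside each visited subtree, $c_{k,m}$ times the offline cost. Meanwhile the offline solution pays only about a constant times the top-level edge weight plus one subtree cost. With HST separation large enough, lower-level costs do not interfere, and we get something like $c_{k+1,m}\ge c_{k,m}+\delta_m$ or $c_{k+1,m}\ge \beta_m\,c_{k,m}$ with $\delta_m>0$ or $\beta_m>1$.

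Third, check that $f_m$ strictly increases its argument without a finite fixed point exactly when $m\ge 4$. This is the step where $4$, rather than the earlier $6$, enters, via the sizes $\binom{m}{\lceil m/2\rceil}$ in the gadget. Iterating then gives $c_{k,m}\to\infty$. Monotonicity in $m$ extends the result to all $m\ge 4$: one can duplicate a request type, or ignore extra types, so $c_{k,m}\ge c_{k,4}$.

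The main obstacle is the second step: making the recursion genuinely amplify rather than saturate. Whether $f_m$ has a fixed point depends on how tightly the offline cost can be bounded when the adversary re-uses request types across levels. I would first try a potential argument charging the online algorithm, at each top-level phase, at least the recursive cost in a constant fraction of sub-instances. If \cref{lowerboundG} already states an explicit bound such as $c_{k,m}\ge g_m(k)$ with $g_m$ unbounded for $m\ge4$, the corollary follows immediately by letting $k\to\infty$.
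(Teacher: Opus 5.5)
There is a genuine gap. Your reduction to showing $c_{k,m}\to\infty$ as $k\to\infty$ for fixed $m$ is the right frame, but the proposal never establishes it. The recursion $c_{k+1,m}\ge f_m(c_{k,m})$ is never defined, and the ``check'' that it has no fixed point for $m\ge 4$ is never carried out; you yourself flag its amplification as the open obstacle.

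More importantly, this is not the mechanism of the construction. Nothing is composed at the level of competitive ratios; what is iterated is the \emph{number of effective request types}. The labels in $\cP_\ell$ act as request types one level up: roughly, an $(\ell+1)$-level subtree labelled $P$ serves $\brho(p)$ cheaply exactly when $p\in P$. There are $|\cP_\ell|=A_{\ell+1,m}$ such labels, and $A_{\ell+1,m}=\binom{A_{\ell,m}}{\lceil A_{\ell,m}/2\rceil}$.

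\cref{lowerboundG} runs the uniform-metric argument of \cref{level1} directly at the top of $\cT_{k,m}$, on the $A_{k-1,m}$ labels of $\cP_{k-2}$. \cref{atleast} charges $C_{k-1}$ to every phase in which the algorithm's current $(k-1)$-level subtree lacks that phase's label. This gives $c_{k,m}\ge \kappa' A_{k-1,m}$ for a constant $\kappa'>0$ not depending on $k$; it comes from the universal constant of \cref{prop:alg} once the ratios $C_\ell/C_{\ell-1}$ are large enough.

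The corollary then needs only \cref{iterdiv}. For every integer $n\ge 4$ we have $\binom{n}{\lceil n/2\rceil}\ge\binom{n}{2}>n$, so for $m\ge 4$ the integer sequence $(A_{k,m})_{k\ge1}$ is strictly increasing and hence unbounded. By contrast, $\binom{n}{\lceil n/2\rceil}=n$ for $n\le 3$. This elementary fixed-point fact is exactly where the threshold $4$ enters, and it is the piece your proposal is missing.

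The amplification heuristic you sketch also cannot be right as stated. ``Each visited subtree costs the online algorithm $c_{k,m}$ times OPT, while OPT pays one top-level move plus one subtree'' makes no use of $m$. Yet a gain $c_{k+1,m}\ge c_{k,m}+\delta_m$ or $c_{k+1,m}\ge \beta_m c_{k,m}$ with $\delta_m>0$ or $\beta_m>1$ would make $c_{k,2}$ unbounded, contradicting the $6$-competitive algorithm of \cref{m2}. The obstruction is that each $K_i$ is a single global set constraining all subtrees simultaneously. So the adversary cannot run independent hard sub-instances in different subtrees, as in MTS composition arguments, and any ratio-level recursion would need precisely the $m$-dependent combinatorics above.

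Your closing conditional sentence describes the correct route, but the unboundedness of $g_m(k)\propto A_{k-1,m}$ is the actual content to be proved. The detour through an explicit bound $c_{2,4}\ge\alpha$ and the monotonicity-in-$m$ step are harmless but unnecessary, since \cref{lowerboundG} and \cref{iterdiv} apply to every $m\ge 4$ directly.
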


Combined with \cref{m2}, we conclude that the only open case for Question~\ref{ques3} on HSTs is $m=3$.
\subparagraph{General metrics.}

We resolve \cref{ques3} on general metrics by the following two complement results.

We first show that a constant competitive ratio cannot be achieved on general metrics when $m$ is at least $3$ by the following existential lower bound of $\Omega(\log n)$:

\begin{theoremapprep}\label{m3}
    For $m \geq 3$, there exist $3n$-point metrics on which every randomized algorithm has competitive ratio of $\Omega(\log n)$.        
\end{theoremapprep}

\begin{proof}
Our proof is an adaptation of the proof that the randomized competitive ratio for parametrized MTS with $m=2$ on an $n$-point metric space is $\Omega(\log n)$ (Lemma 10 in~\cite{DBLP:conf/approx/BubeckR20}).

\subparagraph{Metrical structure.} Suppose we have $n$ points $\{v_{0},\ldots,v_{n-1}\}$ where $n=2^q$ for some positive integer $q$ and $d(v_{i}, v_{j})=1$ for every $i\neq j$. For each $j\in\{0,1,\ldots,n-1\}$, define two new points $v^0_j$ and $v^1_j$ such that $d(v_j,v^0_j)=C^{j-n}/2$ and $d(v_j,v^{1}_j)=C^{-j-1}/2$, where $C$ is a value to be determined later. The distance $d$ on the whole space $\cup_{j=1}^{n-1}\{v_j,v^0_j, v^1_j\mid 0\leq j\leq n-1\}$ is defined as the metric closure of the partial distance that we have defined. There are three types of requests for parametrized MSS, corresponding to three subsets of the metric: $K_{0}=\{v^0_j\mid 0\leq j\leq n-1\}, K_{1}=\{v^1_j\mid 0\leq j\leq n-1\}, K_{2}=\{v_j\mid 0\leq j\leq n-1\}$. For every $0\leq j\leq n-1$, we say that $\{v_j,v^0_j,v^1_j\}$ forms a cluster, then the distance between every two points from different clusters is at least $1$. Assume that the initial state of the MSS server is in $K_{2}$.

\subparagraph{Randomized request sequence.} The request sequence is generated in rounds, and it is sufficient to describe the instance for a single round as this process can always be repeated. Each round consists of $q$ phases numbered with $0,\ldots,q-1$ respectively. For every $0\leq i\leq q-1$, $h_{i}\in\{0,1\}$ is generated uniformly at random at the beginning of phase $i$. Given $h_0,\ldots,h_i$ that have been generated so far, let $a_i=(h_0\ldots h_i1\ldots1)_2$ be the maximum integer from $\{0,\ldots,n-1\}$ whose binary representation has $h_0\ldots h_i$ as a prefix if $h_i=0$, otherwise let $a_i=(h_0\ldots h_i0\ldots0)_2$ be the minimum integer from $\{0,\ldots,n-1\}$ whose binary representation has $h_0\ldots h_i$ as a prefix if $h_i=1$. During phase $i$, if $h_{i}=0$, then the sequence $K_0K_2$ is repeated for $C^{n-a_i-1}$ times; if $h_{i}=1$, then the sequence $K_{1}K_{2}$ is repeated for $C^{a_i}$ times.

\subparagraph{Optimal cost.} Let $h$ be the integer whose binary representation is $(h_{0}\ldots h_{q-1})_2$. The offline solution can move to $v_{h}$ at the beginning of the round and travel back and forth between $v_{h}$ and $v^{h_i}_h$ during each phase $i$. It pays cost at most $1$ of moving to $v_{h}$ from another point in $V$. The cost of moving back and forth between $v_{a_{i}}$ and $v^{h_i}_{a_{i}}$ is exactly $1/C$ for both $h_{i}=0$ and $h_{i}=1$. We observe that $h\leq a_i$ if $h_i=0$ and $h\geq a_i$ if $h_i=1$. As a result, we can verify that the cost of moving back and forth between $v_{h}$ and $v^{h_i}_h$ during phase $i$ is at most $1/C$ for every $0\leq i\leq q-1$, since $d(v_{j}, v^{0}_{j})$ is increasing in $j$ and $d(v_{j}, v^{1}_{j})$ is decreasing in $j$. In summary, the offline optimal cost is no more than $1+q/C$ during the whole round.

\subparagraph{Expected cost of every deterministic algorithm.} Now we consider the expected cost of every deterministic algorithm. For every $i$, define intervals of integers $J^{0}_{i}=[0,(h_0\ldots h_{i-1}01\ldots1)_2]$ and $J^{1}_{i}=[(h_0\ldots h_{i-1}10\ldots0)_2,n-1]$, then $J^{0}_{i}\cup J^1_i=\{0,\ldots,n-1\}$ and $J^0_i\cap J^1_i =\varnothing$ for every $i$. Suppose that the algorithm is at state $v_j$ at the beginning of phase $i$. Since $\{J^{0}_i, J^{1}_i\}$ is a partition of $\{0,\ldots, n-1\}$, the event that $j\in J^{1-h_i}_i$ happens with probability $1/2$ when $h_{i}\in\{0,1\}$ is generated uniformly at random. We prove that the cost during phase $i$ is at least $1$ as long as $j\in J^{1-h_i}_i$. If it moves to a point from another cluster during phase $i$, then it incurs cost of at least $1$. Otherwise, it moves back and forth between $v_j$ and $v^i_j$ during the entire phase $i$. Otherwise, if $h_i=0$ and it moves back and forth between $v_j^0$ and $v_j$, the cost incurred is $C^{j-a_i-1}\cdot$. If $j\in J^{1}_i$, then $j\geq a_i+1$ and the cost is at least $1$. Similarly, if $h_i=1$ and $j\in J^{0}_i$, then $j\leq a_i-1$ and the cost of moving back and forth between $v_j$ and $v^{1}_j$ during phase $i$ is at least $1$ as well. Therefore, the expected cost of every deterministic algorithm is at least $1/2$ during each phase $i$ and $q/2$ during the whole round. According to Yao's minimax principle, the worst-case competitive ratio of every randomized algorithm is at least $\frac{q}{2(1+q/C)}$, which is equal to $q/4$ if we set $C=q$. Based on the setting that $n=2^q$, the competitive ratio is $\Omega(\log n)$ and hence cannot be a constant for every such metric.
\end{proof}

\noindent The proof of \cref{m3} can be found in the Appendix.

On the other hand, a constant competitive ratio can be achieved on general metrics when $m=2$.

\begin{observation}\label{m2}
    When $m = 2$, there is a $6$-competitive deterministic algorithm on general metrics.
\end{observation}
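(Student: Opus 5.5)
We prove \cref{m2} with a simple deterministic ``move to the nearest point of the requested set'' strategy, analyzed by a charging argument. Let the two request types be the sets $A$ and $B$. If $A\cap B\neq\varnothing$, move once to a point of $A\cap B$ closest to the start and stay there. If the first request is at distance $\ell$, this costs at most twice what OPT pays on its first move plus the distance from there to the intersection. Since OPT is cheapest with respect to the intersection as well, the cost is at most a constant times OPT.

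So assume $A\cap B=\varnothing$. Then the request sequence, after collapsing repeats, alternates $A,B,A,B,\dots$, and every point serves only one type. Hence any solution switches at every alternation, and each switch costs at least $\delta:=d(A,B)=\min_{a\in A,b\in B}d(a,b)$. Fix a closest pair $(a^*,b^*)$.

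The algorithm has two stages.
\begin{enumerate}
\item \emph{Greedy stage.} Serve each request by moving to the nearest point of the requested set, keeping a running tally of cost.
\item \emph{Commit stage.} As soon as the tally reaches a threshold tied to the cost of first reaching $\{a^*,b^*\}$, travel to the pair and oscillate between $a^*$ and $b^*$ forever.
\end{enumerate}
This is a doubling/rent-or-buy style scheme.

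To make the bound clean, I would instead commit immediately. On the first request, say $A$ at start $s$, move to $a^*$ at cost $d(s,a^*)\le d(s,A)+\operatorname{diam}$-type terms. These terms are not bounded by OPT, so the first move is the main obstacle. The fix is to move to the nearest point $a_0\in A$, then to the nearest $b_0\in B$ of $a_0$, and generally to the nearest point of the other set. The key is to ensure the oscillation cost converges to $\delta$ while the extra travel is geometrically charged.

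Concretely, I would use a doubling scheme on oscillation pairs. Maintain a current pair $(a,b)$ with $d(a,b)=D$ and oscillate between them. Each time the cumulative oscillation cost on the current pair exceeds $D$, relocate to the pair $(a',b')$ of distance at most $D/2$ that is closest to the current location; if no such pair exists, keep oscillating. Let $k$ be the number of requests so far. OPT pays at least $\max(d(s,\text{reach}),\,(k-1)\delta)$, and the constant $6$ should come from three terms: initial reach (factor $1$ plus $1$ for the detour), the geometric relocation sum (factor $2$), and the oscillation within each level (factor $2$).

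The main obstacle is bounding relocation distances. Moving from the current pair at scale $D$ to a pair at scale $D/2$ could be far. I would charge this relocation to OPT's cost, using the fact that OPT either sits at a pair at scale at most $D/2$ or pays at least $D/2$ per switch, and in the latter case our algorithm is already $2$-competitive on that segment. The remaining case, where OPT uses a pair far away at a smaller scale, requires the triangle inequality: the relocation distance is at most OPT's initial travel plus the current distance from $s$, which is bounded by the accumulated cost. Since the relocations occur at geometrically decreasing scales, summing them gives at most twice the total. I expect this relocation-charging step to be the delicate part, and I would first try a potential function $\Phi=c\cdot d(\text{alg},\text{opt})$ to make the charging go through.
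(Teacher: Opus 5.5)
Your proposal does not prove the statement: both of your cases contain a step that fails, and the constant $6$ is asserted (``should come from three terms'') rather than derived.

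\textbf{Intersecting case.} Moving at once to the point of $A\cap B$ nearest the start is not competitive. The adversary may stop early, and $\OPT$ is never obliged to reach $A\cap B$, so your claim that $\OPT$ ``is cheapest with respect to the intersection as well'' is false. The paper's ski-rental example is exactly this situation. Take $K_1=\{s_1,s_3\}$, $K_2=\{s_2,s_3\}$, $d(s_0,s_1)=1$ and $d(s_0,s_3)$ arbitrarily large. On the one-request input $K_1$ we have $\OPT=1$, while you pay $d(s_0,s_3)$. This is the same kind of diameter-type term that you yourself refuse to pay in the disjoint case, and rightly so. As the paper notes, with an additive constant the problem is trivial (oscillate at a closest pair), so the content of the statement is the strict bound.

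\textbf{Disjoint case.} Your relocation trigger (oscillation cost on the current pair exceeding its length $D$) ignores how far away the new pair is. Let $s,a_0,b_0$ be pairwise at distance $1$, and add $a'\in A$, $b'\in B$ with $d(a',b')=1/2$, both at distance $L$ from $s,a_0,b_0$. Your algorithm takes the pair $(a_0,b_0)$. By the third request its oscillation cost exceeds $1$, so it relocates to $(a',b')$ at cost about $L$, whereas $\OPT\le 3$ on the input $A,B,A$. Any fix must compare accumulated cost with the relocation distance. With many candidate pairs at many scales and distances, it is not clear how a ``closest pair at half the scale'' rule could be charged. The step you flag as delicate is exactly where the proof is missing.

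\textbf{The missing idea.} It is already implicit in your remark that, after collapsing repeats, the input alternates $A,B,A,\dots$. The whole input is then known in advance up to its length. So $\OPT_t$, the optimum on the first $t$ requests, is a nondecreasing quantity the algorithm can compute at every step, and no geometric reasoning about pairs or scales is needed. The paper's argument runs as follows.
\begin{itemize}
\item Normalize $d(s_0,K_1)=1$ and maintain $j$ with $2^j\le\OPT_t<2^{j+1}$.
\item Move to the step-$t$ state of $\widehat{\OPT}_j$, an optimal solution for the longest prefix whose optimum is below $2^{j+1}$; this prefix has length at least $t$.
\item When $j$ changes, split the transition by the triangle inequality through $s_0$.
\item The leg out of $s_0$, together with the moves inside the phase, is a shortcut of the trajectory of $\widehat{\OPT}_j$. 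Hence each phase has internal cost less than $2^{j+1}$.
\item The leg back to $s_0$ at the end of a phase is likewise at most $2^{j+1}$.
\item The guesses strictly increase, so the internal costs sum to less than $2^{j_k+2}$ and the external costs to at most $2^{j_k+1}$, while $\OPT\ge 2^{j_k}$.
\end{itemize}
This gives the strict ratio $6$ with no case distinction on whether $A\cap B$ is empty.
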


This result follows from the observation that when $m=2$, we can assume that the input sequence is strictly alternating between two request types, because MSS can be regarded as chasing feasible sets and hence repeating the same request does not incur additional cost. Therefore, the only uncertainty of the input is its total length, and the optimal cost can be expressed as a non-decreasing function of the input length. This observation allows us to apply the Guess-and-Double Method to design a constant-competitive algorithm for general metrics when $m=2$. The proof details of \cref{m2} can be found in the Appendix.

Therefore, the open problems for \cref{ques3} now lie on some special classes of metrics, including HSTs that we study in this paper.

\subsection{Our Techniques}

\subparagraph{Weighted stars.}
In MSS, we can think of a request as the set of states that the server must move to. 
Bubeck and Rabani~\cite{DBLP:conf/approx/BubeckR20} use an algorithm that works in phases similar to the classical marking algorithm to provide a $O(m)$-deterministic algorithm for parametrized MSS on uniform metrics: in each phase, it moves to an arbitrary state in the intersection of the requests in the phase; if the intersection is empty, the algorithm starts a new phase and moves to an arbitrary requested state. It is clear that any solution needs to move at least once during each phase, and they show that the algorithm moves at most $m$ times during each phase to prove the upper bound. This method is heavily based on the fact that the metric space is uniform and hence we can simply compare the number of times the algorithm's server moves with that of the optimal server. %
It is unclear whether their method can be adapted to solve the problem on weighted stars.

We use a significantly different approach based on an interval covering formulation of the problem and apply the primal-dual method.  %
Each interval corresponds to a consecutive subsequence of requests, and the cost to buy this interval is the minimum possible cost of moving to a state that can satisfy all request types in this subsequence. %
Our analysis exploits the fact that there are only $m$ different request types to show that the algorithmic cost is at most $m$ times the feasible dual solution that we maintain and thus obtain the desired competitive ratio. Our formulation is similar to that in \cite{DBLP:journals/mor/BuchbinderN09} but we exploit the property of MSS requests that allows us to only consider the unique state with minimum cost to satisfy every subsequence, while for MTS we have to consider various costs of serving the same subsequence of requests in $n$ different states, thus leading to the appearance of $n$ in the competitive ratio instead of purely in terms of $m$. We hope that similar methods under the linear programming and primal-dual framework can be adapted to solve the more general parametrized MTS problem.

\subparagraph{Lower bounds on HSTs.}
We use the same high-level approach used by Bubeck and Rabani~\cite{DBLP:conf/approx/BubeckR20} in which a construction for the uniform metric\footnote{A $1$-level HST is equivalent to a uniform metric.} is lifted to HSTs of higher levels. The main conceptual difference is in our construction for the uniform metric. In Bubeck and Rabani's construction, there are $2^{m/2}$ points, each labeled by a node of the Boolean hypercube $\{0,1\}^{m/2}$ (we assume $m$ is even here for simplicity). Each of the $m$ requests correspond to a an axis-aligned halfspace of the Boolean hypercube. Thus, if we construct a sequence of $m/2$ requests by picking a random halfspace for each dimension with equal probability, then for each request, the algorithm has to move with probability $1/2$. On the other hand, the adversary can move to the point in the intersection of the halfspaces. 
This construction can be lifted to HSTs of higher levels by partitioning the set of all possible labels of the lower level case into pairs of antipodes and treating them as complementary requests.

Our key idea is to use a construction for the uniform metric that has more points. In particular, we use $\binom{m}{m/2}$ points, each corresponding to an $\frac{m}{2}$-sized subset of $[m]$. As a result, when we apply the above lifting process, for every fixed level $k$ of the HST, there are more subtrees in our construction when $m$ is sufficiently large. We need to be more careful in our analysis as it is no longer true that each request causes the algorithm to move with constant probability. Instead, we prove that its average is still lower bounded by a positive constant.

%% file: preliminaries.tex
\section{Preliminaries}

Given a metric space $\mathcal{M}=(X,d)$ where $X$ is the set of points and $d: X\times X\rightarrow \mathbb{R}_{\geq 0}$ is the distance function, a parametrized metrical service system instance consists of an initial location $s_0 \in X$ of the server and a sequence of requests $\{\rho_{t}\}_{t\geq 1}$, where each $\rho_{t}$ is from a fixed collection $\{K_{1},\ldots, K_{m}\}$ of $m$ non-empty subsets of $X$. Upon the arrival of a request $\rho_{t}=K_{i}$ for some $1\leq i\leq m$ at every step $t\geq 1$, one must move the server to some state $s_{t}\in K_{i}$ to maintain a feasible solution if its previous state $s_{t-1}$ does not satisfy this, and the decision must be made without knowing future requests after $\rho_{t}$. As a result, MSS can be regarded as the online set chasing problem, illustrating that we always need to move the server to a state in the current requested set, and the parametrized setting restricts that there are only $m$ different request types. The cost of a solution $\{s_{t}\}_{t\geq 1}$is defined to be
\begin{equation*}
    \sum_{t\geq 1} d(s_{t-1},s_{t}),
\end{equation*}
where $d(s_{t-1},s_{t})$ denotes the transition cost to move from $s_{t-1}$ to $s_{t}$ upon the arrival of $\rho_{t}$. This can be interpreted as the total length of the trajectory of the server.

A randomized algorithm $\mathcal{A}$ has competitive ratio $c$ if there exists some constant $\beta$ that is independent of the input sequence (it might depend on the metric space $\mathcal{M}$) such that for every input sequence $\boldsymbol{\rho}$:
\begin{equation*}
    \mathbb{E}[\cost_{\mathcal{A}}(\boldsymbol{\rho})]\leq c\cdot \cost^{*}(\boldsymbol{\rho})+\beta,
\end{equation*}
where $\mathbb{E}[\cost_{\mathcal{A}}(\boldsymbol{\rho})]$ is the expected cost of $\mathcal{A}$ on input $\boldsymbol{\rho}$ and $\cost^{*}(\boldsymbol{\rho})$ is the minimum cost among all feasible solutions of $\boldsymbol{\rho}$. If $\beta=0$, then we say that $c$ is a strict competitive ratio of $\mathcal{A}$. When we talk about the competitive ratio of a problem class (e.g. parametrized MSS with $m$ requests on a class of metrics), we are referring to the best possible competitive ratio that can be achieved by some algorithm $\mathcal{A}$ on all possible inputs from this problem class.

%% file: weighted-star.tex
\section{Weighted Stars}\label{Section3}

On every weighted star metric, there is one root node to which every leaf node (state) is connected with a single edge of non-negative weight. Let $d_{s}$ denote the weight of the edge between state $s$ and the root, then the distance between two different states $s_{1}$ and $s_{2}$ is $d(s_{1},s_{2})=d_{s_{1}}+d_{s_{2}}$, where $d_{s_{1}}$ can be charged as the cost of leaving state $s_{1}$ and $d_{s_{2}}$ can be charged as that of entering state $s_{2}$ if the server moves from $s_{1}$ to $s_{2}$.

A common practice on weighted stars is that instead of charging the cost of both leaving $s_{1}$ and entering $s_{2}$, we only need to consider the cost of entering the new state at every step, i.e. we can regard the cost of moving from $s_{1}$ to a different state $s_{2}$ as only $d_{s_{2}}$, for both the algorithmic cost and the optimal solution. Over the entire process, the number of times that the server leaves any state is the same as the number of times that the server enters the same state, except for the initial and the last state. Now we formally show that this setting will only lose the competitive ratio of the original problem setting by a factor of at most $2$.

\begin{lemma}\label{factor2}
    Every strictly $c$-competitive algorithm for the problem setting under the new transition costs has strict competitive ratio at most $2c$ in the original setting.
\end{lemma}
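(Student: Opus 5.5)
Fix an input sequence and any feasible solution (trajectory) $s_0,s_1,\dots,s_T$ in the star. I will compare its \emph{original cost} $\sum_t d(s_{t-1},s_t)=\sum_{t:\,s_t\neq s_{t-1}}(d_{s_{t-1}}+d_{s_t})$ with its \emph{new cost} $\sum_{t:\,s_t\neq s_{t-1}} d_{s_t}$, which counts only entries. The plan is to sandwich the original cost between the new cost and twice the new cost, up to an additive term. Then I apply this sandwich once to the algorithm's trajectory and once to an optimal trajectory.

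\textbf{Lower bound.} Since every term $d_{s_{t-1}}$ is nonnegative, dropping the leaving costs shows that the original cost of any trajectory is at least its new cost. Applied to an optimal original solution, this gives $\OPT^{\mathrm{new}}\le \OPT^{\mathrm{orig}}$, because the new optimum is at most the new cost of the original optimum.

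\textbf{Upper bound.} I pair the leaving charges with entering charges. Every time the server leaves a state $s\neq s_0$, it must have entered $s$ at some earlier step. Hence the leaving charges at states other than $s_0$ are injectively matched to earlier entering charges of equal weight. The only unmatched leaving charge is the first departure from $s_0$, which costs at most $d_{s_0}$. This yields
\[
\mathrm{cost}^{\mathrm{orig}}\le 2\,\mathrm{cost}^{\mathrm{new}}+d_{s_0}.
\]

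\textbf{Combining.} The algorithm is strictly $c$-competitive in the new setting. Therefore
\[
\mathrm{ALG}^{\mathrm{orig}}\le 2\,\mathrm{ALG}^{\mathrm{new}}+d_{s_0}\le 2c\,\OPT^{\mathrm{new}}+d_{s_0}\le 2c\,\OPT^{\mathrm{orig}}+d_{s_0}.
\]

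\textbf{Main obstacle.} The additive $d_{s_0}$ must be removed to get a \emph{strict} ratio. I see two options.
\begin{itemize}
\item If $\OPT^{\mathrm{orig}}=0$, the initial state lies in every requested set. I would argue that a strictly competitive algorithm then has new cost $0$, which rules out any entering move. In particular the algorithm never leaves $s_0$, so its original cost is $0$ as well.
\item If $\OPT^{\mathrm{orig}}>0$, I would like $d_{s_0}\le \OPT^{\mathrm{orig}}$. This holds when the optimum must leave $s_0$, since its original cost then includes $d_{s_0}$.
\end{itemize}

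The gap is the case where the algorithm leaves $s_0$ but the optimum stays at $s_0$ while paying nothing. This forces $\OPT^{\mathrm{orig}}=0$, so it falls under the first option. The remaining work is to make this case analysis airtight.

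If that fails, the fallback is a direct accounting. When the algorithm's final state differs from $s_0$, the entering and leaving multisets differ by $d_{s_0}$ against the last entry. I would then bound $d_{s_0}$ by $\OPT^{\mathrm{orig}}$ whenever the optimum ever leaves $s_0$. If the optimum never leaves $s_0$, every request contains $s_0$ and strictness forces the algorithm to have new cost $0$.
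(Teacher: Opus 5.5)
\textbf{Gap 1: the way you absorb $d_{s_0}$ only gives $2c+1$.} Your upper-bound half is the paper's identity $\mathrm{cost}^{\mathrm{orig}}=d_{s_0}+2\sum_{j<n}d_{s_j}+d_{s_n}$. The loss is in the lower bound $\OPT^{\mathrm{new}}\le\OPT^{\mathrm{orig}}$, which discards the comparison trajectory's departure charge $d_{s_0}$. If you then bound the additive term separately by $d_{s_0}\le\OPT^{\mathrm{orig}}$, your chain gives $\mathrm{ALG}^{\mathrm{orig}}\le(2c+1)\,\OPT^{\mathrm{orig}}$. That is not the claimed $2c$, and Theorem~\ref{star1}'s constant $2m$ depends on $2c$.

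The fix, which is the paper's final inequality, is to use that departure charge inside the same comparison. Let an original-setting optimum enter $s'_1,\dots,s'_r$ with $r\ge 1$. Its original cost is $d_{s_0}+2\sum_{j<r}d_{s'_j}+d_{s'_r}\ge d_{s_0}+\sum_{j\le r}d_{s'_j}$, so
\[
\mathrm{ALG}^{\mathrm{orig}}\le d_{s_0}+2\,\mathrm{ALG}^{\mathrm{new}}\le d_{s_0}+2c\sum_{j\le r}d_{s'_j}\le 2c\Bigl(d_{s_0}+\sum_{j\le r}d_{s'_j}\Bigr)\le 2c\,\OPT^{\mathrm{orig}},
\]
using $c\ge 1$. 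Equivalently, use $\OPT^{\mathrm{new}}\le\OPT^{\mathrm{orig}}-d_{s_0}$ instead of $\OPT^{\mathrm{new}}\le\OPT^{\mathrm{orig}}$.

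\textbf{Gap 2: the degenerate case $\OPT^{\mathrm{orig}}=0<d_{s_0}$ is argued incorrectly.} New cost $0$ does not rule out entering moves. It only rules out entering states of positive weight. Suppose some state $z$ with $d_z=0$ lies in every $K_i$. The algorithm that jumps to $z$ on the first request and stays there has new cost $0$ on every input, so it is strictly $c$-competitive in the new setting for every $c$. Yet on an input whose requests all contain $s_0$, it pays $d_{s_0}>0$ in the original setting while $\OPT^{\mathrm{orig}}=0$.

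So this case cannot be made airtight from strict competitiveness alone, and your fallback has the same flaw. You need an extra hypothesis, e.g.\ that the algorithm is lazy (never moves while its current state is feasible); under laziness the algorithm indeed never leaves $s_0$ here. The paper's proof carries the same tacit assumption: its last inequality requires the comparison solution to make at least one move.
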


\begin{proof}
 Suppose that the initial state of the server is $s_{0}$, the algorithm moves to states $s_{1},\ldots,s_{n}$ by the chronological order during its execution, and some fixed solution moves to states $s_{1}',\ldots,s'_{m}$ one by one. By the fact that the algorithm achieves strict competitive ratio $c$ under the new setting, it holds that $\sum_{j=1}^{n}d_{s_{j}}\leq c\sum_{j=1}^{m}d_{s'_{j}}$. Then we have $d_{s_{0}}+2\sum_{j=1}^{n-1}d_{s_{j}}+d_{s_{n}}\leq d_{s_{0}}+2\sum_{j=1}^{n}d_{s_{j}}\leq d_{s_{0}}+2c\sum_{j=1}^{m}d_{s'_{j}}\leq 2c(d_{s_{0}}+2\sum_{j=1}^{m-1}d_{s'_{j}}+d_{s'_{m}})$, which suggests that the same algorithm achieves strict competitive ratio $2c$ in the original setting by the arbitrariness of the solution that we are comparing the algorithmic cost with.   
\end{proof}

\subparagraph{State labels.} In this setting, we do not need to consider the initial state. Moreover, each state can be labeled with a subset of $[m]=\{1,2,\ldots,m\}$, which implies the set of request types that are feasible at this position. To formalize, given the label $S\subseteq[m]$ of some state $s$ and $\{K_{1},\ldots,K_{m}\}$ of the parametrized MSS setting of $m$ possible request types as subsets of the metric space, $S=\{i\mid s\in K_{i}\}$. We may use the notations of a state $s$ and its label $S$ interchangeably, as well as a set of natural numbers and the set of request types $K_{i}$'s whose indices are in this set. 

\subparagraph{Assumptions.} It suffices to consider $2^{m}-1$ states that correspond to $2^{m}-1$ different non-empty subsets of $[m]$, since if there are at least two states with exactly the same label, then any reasonable algorithm should only visit the one whose edge to the node has the minimum weight. For every non-empty $S \subseteq [m]=\{1,2,\ldots,m\}$, we use $d_{S}$ to denote the minimum edge weight between the root and any node whose label is exactly equal to $S$. If no such node exists, just set $d_{S}=\infty$.

We now make another assumption, which is without loss of generality, yet important for the validity of the algorithm that we are going to propose.

\begin{assumption}\label{inputassumption}
    $d_{S}\leq d_{S'}$ for every $S\subseteq S'$.
\end{assumption}
Otherwise, if there exist two different non-empty subsets $S$ and $S'$ of $[m]$ such that $S\subseteq S'$ yet $d_{S}>d_{S'}$, then every visit to state $S$ can be replaced by a visit to state $S'$, maintaining the feasibility because $S'\supseteq S$ while only reducing the cost. In this case, we can set $d_{S}$ to $d_{S'}$ instead, which does not increase the optimal cost since we are only reducing edge lengths, while the algorithmic solution can replace each visit to $S$ with one visit to $S'$ in the original setting, which means that every algorithm can be transformed into one whose cost in the original setting is no more than that under this assumption.

We make one final assumption. Against any deterministic algorithm, one can assume that the adversary always forces the algorithm to move to a different state at every step as long as there is some $K_{i}$ such that the current state of the algorithm is not in $K_{i}$, and the input ends when the current state is in the intersection of all $K_{i}$'s for $1\leq i\leq m$, if such a scenario exists. A straightforward consequence of this assumption on our problem setting is that any two consecutive requests from the input sequence are of different types, but we will eventually use a stronger argument that also exploits the algorithmic decisions seen by the adversary in the deterministic model to bound the desired competitive ratio.

\subparagraph{Integer programming formulation.} Under the input sequence $\{\rho_{t}\}_{t\geq 1}$, we use $S_{i,j}$ to denote the set of request types that appear in the subsequence $\rho_{i},\ldots, \rho_{j}$, i.e. $S_{i,j}=\{\ell\mid K_{\ell}=\rho_{t} \text{ for some }i\leq t\leq j\}$, which also corresponds to a unique node of the weighted star in our notation. Then we can write down an integer programming formulation of the problem whose minimum objective function value is a lower bound of the optimal solution of the original problem. We use $T$ to express the number of steps of the whole input sequence.

\begin{equation}\label{linear}
\begin{array}{ll@{}ll}
\text{minimize}  & \displaystyle\sum\limits_{1\leq i\leq j\leq T}d_{S_{i,j}}x_{i,j} &\\
\text{subject to}& \displaystyle\sum\limits_{i\leq t\leq j}x_{i,j}\geq 1,  &\forall 1\leq t \leq T\\
                 &          x_{i,j} \in \{0,1\}, &\forall 1\leq i\leq j\leq T
\end{array}
\end{equation}

\begin{lemma}
    The minimum value of (\ref{linear}) is no more than the optimal value of the parametrized MSS problem on the weighted star in our setting.
\end{lemma}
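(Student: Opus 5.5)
We prove the lemma by turning an arbitrary feasible solution of the MSS instance (in the modified setting, where only entering a new state is charged) into a feasible $0/1$ solution of (\ref{linear}) with objective value at most its cost. Applying this to an optimal solution gives the claim.

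Fix a feasible solution $\{s_t\}_{1\le t\le T}$. Group the steps into maximal blocks of consecutive steps during which the server stays at the same state. Write the blocks as $[i_1,j_1],[i_2,j_2],\ldots,[i_r,j_r]$, with $i_1=1$, $j_r=T$ and $i_{k+1}=j_k+1$. On block $k$ the server sits at a single state $s^{(k)}$ for all steps $t\in[i_k,j_k]$. Feasibility gives $s^{(k)}\in\rho_t$ for each such $t$. Hence the label of $s^{(k)}$ contains every request type appearing in $\rho_{i_k},\ldots,\rho_{j_k}$, that is, $S_{i_k,j_k}\subseteq \mathrm{label}(s^{(k)})$.

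Define the integer solution by $x_{i_k,j_k}=1$ for $k=1,\ldots,r$ and $x_{i,j}=0$ for all other pairs. The blocks partition $\{1,\ldots,T\}$, so every $t$ lies in exactly one block and the covering constraint holds with equality.

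For the cost, the definition of $d_S$ as a minimum and \cref{inputassumption} give $d_{S_{i_k,j_k}}\le d_{\mathrm{label}(s^{(k)})}\le d_{s^{(k)}}$. The solution enters $s^{(k)}$ at step $i_k$ for each $k\ge 2$, paying $d_{s^{(k)}}$ under the new transition costs.

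For $k=1$ we split into two cases.
\begin{itemize}
\item If the server moves at step $1$, it pays $d_{s^{(1)}}$.
\item If it does not move at step $1$, then $s^{(1)}=s_0$. Here we rely on the stated convention that the initial state is not considered in this setting. Concretely, the first visited state is charged as an entry, or equivalently an additive constant is absorbed.
\end{itemize}
Summing over blocks, the objective of (\ref{linear}) is at most $\sum_k d_{s^{(k)}}$, which is at most the cost of the solution.

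The main subtlety is this first block. The rest of the argument is monotonicity via \cref{inputassumption} plus the partition argument. For the first block I would invoke the paper's convention that the initial state is ignored, i.e.\ the server is treated as starting outside all states, so that the first placement is also charged $d_{s_1}$. Under that convention the inequality is exact.
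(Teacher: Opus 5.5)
Your proof is correct and is essentially the paper's argument. You decompose the solution into maximal stays, set $x_{i,j}=1$ on each stay, and use Assumption~\ref{inputassumption} (since $S_{i,j}$ is contained in the label of the occupied state) to bound each $d_{S_{i,j}}$ by the corresponding entry cost; the covering constraints hold because the stays partition $\{1,\ldots,T\}$.

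You are more careful than the paper about the first block, which it glosses over. The cleanest way to settle it is the adversary assumption of this section: there $s_0\notin\rho_1$, so every feasible solution must move at step $1$ and its first stay is charged as an entry anyway. This is better than appealing to an absorbed additive constant, which would not fit the strict-ratio chain through Lemma~\ref{factor2}.
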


\begin{proof}
    Given any solution of the parametrized MSS problem on the weighted star, for every state that the server stays at from step $i$ to $j$, we can just assume that it stays at state $S_{i,j}$ due to Assumption~\ref{inputassumption}, otherwise the cost can only be higher. We set $x_{i,j}=1$ for every such state that the server visits, then this is a feasible solution of (\ref{linear}), and its value is the same as the given solution of the parametrized MSS on the weighted star instance.
\end{proof}

The next step is to write down the dual program of the linear relaxation of (\ref{linear}), whose maximum value is no more than the minimum value of its primal linear program and thus the minimum value of the original program (\ref{linear}) as well:

\begin{equation}\label{dual}
\begin{array}{ll@{}ll}
\text{maximize}  & \displaystyle\sum\limits_{1\leq t\leq T}y_{t} &\\
\text{subject to}& \displaystyle\sum\limits_{t:i\leq t\leq j}y_{t}\leq d_{S_{i,j}},  &\forall 1\leq i\leq j\leq T\\
                 &          y_{t}\geq 0, &\forall 1\leq t\leq T
\end{array}
\end{equation}

\subparagraph{Primal-dual algorithm.} We are now ready to state our primal-dual algorithm. Our algorithm maintains a feasible dual solution and at every step $t$: increases the dual variable $y_{t}$ until some dual constraint becomes tight, and then decides which state to move to depending on the tight dual constraint. 

Upon the arrival of request $\rho_{t}$, increase the value of the dual variable $y_{t}$ from the initial value $0$ until some dual constraint is tight, i.e.
\begin{equation*}
    \sum_{\ell=i}^{t}y_{\ell}=d_{S_{i,t}}
\end{equation*}
for some $i\leq \ell\leq t$; or equivalently, we can write the value as
\begin{equation}\label{primaldualequation}
    y_{t}=\min_{1\leq i \leq t} \left\{d_{S_{i,t}}-\sum_{\ell=i}^{t-1}y_{\ell}\right\},
\end{equation}
where the value of $\sum_{\ell=i}^{t-1}y_{\ell}$ is defined to be zero for $i=t$.

If the tight dual constraint is $\sum_{\ell=i}^{t}y_{\ell}=d_{S_{i,t}}$, then the algorithm moves the server to the state labeled with $S_{i,t}$ (the one with the minimum edge weight to the root based on our previous assumption), to process request $\rho_{t}$. Ties are broken in an arbitrary way.

\subparagraph{Algorithm analysis.} We first prove that the algorithm indeed maintains a feasible dual solution.

\begin{lemma}
  The algorithm maintains a feasible dual solution.
\end{lemma}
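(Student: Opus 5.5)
We argue by induction on the step $t$ that, after $y_t$ is fixed by \eqref{primaldualequation}, the vector $(y_1,\ldots,y_t)$ (with all future $y$'s still $0$) satisfies every dual constraint of \eqref{dual} whose interval ends at or before $t$. It also satisfies $y_\ell\ge 0$ for all $\ell\le t$. Constraints for intervals $[i,j]$ with $j>t$ cannot be violated yet: only variables $y_\ell$ with $\ell\le t$ are nonzero, so the left-hand side of such a constraint equals the left-hand side of $[i,t]$. Since $S_{i,t}\subseteq S_{i,j}$, Assumption~\ref{inputassumption} gives $d_{S_{i,t}}\le d_{S_{i,j}}$, so it is dominated by an already-checked constraint. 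In particular, once all steps are processed, every constraint in \eqref{dual} holds.

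For the inductive step at time $t$, the only new constraints involving $y_t$ are those for intervals $[i,t]$ with $1\le i\le t$. The choice of $y_t$ as the minimum in \eqref{primaldualequation} is exactly the largest value making all of them hold, with at least one tight. So the real content is nonnegativity: we must show that $d_{S_{i,t}}-\sum_{\ell=i}^{t-1}y_\ell\ge 0$ for every $i\le t$. For $i=t$ this is $d_{S_{t,t}}\ge 0$, which is trivial. For $i<t$ it follows from the induction hypothesis applied to the constraint for the interval $[i,t-1]$, which gives $\sum_{\ell=i}^{t-1}y_\ell\le d_{S_{i,t-1}}$, combined with $S_{i,t-1}\subseteq S_{i,t}$ and Assumption~\ref{inputassumption}, which give $d_{S_{i,t-1}}\le d_{S_{i,t}}$. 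Hence every term in the minimum is nonnegative and $y_t\ge 0$.

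There is one technicality: some $d_S$ may be $\infty$. The minimum is still finite because the singleton state for the current request exists, so $d_{S_{t,t}}<\infty$; infinite right-hand sides impose no constraint. The only real point to watch is that monotonicity in Assumption~\ref{inputassumption} is used twice: once to keep $y_t\ge 0$, and once to ensure that constraints for longer intervals never become binding before those for their prefixes. I expect no genuine obstacle beyond stating the induction cleanly.
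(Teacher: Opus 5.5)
Your proof is correct and matches the paper's: both show that the new constraints for $[i,t]$ already hold at $y_t=0$, by combining feasibility of the constraint for $[i,t-1]$ with $S_{i,t-1}\subseteq S_{i,t}$ and Assumption~\ref{inputassumption}, so the minimum in \eqref{primaldualequation} is nonnegative. Your remarks on constraints with $j>t$ and on infinite $d_S$ are harmless extras that the paper leaves implicit; strictly, $d_{S_{t,t}}<\infty$ follows from Assumption~\ref{inputassumption} and the requested set $\rho_t$ being nonempty, not from a node labelled exactly $S_{t,t}$ existing.
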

\begin{proof}
    At every step $t$, a new dual variable $y_{t}$ whose initial value is zero appears with a set of new dual constraints, and we need to show that these constraints are not violated when $y_{t}=0$, so that it is feasible to increase the value of $y_{t}$ from zero until some dual constraint becomes tight.
    
    It suffices to prove that $\sum_{\ell=i}^{t-1}y_{\ell}\leq d_{S_{i,t}}$ for every $i\leq t$, given the values $y_{1}$ to $y_{t-1}$ well-defined from previous steps. The case that $i=t$ is trivial, since $d_{S_{i,t}}$ is non-negative based on the problem definition. For every $1\leq i\leq t-1$, it holds that $\sum_{\ell=i}^{t-1}y_{\ell}\leq d_{S_{i,t-1}}$ given that we maintain a feasible solution for dual variables and constraints that have appeared until step $t-1$, and the result follows by Assumption~\ref{inputassumption} and the fact that $S_{i,t-1}\subseteq S_{i,t}$.
\end{proof}

Now we are ready to use values of $y_{t}$'s that the algorithm computes under its execution to provide an upper bound of its cost.

\begin{lemma}
\label{lem:alg-ub}
    The cost of the algorithm is at most $m\sum_{\ell=1}^{T}y_{\ell}$.
\end{lemma}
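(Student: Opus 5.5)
The plan is a charging argument. We work in the entering-cost model justified by \cref{factor2}: a move into the state labeled $S$ costs $d_S$, and staying put costs nothing. Suppose the algorithm actually changes state at step $t$, and let $[i_t,t]$ be the tight dual constraint it uses. Then it pays exactly
\[
d_{S_{i_t,t}}=\sum_{\ell=i_t}^{t}y_\ell .
\]
We charge this cost to the dual variables $y_{i_t},\dots,y_t$, one unit of $y_\ell$ per $\ell$ in the interval. The total cost is then $\sum_\ell y_\ell\cdot N_\ell$, where $N_\ell$ is the number of costly steps $t$ whose tight interval contains $\ell$, i.e.\ $i_t\le \ell\le t$. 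The lemma follows once we show $N_\ell\le m$ for every $\ell$.

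To bound $N_\ell$, we use the monotone family $S_{\ell,t}\subseteq[m]$, which is nondecreasing in $t$. This family can strictly grow at most $m-1$ times after $S_{\ell,\ell}=\{\rho_\ell\}$, so there are at most $m$ steps $t\ge \ell$ with $t=\ell$ or $\rho_t\notin S_{\ell,t-1}$. The key claim is therefore: if the algorithm makes a costly move at step $t>\ell$ using an interval with $i_t\le \ell$, then $\rho_t\notin S_{\ell,t-1}$. Granting this, every step counted in $N_\ell$ is either $t=\ell$ or a step at which $S_{\ell,\cdot}$ gains a new request type. Hence $N_\ell\le |S_{\ell,T}|\le m$.

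To prove the claim, we adopt the tie-breaking convention that the algorithm stays at its current state whenever that state is feasible for $\rho_t$ and the corresponding constraint is tight. By Assumption~\ref{inputassumption} this convention is consistent with the primal-dual rule. Alternatively, we invoke the adversary assumption stated earlier, that every request forces a move. Either way, a costly move at $t$ means $\rho_t$ is not in the current state $S_{i_{t-1},t-1}$. There are two cases.
\begin{itemize}
\item If $i_{t-1}\le \ell$, then $S_{\ell,t-1}\subseteq S_{i_{t-1},t-1}$, so $\rho_t\notin S_{\ell,t-1}$ directly.
\item If $i_{t-1}>\ell\ge i_t$, we argue by contradiction. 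Suppose $\rho_t\in S_{\ell,t-1}$. Then $\rho_t\in S_{i_t,t-1}$, so $S_{i_t,t}=S_{i_t,t-1}$. Dual feasibility at step $t-1$ gives $\sum_{\ell'=i_t}^{t-1}y_{\ell'}\le d_{S_{i_t,t-1}}=d_{S_{i_t,t}}$. Tightness of $[i_t,t]$ then forces $y_t=0$. So the constraint for $[i_t,t-1]$ was already tight at step $t-1$, i.e.\ $\sum_{\ell'=i_t}^{t-1}y_{\ell'}=d_{S_{i_t,t-1}}$. We then want to conclude that the algorithm could have stayed, or that $\rho_t$ lies in the current state, which contradicts the costly move.
\end{itemize}

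The second case is the main obstacle. Two tight intervals $[i_t,t-1]$ and $[i_{t-1},t-1]$ ending at the same step need not have comparable labels a priori. My first attempt would be to strengthen the tie-breaking rule: among tight constraints at step $t$, prefer the one with the smallest $i$, equivalently the largest label. With this rule, $i_{t-1}\le i_t\le\ell$ whenever $[i_t,t-1]$ was tight at step $t-1$, which reduces the second case to the first. If that fails, the fallback is to charge only to the suffix $[i_{t-1}+1,t]$ and bound the rest via the tightness identity at $t-1$.
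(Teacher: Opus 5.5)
Your charging framework matches the paper's: charge $d_{S_{i_t,t}}=\sum_{\ell=i_t}^{t}y_\ell$ to the dual variables in the tight interval, then show each $y_\ell$ is charged at most $m$ times. The gap is in how you bound $N_\ell$. Your key claim is that every costly step $t>\ell$ with $i_t\le\ell$ has $\rho_t\notin S_{\ell,t-1}$. This claim is false, and the proposed repair does not save it.

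In your second case, the inference ``tightness of $[i_t,t]$ forces $y_t=0$'' is wrong. From $\rho_t\in S_{i_t,t-1}$ you only get $y_t=d_{S_{i_t,t-1}}-\sum_{\ell'=i_t}^{t-1}y_{\ell'}$, which is the leftover slack of the constraint $[i_t,t-1]$. Nothing forces that constraint to have been tight at step $t-1$.

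Concretely, take $m=2$, $d_{\{1\}}=d_{\{2\}}=1$, $d_{\{1,2\}}=5/2$, and requests $1,2,1$, each of which forces a move. The algorithm runs as follows:
\begin{itemize}
\item It sets $y_1=1$ and moves to $\{1\}$.
\item It sets $y_2=\min\{3/2,1\}=1$; only $[2,2]$ is tight, so it moves to $\{2\}$.
\item It sets $y_3=\min\{1/2,3/2,1\}=1/2$; only $[1,3]$ is tight, so it moves to $\{1,2\}$.
\end{itemize}
Step $3$ is charged to $y_1$, yet $\rho_3=1\in S_{1,2}=\{1,2\}$, so $S_{1,\cdot}$ does not grow at step $3$. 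There are no ties anywhere, so no tie-breaking rule can rescue the claim. This includes your smallest-$i$ rule, which presupposes that $[i_t,t-1]$ was tight at $t-1$. Incidentally, intervals ending at the same step are nested, so their labels are always comparable; incomparability is not the issue.

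The missing idea is to find the new request type at the step \emph{right after} the previous charge, not at the charging step itself. Let $t_1<t_2$ be consecutive steps charged to $y_\ell$.
\begin{itemize}
\item By the adversary assumption, $\rho_{t_1+1}$ is not in the state $S_{i_{t_1},t_1}$ occupied after step $t_1$.
\item Since $i_{t_1}\le\ell$ gives $S_{\ell,t_1}\subseteq S_{i_{t_1},t_1}$, it follows that $\rho_{t_1+1}\notin S_{\ell,t_1}$.
\item On the other hand, $\rho_{t_1+1}\in S_{\ell,t_2}$, because $\ell\le t_1<t_1+1\le t_2$.
\end{itemize}
Hence $S_{\ell,t_1}\subsetneq S_{\ell,t_2}$. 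The charged steps therefore form a strictly increasing chain of nonempty subsets of $[m]$, so there are at most $m$ of them.

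If you use your stay-if-feasible convention instead of the adversary assumption, replace $\rho_{t_1+1}$ by the first request among $\rho_{t_1+1},\dots,\rho_{t_2}$ lying outside $S_{i_{t_1},t_1}$. Such a request must exist, since otherwise the algorithm would never leave that state and could not make a costly move at $t_2$.

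In the example, the growth of $S_{1,\cdot}$ happens at step $2$, between the two charges to $y_1$. This is exactly the paper's argument, and it needs neither a special tie-breaking rule nor a case split on $i_{t-1}$.
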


\begin{proof}
    When the algorithm moves to state $S_{i,t}$ at step $t$, $S_{i,t}$ is the set of request types in some subsequence $\rho_{i},\ldots,\rho_{t}$ and the values of dual variables assigned by the algorithm so far satisfy that $\sum_{\ell=i}^{t}y_{\ell}=d_{S_{i,t}}$, since the algorithm chooses state $S_{i,t}$ to move to according to the tight dual constraint at every step. As a result, the cost of moving to state $S_{i,t}$ at every step $t$ can be charged to some $y_{\ell}$'s. Now we show that every $y_{\ell}$ is only charged to at no more than $m$ different steps $t$ and their corresponding states $S_{i,t}$'s, which means that we need to show there are at most $m$ pairs of $(i,t)$'s such that $i\leq \ell\leq t$ and the algorithm moves to state $S_{i,t}$ at step $t$.

    Suppose there are two different steps $t_{1}<t_{2}$ such that the algorithm moves to state $S_{i_{1},t_{1}}$ at step $t_{1}$ and state $S_{i_{2},t_{2}}$ at step $t_{2}$, and $i_{1}\leq \ell\leq t_{1}$ and $i_{2}\leq \ell\leq t_{2}$. The subsequence $\rho_{t_{1}+1},\ldots,\rho_{t_{2}}$ contains at least one request that is not in set $S_{i_{1},t_{1}}$---namely, $\rho_{t_1+1}$---as otherwise the algorithm has no incentive to move to a different state. As a result, since $S_{t_{1}+1,t_{2}}\subseteq S_{\ell,t_{2}}$ and $S_{\ell,t_{1}}\subseteq S_{i_{1},t_{1}}$, we  get $S_{\ell,t_{2}}\setminus S_{\ell,t_{1}}\supseteq S_{t_{1}+1,t_{2}}\setminus S_{i_{1},t_{1}}\neq\varnothing$,  and hence $|S_{\ell,t_{2}}|\geq |S_{\ell,t_{1}}|+1$ because $S_{\ell,t_{1}}\subseteq S_{\ell,t_{2}}$. Now if we have in total $k$ such different steps $\ell\leq t_{1}<\ldots<t_{k}$, then it holds that $|S_{\ell,t_{k}}|\geq k$, which directly implies that $k\leq m$.
\end{proof}

\cref{factor2,lem:alg-ub} and the fact that the maximum value of the linear dual program (\ref{dual}) is no more than the minimum value of the original program (\ref{linear}) imply Theorem~\ref{star1}.

%% file: hst.tex
\section{Randomized Competitive Ratio on HSTs}

In this section, we consider hierarchically separated trees (HSTs), a highly symmetric rooted tree structure where the edge weights between a node and all its children are equal. The path between one internal node to all leaves in the subtree rooted at this node has the same number of edges, meaning that the level of a node as the number of edges from it to any of its descendant leaf is well-defined (the level of every leaf node is $0$). Moreover, the edge weight between an internal node and all its children is a function of the node level. Since we regard all leaf nodes as the points in the metric space induced by a rooted tree, the distance between two points is a function of the level of their lowest common ancestor. We say that an HST is a $k$-level HST if its root is at level $k$.

The bulk of this section is devoted to proving our lower bounds in \cref{iterativesection}. We finish with the tight upper bound for 2-level HSTs in \cref{2-tight}. 

\subsection{Iterative Lower Bounds}\label{iterativesection}
The key ingredient for our improved lower bounds is a new construction for uniform metrics (i.e.~$1$-level HSTs). Thus, we begin in \cref{lb-unif} by giving the construction and analysis for uniform metrics. Then, we show how to lift our ideas to construct  higher level HSTs in \cref{lb-lift}. In \cref{lb-2}, we prove our improved lower bound $c_{2,m}=\Omega(\binom{m}{ \lceil m/2\rceil})$ on $2$-level HSTs. Finally, we prove our lower bounds for higher level HSTs.

\subparagraph{Iterative sequence.} We first introduce the following iterative sequence which will be used to express the constructions of hard instances and the lower bounds:
\begin{equation}
    \label{eq:iter}
A_{1,m}=\binom{m}{\lceil m/2\rceil},
    \qquad
    A_{k,m}=\binom{A_{k-1,m}}{\lceil A_{k-1,m}/2\rceil}
    \quad \mbox{for  $k\geq 2$}.
\end{equation}

\begin{proposition}
    \label{iterdiv}
    For every fixed $m \geq 4$, we have $A_{1,m} > m$ and so the sequence $\{A_{k,m}\}_{k \geq 1}$ is unbounded. 
\end{proposition}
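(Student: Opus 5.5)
The plan is to prove first that $\binom{n}{\lceil n/2\rceil} > n$ for every integer $n \geq 4$. Applied with $n=m$, this gives $A_{1,m} > m$. Then an induction shows the sequence is strictly increasing, and a strictly increasing sequence of integers is unbounded.

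\textbf{Step 1: the binomial inequality.} Let $f(n) = \binom{n}{\lceil n/2\rceil}$. I will check the base cases $f(4) = 6 > 4$ and $f(5) = 10 > 5$ directly. For the inductive step, I will use the growth $f(n+2) \geq 2f(n)$, which follows from
\begin{equation*}
\binom{n+2}{\lceil n/2\rceil+1} = \binom{n}{\lceil n/2\rceil-1} + 2\binom{n}{\lceil n/2\rceil} + \binom{n}{\lceil n/2\rceil+1} \geq 2\binom{n}{\lceil n/2\rceil},
\end{equation*}
together with $\lceil (n+2)/2\rceil = \lceil n/2\rceil+1$. Then $f(n) > n$ gives $f(n+2) \geq 2f(n) > 2n \geq n+2$ for $n \geq 2$. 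The induction runs separately over even and odd $n$, starting from $4$ and $5$ respectively. Alternatively, one can use the bound $f(n) \geq 2^n/(n+1)$, since the central binomial coefficient is the largest of the $n+1$ coefficients. This gives $f(n) > n$ once $2^n > n(n+1)$, which holds for $n \geq 5$, with $n=4$ checked by hand.

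\textbf{Step 2: monotonicity of the sequence.} I will show by induction on $k$ that $A_{k,m} \geq 4$ and $A_{k+1,m} > A_{k,m}$ for all $k \geq 1$.
\begin{itemize}
\item Base case: $A_{1,m} > m \geq 4$.
\item Inductive step: if $A_{k,m} \geq 4$, then Step 1 with $n = A_{k,m}$ gives $A_{k+1,m} = f(A_{k,m}) > A_{k,m} \geq 4$.
\end{itemize}
Hence the integers $A_{k,m}$ are strictly increasing, so $A_{k,m} \geq A_{1,m} + (k-1)$. The sequence is therefore unbounded, and in fact grows at least as fast as iterated exponentials.

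There is no real obstacle here. The only care needed is the threshold $n \geq 4$: note $f(3) = 3$, so the claim fails at $m=3$. This explains the hypothesis $m \geq 4$, and it is why the base cases $4$ and $5$ must be checked explicitly.
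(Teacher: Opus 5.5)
Your proof is correct. The paper states this proposition without proof, giving only the hint that $A_{1,m}>m$ implies unboundedness. Your argument makes that hint rigorous: the inequality $f(n)=\binom{n}{\lceil n/2\rceil}>n$ is needed at every iterate, i.e.\ for all $n\ge 4$ and not just $n=m$. Your Step~1 establishes this, and Step~2 iterates it through $A_{k+1,m}=f(A_{k,m})$.
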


\subparagraph{Labels and requests.} 
In this section, the requests are defined by assigning a label $S \subseteq [m]$ to each leaf node. The request type $K_i$ (as a feasible set) is then defined to be the leaf nodes whose labels contain $i$, i.e.~request type $K_i$ is served by moving to a node $v$ whose label $S$ contains $i$. We may use number $i$ to refer to request type $K_i$ when the context is clear, especially when we use the corresponding subset of $[m]$ to refer to a set of request types.

\subparagraph{Randomized lower bounds.} To prove a lower bound of $\alpha$ on the competitive ratio of randomized online algorithms, we
use Yao’s minimax principle~\cite{DBLP:conf/focs/Yao77}, and give a distribution over input instances such that every deterministic algorithm has expected competitive ratio at least $\alpha$. 

In the following, we will only describe and analyze a randomized input sequence whose length is a function of $k$ and $m$. This process can be repeated to generate arbitrarily many rounds and hence the request sequence can be arbitrarily long and have arbitrarily large cost.

\subsubsection{Warm-up: Uniform Metrics}
\label{lb-unif}
We begin with a warm-up on a $1$-level HST (uniform metric) which we will lift to higher level HSTs.

\subparagraph{Lower bound instance.} The key property of our $1$-level HST is that for every request sequence that is a permutation of $[m]$, there is a leaf that can serve the first $\cm$ requests and another leaf that can serve the remaining requests, and thus there is always a solution that moves at most twice.
Define the $1$-level HST structure $\cT_{1,m}$ as follows: it has $\binom{m}{\lceil m/2\rceil }=A_{1,m}$ leaves, each labeled with a $\cm$-size subset of $[m]$ and the distance between every two leaves is $1$. The adversarial request sequence is a uniform random permutation of $[m]$. 

Before we start our analysis, we prove the following proposition which will be useful in bounding the expected cost of the algorithm.

\begin{proposition}
    \label{prop:alg}
    There exists a universal constant $\kappa > 0$ such that for every $M\geq 2$, we have
    \[\sum_{t=1}^{\lfloor M/2\rfloor}\frac{\lfloor M/2\rfloor-t+1}{M-t+1} \geq \kappa \cdot M.\]
\end{proposition}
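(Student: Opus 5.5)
Write $h=\lfloor M/2\rfloor$ and reindex the sum by $s=h-t+1$, so $s$ runs from $1$ to $h$ and the denominator becomes $M-t+1 = M-h+s$. The sum is then
\[
\sum_{s=1}^{h}\frac{s}{M-h+s}.
\]
Each denominator satisfies $M-h+s\le M$ because $s\le h$. Hence every term is at least $s/M$, and
\[
\sum_{s=1}^{h}\frac{s}{M-h+s}\;\ge\;\frac{1}{M}\sum_{s=1}^{h}s\;=\;\frac{h(h+1)}{2M}.
\]

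It remains to bound $h(h+1)$ below by a constant times $M^2$, using $h\ge (M-1)/2$ and $h+1\ge M/2$. This gives $h(h+1)\ge M(M-1)/4$. For $M\ge 2$ we have $M-1\ge M/2$, so $h(h+1)\ge M^2/8$. Therefore the sum is at least $M/16$, and the statement holds with $\kappa=1/16$.

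No step is a real obstacle; the only care needed is the floor, which the crude bounds $h\ge (M-1)/2$ and $M-1\ge M/2$ for $M\ge 2$ absorb. A sharper constant near $1/8$ follows from the same argument for large $M$, but any universal positive $\kappa$ suffices.
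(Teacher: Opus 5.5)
Your proof is correct, and it takes a different route from the paper's.

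The paper rewrites each numerator as $(M-t+1)-\lceil M/2\rceil$. This evaluates the sum exactly as $\lfloor M/2\rfloor-\lceil M/2\rceil\,(H_M-H_{\lceil M/2\rceil})$. It then uses $H_M-H_{\lceil M/2\rceil}=\ln 2+o(1)$ to get $\frac{1-\ln 2}{2}M(1+o(1))$. That route identifies the sharp asymptotic constant $(1-\ln 2)/2\approx 0.153$. As written, however, it is only an asymptotic statement. To get a universal $\kappa$ for every $M\ge 2$, one still has to add that the sum is positive for each of the finitely many small $M$ (for instance, the $t=1$ term equals $\lfloor M/2\rfloor/M>0$).

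Your term-wise bound $M-h+s\le M$ turns the sum into an arithmetic series. It gives the explicit, non-asymptotic constant $\kappa=1/16$ for all $M\ge 2$, with no limiting argument and no separate small-$M$ check. The cost is only a constant factor: asymptotically $1/8$ versus $0.153$. That loss does not matter for how the proposition is used in the $\Omega(m)$ and $\Omega(A_{1,m}C_1)$ lower bounds.
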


\begin{proof}
We have
        \begin{align*}
            \sum_{t=1}^{\lfloor M/2\rfloor}\frac{\lfloor M/2\rfloor-t+1}{M-t+1}
            &=\sum_{t=1}^{\lfloor M/2\rfloor}\frac{(M-t+1)-\lceil M/2\rceil }{M-t+1}\\
        &=\lfloor M/2\rfloor-\lfloor M/2\rfloor\sum_{i=\lceil M/2\rceil+1}^{M}\frac{1}{i}\\
        &= \lfloor M/2\rfloor-\lceil M/2\rceil(H_{M}-H_{\lceil M/2\rceil})\\
        &=\lfloor M/2\rfloor-\lceil M/2\rceil(\ln2+o(1))=\frac{1-\ln 2}{2}M(1+o(1)),\end{align*}
    where $H_{i}$ is the $i$th harmonic number.
\end{proof}

\begin{theorem}\label{level1}
    The expected competitive ratio of every deterministic algorithm against the uniform random permutation of $m$ request types on $\mathcal{T}_{1,m}$ is $\Omega(m)$.
\end{theorem}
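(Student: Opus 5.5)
We bound the optimal cost from above by a constant and the algorithm's expected cost from below by $\Omega(m)$, then use that the expected ratio is at least the expected algorithm cost divided by this constant.

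\textbf{Optimal cost.} Fix any permutation $\pi$ of $[m]$. The set of the first $\cm$ requested types, $\{\pi(1),\ldots,\pi(\cm)\}$, is a $\cm$-subset of $[m]$, so some leaf carries exactly this label. The last $\lfloor m/2\rfloor$ types can be completed arbitrarily to a $\cm$-subset, so some leaf covers them too. The offline solution moves to the first leaf, stays there for the first half of the requests, and then moves to the second leaf. This costs at most $2$ per round. Hence $\cost^*\le 2$ for every permutation, and it suffices to show that every deterministic algorithm has expected cost $\Omega(m)$.

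\textbf{Algorithm cost.} Fix a deterministic algorithm and condition on the first $t-1$ requests $\pi(1),\ldots,\pi(t-1)$. After step $t-1$ the algorithm sits at a leaf with a label $L$ that is determined by this history, and $|L|=\cm$. Given the history, the request $\pi(t)$ is uniform over the $m-t+1$ types not yet requested. The algorithm must pay $1$ at step $t$ whenever $\pi(t)\notin L$. Since $L$ contains $\pi(t-1)$, at most $t-1$ of the already requested types lie in $L$. So the number of unrequested types outside $L$ is at least
\[
(m-t+1)-\bigl(\cm-(\text{number of requested types in }L)\bigr)\ \ge\ (m-t+1)-\cm \;=\; \lfloor m/2\rfloor - t+1 .
\]
Therefore the conditional probability of paying at step $t$ is at least $\frac{\lfloor m/2\rfloor-t+1}{m-t+1}$.

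Summing over $t=1,\ldots,\lfloor m/2\rfloor$ by linearity of expectation, and applying \cref{prop:alg} with $M=m$, the expected cost is at least $\kappa m$. The first step may need care depending on the initial position: either count it as costing $0$ or start the sum from $t=2$. Dropping one term changes the bound only by an additive $O(1)$, which does not affect the $\Omega(m)$ conclusion.

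\textbf{Conclusion.} Since $\cost^*\le 2$ deterministically, the expected ratio satisfies $\mathbb{E}[\cost_{\mathcal A}/\cost^*]\ge \mathbb{E}[\cost_{\mathcal A}]/2=\Omega(m)$. The additive constant $\beta$ is handled by repeating rounds, as the paper indicates.

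The main obstacle is the conditional counting step: one must note that the bound $\lfloor m/2\rfloor-t+1$ holds regardless of how many requested types $L$ contains, because fewer requested types in $L$ only increases the number of unrequested types outside $L$. Once that is established, \cref{prop:alg} finishes the argument.
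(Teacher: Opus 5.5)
Your proposal is correct and follows the paper's proof essentially step for step. It uses the same two-move offline solution of cost at most $2$, the same lower bound of $\lfloor m/2\rfloor-t+1$ on the number of unrequested types outside the algorithm's current label, and the same summation via \cref{prop:alg}. Your hedge about step $t=1$ is unnecessary: the initial state is itself a leaf of $\mathcal{T}_{1,m}$ with a $\lceil m/2\rceil$-size label, so the bound already holds at $t=1$.
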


\begin{proof}
    First, we show that there is a solution with cost at most $2$.
    The whole sequence $\rho_{1}, \ldots, \rho_{m}$ has each of the possible request types $K_{1},\ldots,K_{m}$ exactly once. The solution processes requests $\rho_{1},\ldots,\rho_{\lceil m/2\rceil }$ at the state labeled by the set of all indices of request types in $\rho_{1},\ldots,\rho_{\lceil m/2\rceil}$, i.e. at $S=\{i\mid K_{i}\text{ is in subsequence }\rho_{1},\ldots,\rho_{\lceil m/2\rceil}\}$, and then moves to the state labeled by the set of all indices of request types in the subsequence $\rho_{\lfloor m/2 \rfloor +1},\ldots, \rho_{m}$. The total cost of this solution is at most $2$, no matter which initial state the server is in.

We now lower bound the expected cost of every deterministic algorithm. Fix a deterministic algorithm. For each step $t$, let $R_{t}$ be the set of request types that have not appeared prior to this step and $S_{t-1}$ denote the label of the state of the algorithm right before the arrival of $\rho_{t}$. It holds that $|R_{t}|=m-t+1$ and $|S_{t-1}|=\lceil m/2\rceil $, and hence $|R_{t}\setminus S_{t-1}|\geq \lfloor m/2\rfloor -t+1$. Let $Y_{t}$ denote the event that the algorithm has to move at step $t$ (i.e.~because $\rho_{t}\notin S_{t-1}$). Since $\rho_t$ is chosen u.a.r.~from $R_t$, we have $\mathbb{E}[Y_{t}]\geq (\lfloor m/2\rfloor -t+1)/(m-t+1)$ for every $1\leq t\leq \lfloor m/2\rfloor$. For $\lfloor m/2\rfloor+1\leq t\leq m$, we use the trivial bound $\mathbb{E}[Y_{t}]\geq 0$. Thus, the total expected cost of the deterministic algorithm is 
    \begin{align*}       
    \sum_{t=1}^{m}\mathbb{E}[Y_{t}]&\geq \sum_{t=1}^{\lfloor m/2\rfloor}\frac{\lfloor m/2\rfloor-t+1}{m-t+1}=\Omega(m),\end{align*}
    by \cref{prop:alg}.
\end{proof}

\subsubsection{Lifting to Higher Levels}
\label{lb-lift}
For every $k$ and $m$, we construct an HST $\mathcal{T}_{k,m}$ to lower bound $c_{k,m}$.

\subparagraph{Key property of $\cT_{k,m}$.}
The key property of $\cT_{1,m}$ and the distribution of request sequences is that for every request sequence in the support of the distribution, there is a leaf that can serve the first half of the requests, and another leaf that can serve the second half. Thus, there is a solution that only moves twice. We generalize this to $\cT_{k,m}$ as follows. The distribution of request sequences and $\cT_{k,m}$ have the property that for every sequence in the support of the distribution, there are two ``safe" $(k-1)$-level subtrees such that there is a low-cost solution that can serve the first half of the request sequence in one of them and the second half in the other. In other words, the low-cost solution only moves twice between $(k-1)$-level subtrees.

\subparagraph{Higher-level labels.} 
In our construction of $\cT_{k,m}$, we will assign labels to all nodes except the root. The labels for leaves come from the set
$\mathcal{P}_{0}:=\binom{[m]}{\lceil m/2\rceil}$\footnote{The notation $\binom{S}{k}$ means the collection of all $k$-size subsets of $S$.}. The labels for $\ell$-level nodes are collections of those for $(\ell-1)$-level nodes as follows: Let $\mathcal{P}_{\ell}$ be the set of $\ell$-level labels, then
$\mathcal{P}_{\ell}:=\binom{\mathcal{P}_{\ell-1}}{\lceil|\mathcal{P}_{\ell-1}/2|\rceil }$ for every $1\leq \ell\leq k-1$. It is easy to see that $|\mathcal{P}_{\ell}|=A_{\ell+1,m}$. We also define the label of a subtree to be the label of its root node.

\subparagraph{Construction of $\cT_{k,m}$.}
For $k \geq 2$, the HST $\cT_{k,m}$ is constructed iteratively in a top-down fashion as follows. For every label $P \in \cP_{k-1}$, the root of the HST has a child labeled $P$. For every level $k-1\geq \ell \geq 1$, every level-$\ell$ node $v$ has $|P(v)|$ children where $P(v)$ is its label, each given a unique label $p \in P(v)$.\footnote{Recall that $P(v)$, a label of a level-$\ell$ node, is a collection of level-($\ell-1$) labels.} We remark that there is exactly one level-$(k-1)$ node per label in $\cP_{k-1}$ but for $\ell < k-1$, there are several level-$\ell$ nodes from different $(\ell+1)$-level subtrees that share the same label. For every $1\leq \ell\leq k$, the distance between every two leaves whose lowest common ancestor is on level $\ell$ is $C_{\ell-1}$\footnote{We always assume that $C_\ell/C_{\ell-1}$ is an integer for every $\ell\geq 1$ .} ($C_{0}=1$), where $C_{\ell}$'s depend on $k$ and $m$ and will be defined later in \cref{sec:generalhst}.

\begin{observation}
    In $\mathcal{T}_{k,m}$, every $\ell$-level node has exactly $\left\lceil A_{\ell,m}/2\right\rceil$ level-$(\ell-1)$ children with distinct labels, for every $1\leq \ell \leq k-1$.
\end{observation}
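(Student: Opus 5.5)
The observation follows directly from the definitions once we confirm the claimed cardinalities $|\cP_\ell| = A_{\ell+1,m}$. First we establish this by induction on $\ell$; then we read off the number and distinctness of children from the construction of $\cT_{k,m}$.

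\emph{Cardinalities.} For the base case, $\cP_0=\binom{[m]}{\lceil m/2\rceil}$, so $|\cP_0|=\binom{m}{\lceil m/2\rceil}=A_{1,m}$ by \eqref{eq:iter}. For the inductive step, suppose $|\cP_{\ell-1}|=A_{\ell,m}$ for some $1\le \ell\le k-1$. Since $\cP_\ell=\binom{\cP_{\ell-1}}{\lceil |\cP_{\ell-1}|/2\rceil}$, we get
\[
|\cP_\ell| = \binom{A_{\ell,m}}{\lceil A_{\ell,m}/2\rceil} = A_{\ell+1,m}
\]
by the recursion in \eqref{eq:iter}. Along the way we also record that every $P\in\cP_\ell$ is a subset of $\cP_{\ell-1}$ of size exactly $\lceil |\cP_{\ell-1}|/2\rceil=\lceil A_{\ell,m}/2\rceil$.

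\emph{Labels by level.} Next we check, top-down, that every level-$\ell$ node of $\cT_{k,m}$ carries a label from $\cP_\ell$. Each child of the root, which sits at level $k-1$, is labeled by an element of $\cP_{k-1}$. Now take a level-$\ell$ node $v$ with $P(v)\in\cP_\ell$ and $\ell\ge 1$. Its children receive labels $p\in P(v)\subseteq \cP_{\ell-1}$, so each child (at level $\ell-1$) is labeled from $\cP_{\ell-1}$. Induction down the tree gives the claim for all levels.

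\emph{Counting children.} Fix $1\le \ell\le k-1$ and a level-$\ell$ node $v$. By construction, $v$ has exactly $|P(v)|$ children, and each receives a unique element of $P(v)$. Hence the children are in bijection with $P(v)$, so their labels are pairwise distinct. Moreover, $P(v)\in\cP_\ell$, so $|P(v)|=\lceil A_{\ell,m}/2\rceil$ by the first step. This gives exactly $\lceil A_{\ell,m}/2\rceil$ level-$(\ell-1)$ children with distinct labels.

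There is no real obstacle here. The only point requiring care is the index bookkeeping: level-$\ell$ labels live in $\cP_\ell$, whose elements are subsets of $\cP_{\ell-1}$, and $|\cP_{\ell-1}|=A_{\ell,m}$ rather than $A_{\ell-1,m}$. We also read the notation $\lceil|\cP_{\ell-1}/2|\rceil$ in the definition as $\lceil|\cP_{\ell-1}|/2\rceil$.
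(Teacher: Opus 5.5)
Your proof is correct and matches the paper, which states the observation without proof as an immediate consequence of the construction of $\mathcal{T}_{k,m}$ and the remark $|\mathcal{P}_\ell| = A_{\ell+1,m}$. Each level-$\ell$ label is thus a $\lceil A_{\ell,m}/2\rceil$-element subset of $\mathcal{P}_{\ell-1}$ whose elements label the children bijectively, and your reading of $\lceil|\mathcal{P}_{\ell-1}/2|\rceil$ as $\lceil|\mathcal{P}_{\ell-1}|/2\rceil$ is the intended one.
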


\subsubsection{Lower Bound Sequence for $\cT_{2,m}$}
\label{lb-2}

We now prove our lower bound of $c_{2,m}$. Recall that  $C_1$ is the distance between two leaves in different $1$-level subtrees, and $\cP_0$ is the set of all $\lceil m/2\rceil$-size subsets of $[m]$.
 For each label $S \in \cP_0$, define  $\brho(S)$ to be an arbitrary permutation of $S$.

Now we describe the adversarial randomized input sequence. The adversary first samples a uniformly random permutation $\{S_{t}\}_{t=1}^{|\cP_0|}$ of $\mathcal{P}_{0}$ and then proceeds in $|\cP_0|=A_{1,m}$ phases as follows: in phase $t$, it repeats the sequence $\brho(S_{t})$ for $C_{1}$ times. Thus,  the entire input is a  concatenation of all those phases:
\[
\underbrace{\brho(S_1), \ldots, \brho(S_1)}_{\text{$C_1$ times}},
\underbrace{\brho(S_2), \ldots, \brho(S_2)}_{\text{$C_1$ times}},\ldots,
\underbrace{\brho(S_{|\cP_0|}), \ldots, \brho(S_{|\cP_0|})}_{\text{$C_1$ times}}.
\]

For every $1\leq t\leq A_{1,m}$, we use $T^{1}_{t-1}$ to denote the label of the $1$-level subtree in which the state of the algorithm is located right before the beginning of the phase $t$.

\begin{lemma}\label{level2}
For every $C_1$, every deterministic algorithm has expected cost of at least $\Omega(A_{1,m}C_{1})$.
\end{lemma}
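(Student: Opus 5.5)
We reduce to the argument for \cref{level1}, now played at the level of $1$-level subtrees. The $1$-level subtrees of $\cT_{2,m}$ are labeled by the elements of $\cP_1=\binom{\cP_0}{\lceil A_{1,m}/2\rceil}$. Each such subtree contains exactly $\lceil A_{1,m}/2\rceil$ leaves, and their labels form a $\lceil A_{1,m}/2\rceil$-size subset of $\cP_0$.

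\textbf{Step 1: a per-phase cost bound.} Consider phase $t$. Suppose the algorithm starts the phase in a $1$-level subtree $T^1_{t-1}$ with $S_t\notin T^1_{t-1}$, i.e.\ no leaf of that subtree has label $S_t$. We claim that during the phase the algorithm pays at least $\min(C_1,C_1)=C_1$, up to a constant factor. There are two cases.
\begin{itemize}
\item The algorithm leaves the subtree during the phase. Then it pays at least $C_1$ for that move.
\item The algorithm stays inside $T^1_{t-1}$ for the whole phase. No single leaf in this subtree contains all of $S_t$, because the labels are $\lceil m/2\rceil$-sets distinct from $S_t$, and $|S_t|=\lceil m/2\rceil$. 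Hence during each of the $C_1$ repetitions of $\brho(S_t)$ the algorithm must move at least once, at distance $1$ per move. This costs at least $C_1$.
\end{itemize}
So in either case the phase costs at least $C_1$ whenever $S_t\notin T^1_{t-1}$.

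\textbf{Step 2: counting bad phases in expectation.} Let $Y_t$ be the indicator that $S_t\notin T^1_{t-1}$. The set $T^1_{t-1}$ is determined by $S_1,\dots,S_{t-1}$, and conditioned on these, $S_t$ is uniform over the $A_{1,m}-t+1$ labels not yet used. Let $R_t$ be the set of unused labels. Then
\[
|R_t\setminus T^1_{t-1}|\ge (A_{1,m}-t+1)-\lceil A_{1,m}/2\rceil=\lfloor A_{1,m}/2\rfloor-t+1 .
\]
This gives
\[
\mathbb{E}[Y_t]\ge \frac{\lfloor A_{1,m}/2\rfloor-t+1}{A_{1,m}-t+1}\quad\text{for } t\le \lfloor A_{1,m}/2\rfloor .
\]
Summing over $t$ and applying \cref{prop:alg} with $M=A_{1,m}\ge 2$ gives $\sum_t\mathbb{E}[Y_t]\ge\kappa A_{1,m}$. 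Combined with Step 1, the expected cost is at least $\kappa A_{1,m}C_1=\Omega(A_{1,m}C_1)$.

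\textbf{Main obstacle.} The delicate point is Step 1 in the stay-inside case. One must argue that every repetition of $\brho(S_t)$ forces at least one move, regardless of where inside the subtree the algorithm starts. This holds because $S_t$ is not a subset of any other $\lceil m/2\rceil$-set, so no leaf serves the entire repetition. The edge case is that a single repetition may begin at a leaf that serves its first few requests; this is still fine, since at least one move per repetition suffices.

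A second subtlety is that the algorithm might change subtrees in the middle of a phase. This is handled by the first case, which pays $C_1$ outright. Note also that the bound on $\mathbb{E}[Y_t]$ only uses the subtree occupied at the start of the phase, so a mid-phase change does not affect Step 2.
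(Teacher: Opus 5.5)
Your proof is correct and follows the paper's argument essentially step for step. You use the same two-case bound showing that phase $t$ costs at least $C_1$ whenever $S_t\notin T^1_{t-1}$, and the same conditional-uniformity bound $\Pr[S_t\notin T^1_{t-1}]\ge(\lfloor A_{1,m}/2\rfloor-t+1)/(A_{1,m}-t+1)$, summed via \cref{prop:alg}; your handling of the stay-inside case and of mid-phase subtree changes is slightly more explicit than the paper's.
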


\begin{proof}
    We first show that: the cost incurred during the whole phase $t$ is at least $C_{1}$ if $T^{1}_{t-1}$ does not contain $S_{t}$. If $S_{t}\notin T^{1}_{t-1}$, either the algorithm moves to another $1$-level subtree that contains $S_{t}$ as a leaf node, or the algorithm needs to move at least once during every loop of $\brho(S_{t})$ since there is no state in $T^{1}_{t-1}$ that can satisfy all request types in $S_{t}$. In either way, the cost is at least $C_{1}$.

Now we consider the probability of the event that $T^{1}_{t-1}$ does not contain $S_{t}$. Similar to the proof of Theorem~\ref{level1}, at every phase $1\leq t\leq \left\lfloor {A_{1,m}}/{2}\right\rfloor$, the probability that $T^{1}_{t-1}$ does not contain a node labeled $S_{t}$ is at least ${(\lfloor A_{1,m}/2 \rfloor-t+1)}/{ (A_{1,m}-t+1)}$ because $T^{1}_{t-1}$ contains exactly $\left\lceil {A_{1,m}}/{2}\right\rceil$ different leaf labels. So, \cref{prop:alg} implies that  the total expected cost incurred during the first $\left\lfloor {A_{1,m}}/{2}\right\rfloor$ phases is $\Omega(A_{1,m}C_{1})$.
\end{proof}

\begin{lemma}
    For every $C_1$, the optimal offline cost is at most $2C_{1}+A_{1,m}$ no matter where the initial state is.
\end{lemma}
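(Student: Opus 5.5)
We construct an explicit offline solution that mimics the uniform-metric strategy of \cref{level1} one level up. The phases use the labels $S_1,\ldots,S_{A_{1,m}}$, a permutation of $\cP_0$, where $A_{1,m}=|\cP_0|$. Set $h=\lceil A_{1,m}/2\rceil$. Let $P_1=\{S_1,\ldots,S_h\}$ and $P_2=\{S_{\lfloor A_{1,m}/2\rfloor+1},\ldots,S_{A_{1,m}}\}$. Both have size exactly $\lceil|\cP_0|/2\rceil$, so both belong to $\cP_1$. By construction of $\cT_{2,m}$, the root therefore has a level-$1$ child labeled $P_1$ and a level-$1$ child labeled $P_2$. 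Every $S_t$ with $t\le h$ appears as a leaf label in the first subtree, and every $S_t$ with $t\ge \lfloor A_{1,m}/2\rfloor+1$ appears in the second. The two index ranges together cover all $t$.

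The offline solution has three parts.
\begin{itemize}
\item First it moves from the initial state to the leaf labeled $S_1$ inside subtree $P_1$, at cost at most $C_1$.
\item During phases $1,\ldots,h$ it stays in subtree $P_1$. At the start of phase $t$ it moves to the leaf labeled $S_t$, at cost at most $1$ since the two leaves share a level-$1$ ancestor. It then remains there for the whole phase, which costs nothing because that leaf's label is exactly $S_t$ and so serves every request in $\brho(S_t)$.
\item Before phase $h+1$ it moves once to subtree $P_2$, at cost at most $C_1$, and handles the remaining phases the same way inside $P_2$.
\end{itemize}

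Adding up: the two inter-subtree moves (the initial one and the switch) cost at most $2C_1$, and there is at most one within-subtree move per phase. The within-subtree moves are at most $A_{1,m}$ in number, or $A_{1,m}-1$ after merging the first move into the initial one, each of cost $1$. The total is therefore at most $2C_1+A_{1,m}$, whatever the initial state.

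The only step needing care is checking that $P_1$ and $P_2$ really are valid level-$1$ labels, i.e.\ have size exactly $\lceil A_{1,m}/2\rceil$, and that the two halves cover every phase when $A_{1,m}$ is odd. This is handled by letting the halves overlap at index $h$ exactly as in \cref{level1}. Beyond that the argument is routine bookkeeping of distances.
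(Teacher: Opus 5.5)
Your proof is correct and matches the paper's: serve the first half of the phases inside the level-$1$ subtree labeled $\{S_1,\ldots,S_{\lceil A_{1,m}/2\rceil}\}$ and the rest inside the subtree labeled $\{S_{\lfloor A_{1,m}/2\rfloor+1},\ldots,S_{A_{1,m}}\}$, paying $2C_1$ for the two inter-subtree moves and at most $A_{1,m}$ for unit-cost moves within subtrees. The only difference is that you switch subtrees after phase $\lceil A_{1,m}/2\rceil$ instead of after phase $\lfloor A_{1,m}/2\rfloor$, which is immaterial.
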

\begin{proof}
    Consider the following solution. It moves to the subtree labeled $\{S_{1},\ldots,S_{\left\lceil A_{1,m}/2\right\rceil}\}$ and for phases $1 \leq t \leq \left\lfloor A_{1,m}/2\right\rfloor$, it serves the repetitions of the subsequence $\brho(S_t)$ at the node labeled $S_t$ in the subtree. Then, it moves to the subtree labeled $\{S_{\lfloor A_{1,m}/2\rfloor+1},\ldots,S_{A_{1,m}}\}$ and for $\left\lfloor A_{1,m}/2\right\rfloor < t \leq A_{1,m}$, serves the repetitions of the subsequence $\brho(S_t)$ at the node labeled $S_t$ in the subtree. It incurs a cost of at most $2C_1$ for movement between different $1$-level subtrees and at most $A_{1,m}$ for movement within each of the two $1$-level subtrees. Thus, the total cost is at most $2C_1 + A_{1,m}$.  
\end{proof}

Since the two lemmata that we have just proven hold for every value $C_{1}$, we can set it to be some value such that $C_{1}=\omega (A_{1,m})$ with both values regarded as functions of $m$, and the desired lower bound immediately follows.

\begin{theorem}\label{lower2}
    Every randomized algorithm has a worst case competitive ratio $\Omega(A_{1,m})$ on some instances of $2$-level HSTs.
\end{theorem}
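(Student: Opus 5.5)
We combine the two preceding lemmata with Yao's minimax principle on the $2$-level HST $\cT_{2,m}$. The hard distribution is the one just described. The adversary samples a uniform random permutation $\{S_t\}$ of $\cP_0$ and, in phase $t$, repeats $\brho(S_t)$ exactly $C_1$ times. One such \emph{round} has $A_{1,m}$ phases. Rounds are repeated independently, each with a fresh random permutation, so that the total optimal cost can be made arbitrarily large.

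Let $r$ denote the number of rounds.

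\textbf{Algorithm's cost.} At the start of every round the algorithm sits in some $1$-level subtree. The proof of \cref{level2} never used any assumption on the initial position: it conditioned only on $T^1_{t-1}$ and on the fresh randomness of $S_t$. Hence, by \cref{level2} and \cref{prop:alg}, every deterministic algorithm pays at least $\kappa' A_{1,m}C_1$ in expectation in each round, for some universal constant $\kappa'>0$. By linearity of expectation, its total expected cost over $r$ rounds is at least $r\kappa' A_{1,m}C_1$.

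\textbf{Optimal cost.} The preceding lemma bounds the offline optimum within one round by $2C_1+A_{1,m}$, from any starting state. Concatenating these per-round solutions shows that the optimum over $r$ rounds is at most $r(2C_1+A_{1,m})$.

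\textbf{The ratio.} Since the inequality $\mathbb{E}[\cost_{\mathcal A}]\le c\cdot\cost^*+\beta$ must hold for every input, it must also hold in expectation over our distribution. This gives
\[
r\kappa' A_{1,m}C_1\le c\, r(2C_1+A_{1,m})+\beta .
\]
Letting $r\to\infty$ eliminates the additive constant $\beta$, which may depend on the metric but not on $r$. We then choose $C_1$ to be an integer with $C_1\ge A_{1,m}$; for instance, $C_1=A_{1,m}$ already suffices. Then $2C_1+A_{1,m}\le 3C_1$, so
\[
c\ge \frac{\kappa' A_{1,m}}{3}=\Omega(A_{1,m}).
\]
By Yao's principle, this bound on deterministic algorithms against the distribution transfers to every randomized algorithm. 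The instance lives on the fixed $2$-level HST $\cT_{2,m}$, which has leaf distance $1$ and inter-subtree distance $C_1$.

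The only delicate point is the additive constant $\beta$ in the definition of competitive ratio. We handle it by the repetition over rounds, and that repetition is valid only because both per-round bounds hold regardless of the starting state. The one thing to check is that \cref{level2} is genuinely independent of the starting position, which holds because its probability estimate depends only on $|T^1_{t-1}|$ being $\lceil A_{1,m}/2\rceil$ leaf labels.
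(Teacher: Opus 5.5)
Your proposal is correct and follows essentially the paper's route: the paper likewise combines \cref{level2} with the per-round optimal bound $2C_1+A_{1,m}$ under Yao's principle, and repeats rounds, a step it only states once at the start of the section. The only deviation is that you take $C_1=A_{1,m}$ (any integer $C_1\ge A_{1,m}$) instead of the paper's $C_1=\omega(A_{1,m})$, which is equally valid because $2C_1+A_{1,m}\le 3C_1$, and you spell out the conditioning across rounds and the limit $r\to\infty$ that removes the additive constant $\beta$.
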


\subsubsection{General Iterative Lower Bounds on Higher Levels}
\label{sec:generalhst}
We now prove the lower bound on $c_{k,m}$ for $k \geq 3$ expressed with the iterative sequence $\{A_{k,m}\}_{k\geq 1}$ defined in \cref{eq:iter} at the beginning of \cref{iterativesection}.

\begin{theorem}
\phantomsection\label{lowerboundG}
    $c_{k,m}=\Omega (A_{k-1,m})$ for every $k\geq 3$.
\end{theorem}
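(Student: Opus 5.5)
We prove the bound by induction on $k$, generalizing the $2$-level argument (\cref{level2} together with the offline-cost lemma) to the tree $\cT_{k,m}$. We build the adversarial sequence $\sigma_{\ell}(P)$ for every label $P\in\cP_{\ell-1}$, viewed as a request sequence aimed at an $\ell$-level subtree with that label. For $\ell=1$ and $P=S\in\cP_0$, set $\sigma_1(S)=\brho(S)$. For $\ell\geq 2$ and $P\in\cP_{\ell-1}$, the adversary samples a uniformly random permutation $Q_1,\ldots,Q_{|P|}$ of $P$ (each $Q_t\in\cP_{\ell-2}$). It then runs $|P|$ phases; phase $t$ repeats the randomized sequence $\sigma_{\ell-1}(Q_t)$ $C_{\ell-1}/C_{\ell-2}$ times, with fresh randomness in each repetition. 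The top-level instance on $\cT_{k,m}$ uses a uniformly random permutation of all of $\cP_{k-2}$, i.e.\ $A_{k-1,m}$ phases. Phase $t$ consists of repetitions of $\sigma_{k-1}(Q_t)$, and the number of repetitions is chosen so that the total weight of a phase is $\Theta(C_{k-1})$.

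\emph{Step 1 (offline cost).} We show by induction that $\sigma_\ell(P)$ can be served inside any $\ell$-level subtree labeled $P'\ni$ (a child structure containing the needed labels) at cost $O(C_{\ell-1})$ plus lower-order terms. At top level, the offline solution moves into the $(k-1)$-subtree labeled $\{Q_1,\ldots,Q_{\lceil A_{k-1,m}/2\rceil}\}$, serves the first half of the phases there, and then moves to the subtree labeled by the remaining $Q_t$'s. Serving phase $t$ inside the child labeled $Q_t$ costs, by induction, $O(1)\cdot C_{k-2}$ per repetition block. Choosing $C_\ell/C_{\ell-1}$ large (e.g.\ $\omega(A_{\ell,m})$, as in the $k=2$ case) makes all lower-level movement costs dominated. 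The total is then $2C_{k-1}+A_{k-1,m}\cdot O(C_{k-2}\cdot\text{const})=O(C_{k-1})$.

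\emph{Step 2 (algorithm cost).} Let $T_{t-1}$ be the label of the $(k-1)$-level subtree the algorithm occupies at the start of phase $t$. If $Q_t\notin T_{t-1}$, the algorithm must either pay $C_{k-1}$ to leave, or serve every repetition of $\sigma_{k-1}(Q_t)$ inside a subtree that has no child labeled $Q_t$. The main obstacle is to show that the second option forces cost $\Omega(C_{k-2})$ per repetition. We would handle this with a strengthened inductive hypothesis, applied to the label at each level: if a $(k-1)$-subtree lacks $Q_t$, then every child $Q'$ of it differs from $Q_t$ in some element. A random repetition of $\sigma_{k-1}(Q_t)$ contains phases demanding a $\cP_{k-3}$-label in $Q_t\setminus Q'$, and at level $1$ a request type in $S\setminus S'$. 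This forces at least one move of cost $\ge C_{k-2}$ per repetition, or else an escape at a higher level costing at least as much. Summing gives cost $\Omega(C_{k-1})$ for the phase. As in \cref{level2}, $\Pr[Q_t\notin T_{t-1}]\ge(\lfloor A_{k-1,m}/2\rfloor-t+1)/(A_{k-1,m}-t+1)$ for $t\le\lfloor A_{k-1,m}/2\rfloor$, because $T_{t-1}$ has exactly $\lceil A_{k-1,m}/2\rceil$ labels by the Observation and the unseen $Q_t$ is uniform. Then \cref{prop:alg} yields expected cost $\Omega(A_{k-1,m}C_{k-1})$.

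\emph{Step 3 (conclusion).} Dividing the two bounds and invoking Yao's principle gives $c_{k,m}=\Omega(A_{k-1,m})$, with rounds repeated to make costs unbounded. The key risk lies in Step 2's deterministic per-repetition lower bound. If a clean "must move once per repetition" claim fails, we would instead make the cheap-move cost per repetition at least $C_{k-2}$ deterministically by forcing $\sigma_{k-1}(Q_t)$ to be a fixed (non-random) canonical sequence in all lower levels. In that version only the top permutation is random, and the $k=2$ argument that no single node serves all of $Q_t$ lifts directly level by level.
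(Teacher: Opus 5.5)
Your architecture matches the paper's: a uniformly random order of $\cP_{k-2}$ at the top, $C_{k-1}/C_{k-2}$ repetitions of the lower-level sequence per phase, a penalty per repetition when the current $(k-1)$-level subtree lacks $Q_t$, then \cref{prop:alg} and Yao. Your fallback (fixed lower-level sequences) is exactly the paper's instance. The genuine gap is in Step~1 and the choice of the $C_\ell$'s. The offline solution pays for serving one copy of $\sigma_{k-1}(Q_t)$ once per repetition, i.e.\ $C_{k-1}/C_{k-2}$ times in each of the $A_{k-1,m}$ phases. Your inductive hypothesis only bounds one copy by $O(C_{k-2})$. That bounds a phase by $O(C_{k-1})$ and the offline total by $O(A_{k-1,m}C_{k-1})$, the same order as the algorithm's cost. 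Your total $2C_{k-1}+A_{k-1,m}\cdot O(C_{k-2})$ silently drops the factor $C_{k-1}/C_{k-2}$. What you need is that one copy costs $O(C_{k-2}/A_{k-1,m})$, and recursively $B_\ell=O\bigl(C_\ell/\prod_{j=\ell+1}^{k-1}A_{j,m}\bigr)$; this is what the paper's $B_\ell$ recursion and \cref{Bk} track. Lower-level serving costs get multiplied by the repetitions and phases at every level above, so $C_\ell/C_{\ell-1}$ must dominate $\prod_{j=\ell}^{k-1}A_{j,m}$, not just $A_{\ell,m}$. Your suggested choice $C_\ell/C_{\ell-1}=\omega(A_{\ell,m})$ already fails at $k=3$. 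Every copy of $\brho(Q_t)$ runs through $\lceil A_{1,m}/2\rceil$ distinct leaf labels, so any solution pays roughly $\lceil A_{1,m}/2\rceil$ per copy (once $C_1\ge A_{1,m}$). Hence the offline optimum is of order $A_{2,m}A_{1,m}C_2/C_1$, and the instance certifies only a ratio of order $C_1/A_{1,m}$, which is $o(A_{2,m})$ unless $C_1=\Omega(A_{1,m}A_{2,m})$. The missing piece is the paper's choice $C_1=\omega(\prod_{j=1}^{k-1}A_{j,m})$ and $C_\ell=\omega\bigl((\prod_{j=\ell}^{k-1}A_{j,m})C_{\ell-1}\bigr)$, together with the induction showing $C_{k-1}=\omega(B_{k-1})$.

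Step~2 also needs repair, and the fix is the paper's \cref{atleast}. Your claim that each repetition forces at least one move of cost $\ge C_{k-2}$, or else a higher-level escape, is false. At $k=3$, a server in a $1$-level subtree lacking some $S\in Q_t$ can stay there and pay $1$ on each of the $C_1$ copies of $\brho(S)$, never making a move of cost $C_1$. What is true, and what you need, is that the \emph{total} cost of each repetition is at least $C_{k-2}$. This is proved by induction on the level. Suppose the server starts outside every $\ell$-level subtree labeled $P$. Either it leaves its current $\ell$-level subtree (cost $\ge C_\ell$), or it faces $C_\ell/C_{\ell-1}$ consecutive copies of $\brho(p)$ for some $p\in P$ that is not a child label there, each costing $\ge C_{\ell-1}$ by induction. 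This holds for every trajectory and every choice of lower-level orderings. So randomizing the lower levels was never the obstacle. Fixing them, as in your fallback, is harmless, but "lifts level by level" has to be this cumulative-cost induction, not a single-expensive-move argument.
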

\noindent \cref{imposs} now follows from \cref{lowerboundG} and \cref{iterdiv}.

\subparagraph{Sequences of labels.} 

We have provided the construction of $\mathcal{T}_{k,m}$ for every $k$, and now we need to describe the adversary's strategy to generate a randomized input sequence on which no deterministic algorithm achieves competitive ratio better than $c_{k,m}$. Recall that every sequence in the support of the randomized input on $\mathcal{T}_{2,m}$ is a concatenation of sequences $\brho(S)$ for labels $S\in \mathcal{P}_{0}$. Now we want to generalize this concept, the sequence of a label, to every level-$\ell$ label (i.e. every element of $\mathcal{P}_{l}$).

 We define inductively on $\ell$ with $\ell = 0$ as the base case. For every $\ell\geq 1$ and every label $T^{\ell}\in \mathcal{P}_{\ell}$, the sequence $\brho(T^{\ell})$ is constructed in $|T^{\ell}|$ phases as follows. Let $T^{\ell-1}_1,\ldots,T^{\ell-1}_{|T^\ell|}$ be an arbitrary permutation of $T^\ell$. During each phase $1 \leq t \leq |T^\ell|$, the sequence $\brho(T^{\ell-1})$ is repeated consecutively for $\tfrac{C_{\ell}}{C_{\ell-1}}$ (we can choose values of $C_{\ell}$'s to make this amount always an integer at the end) times, and the entire sequence $\brho(T^{\ell})$ is a concatenation of those phases:
 \[
\underbrace{\brho(T^{\ell-1}_{1}), \ldots, \brho(T^{\ell-1}_{1})}_{\text{$\frac{C_{\ell}}{C_{\ell-1}}$ times}},
\underbrace{\brho(T^{\ell-1}_{2}), \ldots, \brho(T^{\ell-1}_{2})}_{\text{$\frac{C_{\ell}}{C_{\ell-1}}$ times}},\ldots,
\underbrace{\brho(T^{\ell-1}_{|T^{\ell}|}), \ldots, \brho(T^{\ell-1}_{|T^{\ell}|})}_{\text{$\frac{C_{\ell}}{C_{\ell-1}}$ times}}.
\]

\begin{lemma}\label{atleast}
Let $1\leq \ell\leq k-2$ and $P\in\mathcal{P}_{\ell}$. Consider a solution for $\brho(P)$ whose initial state is not in an $\ell$-level subtree with label $P$. Then, the solution incurs a cost of at least $C_\ell$ on $\brho(P)$.
\end{lemma}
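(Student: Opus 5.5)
Induction on $\ell$, with one key fact carrying the argument: no leaf outside an $\ell$-level subtree labeled $P$ can serve a single copy of $\brho(P)$ on its own, so a solution that starts outside such a subtree must either pay to enter one or keep moving between leaves throughout.

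\textbf{Base case $\ell=1$.} Here $P\in\cP_1$ is a collection of $\lceil m/2\rceil$-subsets, and $\brho(P)$ consists of $|P|$ phases; phase $t$ repeats $\brho(T^0_t)$ exactly $C_1/C_0=C_1$ times.

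Suppose the solution never reaches a leaf of a $1$-level subtree labeled $P$. Then there are two cases.
\begin{itemize}
\item If it ever moves between distinct $1$-level subtrees, it pays at least $C_1$, since any two leaves whose lowest common ancestor is at level at least $2$ are at distance at least $C_1$.
\item Otherwise it stays inside one $1$-level subtree $v$ with label $P(v)\neq P$. Since $|P(v)|=|P|$, some $S\in P$ is missing from $P(v)$. The leaves of $v$ carry labels from $P(v)$, all of size $\lceil m/2\rceil$, so no leaf of $v$ contains $S$. Hence during each of the $C_1$ repetitions of $\brho(S)$ the server must move at least once, at cost $C_0=1$ per move. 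This gives cost at least $C_1$.
\end{itemize}
If instead the solution does reach such a subtree, it had to leave its initial $1$-level subtree, which costs at least $C_1$.

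\textbf{Inductive step.} Assume the claim for $\ell-1\geq 1$, and consider $P\in\cP_\ell$. The same two cases apply.
\begin{itemize}
\item If the solution ever moves between $\ell$-level subtrees, it pays at least $C_\ell$.
\item Otherwise it stays in one $\ell$-level subtree $v$ with $P(v)\neq P$ throughout. Pick $Q\in P\setminus P(v)$, which exists because $|P(v)|=|P|$. Then $v$ has no child labeled $Q$, so at the start of each of the $C_\ell/C_{\ell-1}$ repetitions of $\brho(Q)$, the current state is not in an $(\ell-1)$-level subtree labeled $Q$. The inductive hypothesis, applied to each repetition with its own initial state, charges at least $C_{\ell-1}$ per repetition. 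Since the repetitions are disjoint time intervals, the total is at least $(C_\ell/C_{\ell-1})\cdot C_{\ell-1}=C_\ell$.
\end{itemize}

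\textbf{Main obstacle.} The delicate point is making the induction apply cleanly to each repetition. This needs the statement to hold for an arbitrary initial state that is not in a $Q$-labeled subtree, which is how it is phrased. We also need the cost charged inside distinct repetitions never to be double-counted, which holds because costs are summed over disjoint time windows.

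One more point needs checking: being confined to $v$ forces the state to lie outside every $Q$-labeled $(\ell-1)$-subtree. This holds because every $(\ell-1)$-level node in $v$ is a child of $v$ and carries a label from $P(v)$. Finally, $\ell\leq k-2$ guarantees that $\ell$-level subtrees are proper subtrees of the root, so moving between them is an actual movement of cost at least $C_\ell$.
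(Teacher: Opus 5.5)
Your proof is correct and follows the same induction as the paper. A label $Q\in P\setminus P(v)$ exists because $|P(v)|=|P|$, and it forces either a move between subtrees costing at least $C_\ell$, or a cost of at least $C_{\ell-1}$ in each of the $C_\ell/C_{\ell-1}$ disjoint repetitions of $\brho(Q)$ (via the inductive hypothesis, or one unit move in the base case); your remarks on equal label sizes and disjoint time windows make explicit points the paper leaves implicit.
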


\begin{proof}
    We prove by induction on $\ell$. The base case is when $\ell=1$. Fix some $P\in\mathcal{P}_{1}$. The sequence $\brho(P)$ consists of $C_{1}$ consecutive repetitions of the sequence $\brho(p)$ for every child $p\in P$. If the initial state is in a $1$-level subtree labeled ${T}^{1}\neq P$, then there exists some $p\in P$ such that $p \notin T^{1}$ and thus there is no leaf in the subtree labeled $p$. As a result, the algorithm either visits a different $1$-level subtree, or incurs cost $1$ during every repetition of $\brho(p)$, thus costing at least $C_{1}$.

    Suppose the claim holds for every sequence $\brho(p)$ such that $p\in\mathcal{P}_{\ell}$ ($1\leq \ell\leq k-3$). Now consider a sequence $\brho(P)$ for some $P\in\mathcal{P}_{\ell+1}$. For every $p \in P$, it contains $\frac{C_{\ell+1}}{C_{\ell}}$ consecutive repetitions of  $\brho(p)$. If the initial state is in an $(\ell+1)$-level subtree labeled $T^{\ell+1}\neq  P$, then there exists $p\in P\setminus T^{\ell+1}$ because $|T^{\ell+1}|=|P|$. 
If the algorithm moves to a different $(\ell+1)$-level subtree, it incurs a cost of $C_{\ell+1}$.     
Now suppose the algorithm does not move to a different $(\ell+1)$-level subtree. Since $p \notin T_{\ell+1}$, the current subtree does not have an $\ell$-level subtree labeled $p$. Thus at the start of each of the $C_{\ell+1}/C_\ell$ repetitions of $\brho(p)$, by the inductive hypothesis, it incurs a cost of at least $C_{\ell}$. The total cost is thus at least $C_{\ell+1}$. 
\end{proof}

We now define another iterative sequence $\{B_{\ell}\}_{1\leq\ell \leq k-1}$ and show that it can be used to express the upper bound of the optimal solution of the request sequences:
\begin{align*}
    B_{1}=\left\lceil \frac{A_{1,m}}{2}\right\rceil, \qquad
    B_{\ell}=\left\lceil\frac{A_{\ell,m}}{2}\right\rceil \left[(B_{\ell-1}+C_{\ell-2})\cdot\frac{C_{\ell}}{C_{\ell-1}}+C_{\ell-1}\right]  \quad \mbox{for $2\leq \ell\leq k-1$.}   
\end{align*}

\begin{lemma}\label{Bk}
    For every label $T^{\ell}$ in $\mathcal{P}_{\ell}$ and subtree labeled $T^\ell$ in $\mathcal{T}_{k,m}$, there exists a trajectory contained within the subtree that incurs cost of no more than $B_{\ell}$ to serve the sequence $\brho(T^{\ell})$, for every $1\leq \ell\leq k-1$.
\end{lemma}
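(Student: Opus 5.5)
Induction on $\ell$, building the trajectory for $\brho(T^{\ell})$ phase by phase from the trajectories the induction hypothesis gives for the $(\ell-1)$-level children.

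\textbf{Base case $\ell=1$.} Let $T^1\in\cP_1$. The sequence $\brho(T^1)$ has $|T^1|=\lceil A_{1,m}/2\rceil$ phases. In phase $t$ it repeats $\brho(T^0_t)$, where $T^0_t\in\cP_0$ is a $\lceil m/2\rceil$-subset of $[m]$ and $\brho(T^0_t)$ only requests types in $T^0_t$. The subtree labeled $T^1$ has a leaf labeled $T^0_t$, and that leaf serves every request of phase $t$. So the trajectory sits at that leaf throughout phase $t$. Moving between consecutive phases stays inside the $1$-level subtree and costs $C_0=1$, so there are at most $\lceil A_{1,m}/2\rceil=B_1$ moves of cost $1$. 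The problem says there is no initial state, so we may take the trajectory to start at the first leaf; then at most $\lceil A_{1,m}/2\rceil-1\le B_1$ moves are actually paid. In either reading the cost is at most $B_1$.

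\textbf{Inductive step, $\ell\ge2$.} Fix a subtree labeled $T^\ell$. The sequence $\brho(T^\ell)$ has $|T^\ell|=\lceil A_{\ell,m}/2\rceil$ phases, using the observation after the construction. Phase $t$ consists of $C_\ell/C_{\ell-1}$ repetitions of $\brho(T^{\ell-1}_t)$, and by construction the subtree has an $(\ell-1)$-level child labeled $T^{\ell-1}_t$. The trajectory does the following in phase $t$:
\begin{itemize}
\item It first moves into the child $T^{\ell-1}_t$. The two children share the level-$\ell$ node as lowest common ancestor, so this costs $C_{\ell-1}$.
\item For each repetition it runs the trajectory given by the induction hypothesis, which stays inside that child and costs at most $B_{\ell-1}$.
\item Between repetitions it returns from the final state of one run to the initial state of the next. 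Both states lie in the same $(\ell-1)$-level subtree, so this costs at most $C_{\ell-2}$.
\end{itemize}
This accounts for the per-repetition charge $B_{\ell-1}+C_{\ell-2}$ appearing in the recurrence. Each phase therefore costs at most $(B_{\ell-1}+C_{\ell-2})\,C_\ell/C_{\ell-1}+C_{\ell-1}$. Summing over the $\lceil A_{\ell,m}/2\rceil$ phases gives exactly $B_\ell$, and the whole trajectory stays inside the subtree labeled $T^\ell$.

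\textbf{Main obstacle.} The delicate step is the reset between repetitions, together with the bookkeeping of which subtree contains which. The hypothesis only controls a trajectory from its own chosen start state. Hence the cost of returning to that start must be bounded by the diameter $C_{\ell-2}$ of an $(\ell-1)$-level subtree. (For $\ell=2$ this is the leaf-to-leaf distance $C_0=1$ inside a $1$-level subtree.) One must also confirm two facts. First, every label in $T^\ell$ really occurs as a child of this particular node; this holds by construction, since the children are exactly the elements of $T^\ell$. Second, the subtrees labeled $T^{\ell-1}_t$ sitting under different parents are structurally identical, so the induction hypothesis applies to any of them; this holds because a subtree's structure is determined by its label alone.
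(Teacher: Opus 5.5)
Your proof is correct and takes essentially the same route as the paper. The base case walks through the leaves labeled $T^0_1,\ldots,T^0_{\lceil A_{1,m}/2\rceil}$ of the $1$-level subtree at unit cost each; the inductive step pays $C_{\ell-1}$ per phase to enter the child labeled $T^{\ell-1}_t$, plus $B_{\ell-1}+C_{\ell-2}$ per repetition, where $C_{\ell-2}$ is the reset to the inductive trajectory's start inside an $(\ell-1)$-level subtree (the paper charges this reset before every copy rather than only between copies, which changes nothing in the bound $B_\ell$).
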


\begin{proof}
We prove by induction on $\ell$. Consider the base case $\ell=1$.
    Fix some $T^{1}\in\mathcal{P}_{1}$. Order the elements $S_1, \ldots, S_{\left\lceil A_{1,m}/2\right\rceil}$ of $T$ in the same order as the phases in $\brho(T^{1})$. Consider the trajectory that first moves to the leaf node labeled $S_{1}$ in the $1$-level subtree labeled $T^{1}$ and then  moves to leaves $S_{2},\ldots,S_{\left\lceil A_{1,m}/2\right\rceil}$ one by one with cost $1$ each time since they are all in the same $1$-level subtree. The total cost is at most $\left\lceil\frac{A_{1,m}}{2}\right\rceil=B_{1}$.
    
    Now we consider $\brho(T^{\ell})$ for $2\leq \ell\leq k-1$, given that the property for $\ell-1$ has been proven. Suppose that $\brho(T^{\ell})$ is the concatenation of $\brho(T^{\ell-1}_{1}),\ldots,\brho(T^{\ell-1}_{\left\lceil A_{\ell,m}/2 \right\rceil})$ by the chronological order where each $\brho(T^{\ell-1}_{j})$ is repeated for $\frac{C_{\ell}}{C_{\ell-1}}$ times before proceeding to the next one. 
    Consider the trajectory that for each phase $1 \leq j \leq \lceil A_{\ell,m}/2\rceil$, consists of $\frac{C_\ell}{C_{\ell-1}}$ copies of the trajectory obtained by applying the inductive hypothesis to $\brho(T^{\ell-1}_j)$. In each phase $j$, for each copy of $\brho(T^{\ell-1}_j)$, the trajectory incurs a cost of at most $C_{\ell-2}$ to move to its initial state within the subtree with label $T^{\ell-1}_{j}$ and then a cost of $B_{\ell-1}$ to serve $\brho(T^{\ell-1}_j)$ within the same subtree. Thus, for each phase $j$, the trajectory's cost is at most $\frac{C_{\ell}}{C_{\ell-1}}(B_{\ell-1}+C_{\ell-2})$ plus at most $C_{\ell-1}$ to move to an $(\ell-1)$-level subtree labeled $T^{\ell-1}_j$ at the start of the phase. As there are $\left\lceil{A_{\ell,m}}/{2}\right\rceil$ phases, the total cost of the trajectory is bounded by $B_\ell$, as desired.
\end{proof}

\subparagraph{Adversary's strategy and analysis} We are finally ready to describe the adversary's strategy on $\mathcal{T}_{k,m}$. It first  samples a uniformly random permutation  $\{T^{k-2}_{t}\}_{t=1}^{A_{k-1,m}}$ of $\mathcal{P}_{k-2}$ (note that $|\mathcal{P}_{k-2}|=A_{k-1,m}$), and then proceeds in $A_{k-1,m}$ phases: in each phase $t$, $\brho(T^{k-2}_{t})$ is repeated consecutively for $\frac{C_{k-1}}{C_{k-2}}$ times, and the whole input is the concatenation of all $A_{k-1,m}$ phases.

\begin{lemma}
    The optimal cost is no more than $2(C_{k-1}+B_{k-1})$ regardless of the initial state.
\end{lemma}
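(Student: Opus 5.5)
We construct an explicit offline trajectory that mirrors the one used for $\cT_{2,m}$: split the $A_{k-1,m}$ phases into two halves and serve each half inside a single $(k-1)$-level subtree. Let $\{T^{k-2}_t\}_{t=1}^{A_{k-1,m}}$ be the sampled permutation of $\cP_{k-2}$. Define
\[
P_1=\{T^{k-2}_1,\ldots,T^{k-2}_{\lceil A_{k-1,m}/2\rceil}\}, \qquad P_2=\{T^{k-2}_{\lfloor A_{k-1,m}/2\rfloor+1},\ldots,T^{k-2}_{A_{k-1,m}}\}.
\]
Both sets have size $\lceil A_{k-1,m}/2\rceil$, so both lie in $\cP_{k-1}$. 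By the construction of $\cT_{k,m}$, the root therefore has a child labeled $P_1$ and a child labeled $P_2$.

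The solution first moves into the $(k-1)$-level subtree labeled $P_1$, paying at most $C_{k-1}$. It serves phases $1,\ldots,\lfloor A_{k-1,m}/2\rfloor$ there. It then moves to the subtree labeled $P_2$, again paying at most $C_{k-1}$, and serves the remaining phases there. The inter-subtree movement thus costs at most $2C_{k-1}$.

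Within the subtree labeled $P_1$, the phases $1,\ldots,\lfloor A_{k-1,m}/2\rfloor$ together are exactly a prefix of a valid sequence $\brho(P_1)$. Here we order the elements of $P_1$ as $T^{k-2}_1,T^{k-2}_2,\ldots$; this is allowed because the permutation in the definition of $\brho(\cdot)$ is arbitrary. The repetition count $C_{k-1}/C_{k-2}$ also matches that definition. We then apply \cref{Bk} with $\ell=k-1$ to the subtree labeled $P_1$. This gives a trajectory confined to that subtree that serves all of $\brho(P_1)$ at cost at most $B_{k-1}$. Its restriction to the prefix costs no more, since cost is monotone under truncation.

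For the second half, phases $\lfloor A_{k-1,m}/2\rfloor+1,\ldots,A_{k-1,m}$ are exactly $\brho(P_2)$ with the elements ordered as they appear. Applying \cref{Bk} again gives cost at most $B_{k-1}$ inside the subtree labeled $P_2$.

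One subtlety needs care. The trajectory from \cref{Bk} starts at some specific state inside the subtree, whereas we enter the subtree at an arbitrary point. Moving to that starting state costs at most the diameter of a $(k-1)$-level subtree, namely $C_{k-2}\le C_{k-1}$. This is why we should enter the subtree directly at the trajectory's starting leaf: the entry move then costs at most $C_{k-1}$ regardless of where the server comes from, whether the initial state or the other subtree.

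Summing the pieces gives at most $C_{k-1}+B_{k-1}$ for each half, hence $2(C_{k-1}+B_{k-1})$ in total, independent of the initial state. The main point to verify carefully is that the phase ordering within each half is a legitimate instance of $\brho(P_i)$, so that \cref{Bk} applies verbatim.
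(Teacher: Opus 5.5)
Your proof is correct and follows the paper's argument. You split the phases into the two halves $P_1$ and $P_2$, serve each half inside the corresponding $(k-1)$-level subtree at cost at most $B_{k-1}$ via \cref{Bk}, and pay at most $C_{k-1}$ for each of the two entries. You also make explicit three details the paper's short proof leaves implicit: the overlapping middle phase (handled by truncating $\brho(P_1)$), the freedom in ordering the phases, and the entry point into each subtree.
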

\begin{proof}
The whole sequence can be regarded as the concatenation of the sequences $\brho(T^{k-1})$'s of two $(k-1)$-level subtrees: 
\begin{equation*}
    \{T^{k-2}_{1},\ldots,T^{k-2}_{\lceil A_{k-1,m}/2\rceil }\}\text{ and } \{T^{k-2}_{\lfloor A_{k-1,m}/2\rfloor+1 },\ldots,T^{k-2}_{ A_{k-1,m} }\}.
\end{equation*}
The repetitions of $\brho(T^{k-2}_{\lfloor A_{k-1,m}/2\rfloor+1 })$ appear in both of the $(k-1)$-level subtrees' sequence when $m$ is odd, but it does not affect our conclusion since we are proving an upper bound. As a result, the optimal solution just pays the cost at most $B_{k-1}$ in each $T^{k-1}$ (Lemma~\ref{Bk}) plus the cost at most $2C_{k-1}$ to visit two $(k-1)$-level subtrees.
\end{proof}

\begin{lemma}
    The expected cost of every deterministic algorithm to serve this randomized input on $\mathcal{T}_{k,m}$ is $\Omega(A_{k-1,m}C_{k-1})$.
\end{lemma}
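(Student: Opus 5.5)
The plan follows the proof of \cref{level2}, with \cref{atleast} taking the place of the direct "one move per loop" argument used there. For each phase $1\leq t\leq A_{k-1,m}$, let $T^{k-1}_{t-1}\in\mathcal{P}_{k-1}$ be the label of the $(k-1)$-level subtree containing the algorithm's state right before phase $t$ begins.

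\textbf{Step 1: a bad phase costs at least $C_{k-1}$.} Suppose $T^{k-2}_t\notin T^{k-1}_{t-1}$. We split into two cases.

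\begin{itemize}
\item If the algorithm leaves its $(k-1)$-level subtree at some point during phase $t$, it pays at least $C_{k-1}$, since that is the distance between leaves in different $(k-1)$-level subtrees.
\item Otherwise it stays in a $(k-1)$-level subtree with no $(k-2)$-level child labeled $T^{k-2}_t$. Then, at the start of each of the $C_{k-1}/C_{k-2}$ repetitions of $\brho(T^{k-2}_t)$, its state is not in a $(k-2)$-level subtree labeled $T^{k-2}_t$. By \cref{atleast} with $\ell=k-2$ (valid since $1\leq k-2\leq k-2$ for $k\geq 3$), each repetition costs at least $C_{k-2}$, so the phase costs at least $C_{k-1}$.
\end{itemize}

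Costs of distinct phases are disjoint, so we can sum them.

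\textbf{Step 2: a bad phase is likely early on.} Each $(k-1)$-level label $T^{k-1}_{t-1}$ is a set of exactly $\lceil A_{k-1,m}/2\rceil$ labels from $\mathcal{P}_{k-2}$. The labels $T^{k-2}_1,\ldots,T^{k-2}_{t-1}$ have already been revealed, and $T^{k-2}_t$ is uniform over the remaining $A_{k-1,m}-t+1$ labels of $\mathcal{P}_{k-2}$, independently of the algorithm's deterministic state (which depends only on past requests). At most $\lceil A_{k-1,m}/2\rceil$ of the remaining labels lie in $T^{k-1}_{t-1}$. Hence, conditioned on the history,
\[
\Pr\bigl[T^{k-2}_t\notin T^{k-1}_{t-1}\bigr]\;\geq\;\frac{\lfloor A_{k-1,m}/2\rfloor-t+1}{A_{k-1,m}-t+1}
\qquad\text{for } t\leq \lfloor A_{k-1,m}/2\rfloor .
\]

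\textbf{Step 3: sum.} By linearity of expectation and \cref{prop:alg} with $M=A_{k-1,m}\geq 2$, the expected total cost is at least
\[
C_{k-1}\sum_{t=1}^{\lfloor A_{k-1,m}/2\rfloor}\frac{\lfloor A_{k-1,m}/2\rfloor-t+1}{A_{k-1,m}-t+1}\;\geq\;\kappa\, A_{k-1,m}\,C_{k-1}.
\]

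\textbf{Main obstacle.} The delicate point is the second case of Step 1. We must check that the hypothesis of \cref{atleast} holds at the start of every repetition, not only the first. It does, because in that case the algorithm never enters any $(k-2)$-level subtree labeled $T^{k-2}_t$ during the phase: the subtree it stays in has no such child. We must also check that the costs attributed to different repetitions are disjoint, which holds because the repetitions are disjoint request intervals.
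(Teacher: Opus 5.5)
Your proposal is correct and follows essentially the same argument as the paper. Both use the per-phase probability bound $(\lfloor A_{k-1,m}/2\rfloor-t+1)/(A_{k-1,m}-t+1)$ and the same dichotomy: either leave the $(k-1)$-level subtree at cost $C_{k-1}$, or pay $C_{k-2}$ on each of the $C_{k-1}/C_{k-2}$ repetitions via \cref{atleast}, then sum with \cref{prop:alg}; you just spell out the conditioning on the history and why \cref{atleast} applies at every repetition more carefully than the paper does.
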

\begin{proof}
The proof is similar those of
\cref{level1} and \cref{level2}. Fix a deterministic algorithm.
During every phase $1\leq t\leq \left\lfloor {A_{k-1,m}}/{2}\right\rfloor$ when some $\brho(T^{k-2}_{t})$ is repeated for $\frac{C_{k-1}}{C_{k-2}}$ times, the probability that  $\brho(T^{k-2}_{t})$ is not in the current $(k-1)$-level subtree containing the algorithm's server has  probability is at least $\frac{\lfloor A_{k-1,m}/2\rfloor -t+1}{A_{k-1,m}-t+1}$. When this happens, the algorithm only has two choices: either move to a new $(k-1)$-level subtree with cost $C_{k-1}$, or incur cost of at least $C_{k-2}$ during each repetition of $\brho(T^{k-2}_{t})$ according to Lemma~\ref{atleast} since the initial state cannot be in a $(k-2)$-level subtree with label $T^{k-2}_{t}$.
\end{proof}

It remains to show that there exist possible values of $C_{\ell}$'s and $B_{\ell}$'s for $1\leq\ell\leq k-1$ such that $C_{k-1}=\omega(B_{k-1})$ as $m$ tends to infinity\footnote{Here both $C_{k-1}$ and $B_{k-1}$ are functions of $m$. When we consider both $m$ and $k$ as constants, the lower bound can be achieved by setting the ratio $C_{\ell}/C_{\ell-1}$ sufficiently large for every $1\leq\ell\leq k-1$ and the proof idea is similar.}. This would show that the randomized competitive ratio on $\mathcal{T}_{k,m}$ is at least $\Omega(\frac{A_{k-1,m}C_{k-1}}{C_{k-1}+B_{k-1}})=\Omega(A_{k-1,m})$ as a function of $m$.

Let $C_{1}=\omega(\prod_{j=1}^{k-1}A_{j,m})$ and $C_{\ell}=\omega(  (\prod_{j=\ell}^{k-1}A_{j,m})C_{\ell-1})$ for $2\leq \ell\leq k-1$. We now prove that such a sequence satisfies the desired property.

\begin{lemma}
    $C_{k-1}=\omega(B_{k-1})$.
\end{lemma}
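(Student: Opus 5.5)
The plan is to prove by induction on $\ell$ a stronger, uniform statement, and then specialise it to $\ell=k-1$. Writing $A_j$ for $A_{j,m}$ and $\Pi_\ell:=\prod_{j=1}^{\ell}A_j$, the claim is
\[
B_\ell = O\!\left(\Pi_\ell\, C_{\ell-1}\right) \qquad\text{for every } 1\le \ell\le k-1,
\]
with the convention $C_0=1$ and with the hidden constant depending only on $k$. Given this, the case $\ell=k-1$ reads $B_{k-1}=O(\Pi_{k-1}C_{k-2})$. By the choice $C_{k-1}=\omega(A_{k-1}C_{k-2})$, and more generally $C_\ell=\omega\bigl((\prod_{j=\ell}^{k-1}A_j)C_{\ell-1}\bigr)$, we would like to conclude $C_{k-1}=\omega(B_{k-1})$. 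But this needs $C_{k-1}/C_{k-2}=\omega(\Pi_{k-1})$, while the given definition only guarantees the weaker $\omega(A_{k-1})$. So the bound above is not quite enough on its own, and the induction has to be sharpened.

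The sharpening is to bound $B_\ell$ relative to $C_\ell$ rather than $C_{\ell-1}$. Expanding the recurrence,
\[
B_\ell \le A_\ell\left(B_{\ell-1}\frac{C_\ell}{C_{\ell-1}} + C_{\ell-2}\frac{C_\ell}{C_{\ell-1}} + C_{\ell-1}\right).
\]
Dividing by $C_\ell$ and setting $\beta_\ell := B_\ell/C_\ell$ gives
\[
\beta_\ell \le A_\ell\,\beta_{\ell-1} + A_\ell\,\frac{C_{\ell-2}}{C_{\ell-1}} + A_\ell\,\frac{C_{\ell-1}}{C_\ell},
\]
with base case $\beta_1 = \lceil A_1/2\rceil / C_1$. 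Unrolling this linear recurrence from $\ell$ down to $1$ yields
\[
\beta_{k-1} \le \Big(\prod_{j=2}^{k-1}A_j\Big)\beta_1 + \sum_{\ell=2}^{k-1}\Big(\prod_{j=\ell}^{k-1}A_j\Big)\left(\frac{C_{\ell-2}}{C_{\ell-1}}+\frac{C_{\ell-1}}{C_\ell}\right).
\]
The goal is to show that every term on the right is $o(1)$ as $m\to\infty$, which gives $B_{k-1}=o(C_{k-1})$.

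The terms are handled in three groups.
\begin{itemize}
\item The first term is $O(\Pi_{k-1}/C_1)=o(1)$, by the choice $C_1=\omega(\Pi_{k-1})$.
\item A term $\bigl(\prod_{j=\ell}^{k-1}A_j\bigr)\,C_{\ell-1}/C_\ell$ is $o(1)$ directly from $C_\ell=\omega\bigl((\prod_{j=\ell}^{k-1}A_j)C_{\ell-1}\bigr)$.
\item A term $\bigl(\prod_{j=\ell}^{k-1}A_j\bigr)\,C_{\ell-2}/C_{\ell-1}$ is bounded using the defining condition for $C_{\ell-1}$, which involves $\prod_{j=\ell-1}^{k-1}A_j\ge\prod_{j=\ell}^{k-1}A_j$, so it is $o(1)$ as well. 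When $\ell=2$ this uses $C_0=1$ together with $C_1=\omega(\Pi_{k-1})$.
\end{itemize}
Since $k$ is fixed, the sum has a bounded number of terms, each $o(1)$, so $\beta_{k-1}=o(1)$.

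The main obstacle is the bookkeeping that ensures the products of $A_j$ picked up during the unrolling never exceed the product built into the growth condition for the relevant $C$. The suffix products $\prod_{j=\ell}^{k-1}$ in the definition of $C_\ell$ are exactly what make this work, and the one place to watch is the index shift in the $C_{\ell-2}/C_{\ell-1}$ terms. A smaller point is replacing the ceilings $\lceil A_\ell/2\rceil$ by $A_\ell$, which only loosens the bounds and so is harmless.
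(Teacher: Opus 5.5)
Your argument is correct and is essentially the paper's proof. The paper proves by induction that $C_\ell=\omega\bigl((\prod_{j=\ell+1}^{k-1}A_{j,m})B_\ell\bigr)$, i.e.\ that $(\prod_{j=\ell+1}^{k-1}A_{j,m})\,B_\ell/C_\ell=o(1)$, using the same growth conditions on the $C_\ell$ that you use, while you unroll that recurrence for $\beta_\ell=B_\ell/C_\ell$ into closed form.

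One minor remark: your discarded opening claim $B_\ell=O(\Pi_\ell C_{\ell-1})$ is not merely too weak; it does not follow from the hypotheses at all. $B_\ell$ contains the term $\lceil A_\ell/2\rceil B_{\ell-1}C_\ell/C_{\ell-1}$, and $C_\ell/C_{\ell-1}$ is only bounded from below. Your final argument does not rely on it.
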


\begin{proof}
    We prove that $C_{\ell}=\omega( (\prod_{j=\ell+1}^{k-1}A_{j,m}) B_{\ell})$ as $m$ tends to infinity for every $1\leq \ell\leq k-1$ inductively, where $\prod_{j=k}^{k-1}A_{j,m}$ is defined to be $1$.

    Clearly the result holds for $\ell=1$ because $B_{1}=\Theta (A_{1,m})$. With the inductive hypothesis that $C_{\ell}=\omega(  (\prod_{j=\ell+1}^{k-1}A_{j,m}) B_{\ell})$ for some $1\leq \ell\leq k-2$, then in order to prove that $C_{\ell+1}=\omega( (\prod_{j=\ell+2}^{k-1}A_{j,m}) B_{\ell+1})$, according to the definition that $B_{\ell+1}=\left\lceil\frac{A_{\ell+1,m}}{2}\right\rceil [(B_{\ell}+C_{\ell-1})\cdot\frac{C_{\ell+1}}{C_{\ell}}+C_{\ell}]$, it suffices to show that the following two statements both hold:
    \begin{equation}
        C_{\ell+1}=\omega( A_{\ell+1,m}\cdot (\prod_{j=\ell+2}^{k-1}A_{j,m})C_{\ell})=\omega(\prod_{j=\ell+1}^{k-1}A_{j,m}C_{\ell}),
    \end{equation}
    which is exactly our definition of $C_{\ell+1}$, and
    \begin{equation}\label{6}
        C_{\ell}=\omega( (\prod_{j=\ell+1}^{k-1}A_{j,m})(B_{\ell}+C_{\ell-1})),
    \end{equation}
while (\ref{6}) can be further reduced to that the following two formulae $C_{\ell}=\omega( (\prod_{j=\ell+1}^{k-1}A_{j,m})B_{\ell})$ and $C_{\ell}=\omega( (\prod_{j=\ell+1}^{k-1}A_{j,m})C_{\ell-1})$ both hold, where the first one holds due to the inductive hypothesis and the second holds due to the definition of $C_{\ell}$.
\end{proof}

\noindent This completes the proof of Theorem~\ref{lowerboundG}.

\begin{remark}
    Our iterative sequence of lower bounds on $k$-level HSTs marks an improvement from the one provided in Section 4.2 of~\cite{DBLP:conf/approx/BubeckR20} in the sense that: given every fixed $k\geq 2$, the lower bound provided by our sequence is asymptotically tighter than theirs when the parameter $m$ tends to infinity. To formalize, consider their iterative sequence $\{\hat{c}_{k,m}\}_{k\geq 1}$ defined as the following: $\hat{c}_{1,m}=\left\lfloor m/2\right\rfloor $ and $\hat{c}_{k,m}=2^{\hat{c}_{k-1,m}-1}$. We can use inductive arguments on $k$ to show that, $A_{k,m}=\omega(\hat{c}_{k,m})$ for every $k\geq 2$: it is clear that the statement holds for $k=2$, since $A_{2,m}=\binom{m}{\lceil m/2\rceil}=\Theta (2^{m}/ \sqrt{m})$ and $\hat{c}_{2,m}=\Theta(2^{m/2})$; suppose that $\lim_{m\to \infty}A_{k-1,m}/\hat{c}_{k-1,m}=\infty$, then we show that $\lim_{m\to \infty}\log({A_{k,m}}/{\hat{c}_{k,m}})=\infty$ and hence $\lim_{m\to \infty}A_{k,m}/{\hat{c}_{k,m}}=\infty$. To see this, $\log({A_{k,m}}/{\hat{c}_{k,m}})=\log A_{k,m}-\log \hat{c}_{k,m}=A_{k-1,m}-{\log (A_{k-1,m}})/{2}-\hat{c}_{k-1,m}+O(1)$. Due to the inductive hypothesis, $\hat{c}_{k-1,m}=o(A_{k-1,m})$. Therefore, $A_{k-1,m}-{\log (A_{k-1,m}})/{2}-\hat{c}_{k-1,m}\sim A_{k-1,m}$ and $\lim_{m\to \infty}\log({A_{k,m}}/{\hat{c}_{k,m}})=\infty$.
    
\end{remark}

\subsection{Tight Upper Bound for 2-Level HSTs}
\label{2-tight}
The idea to show a general upper bound for $2$-level HSTs is as follows: even if there is no restriction on the size of the whole tree itself, some leaves and subtrees are redundant in the sense that any reasonable solution should not move to them, or does not need to move to them while maintaining feasibility and not increasing the cost. As a result, we can trim off those redundant nodes and hence reduce the graph size, just as how we show that we only need to consider $2^{m}-1$ states for the weighted star case. This idea has also been applied by Bubeck and Rabani to give iterative upper bounds for $2$ and higher levels of HSTs (see Section 4.2 Deeper Trees in~\cite{DBLP:conf/approx/BubeckR20}), and we present a refined argument on $2$-level HSTs that closes the gap between upper and lower bounds. We show that the number of leaves remained after such preprocessing is bounded in terms of $m$ and apply the optimal MTS algorithm for HSTs from~\cite{DBLP:journals/siamcomp/BubeckCLL21}.

Since the initial state can be arbitrarily given, in this part we only consider those leaves different from the initial state, to evaluate whether some states can be totally replaced by others and every algorithm can still maintain a feasible solution with equal or less cost. This is without loss of generality up to a constant factor, since there is only one initial state and thus it is sufficient to give an upper bound of the amount of states that are not the initial state.

In this part, we express leaf nodes lexicographically by a tuple $(S,T)$ where $S$ is the label of the leaf node itself and $T$ is the label of the $1$-level subtree that it is in. The label of every node is defined in the same way as the previous subsection\footnote{Recall that the label of a leaf node is the set of request types that this state can satisfy, and the label of a $1$-level subtree is the set of labels of leaf nodes contained in it.} . A preliminary observation is that, if two leaf nodes share exactly the same tuple, then they are isomorphic and it suffices to only keep one in the tree. The feasibility of removing redundant nodes is that, when $m$ possible requests and their corresponding feasible sets in the whole metric are fixed, running an algorithm over a subset of states produces a feasible solution for the whole metric as well, therefore what we need to show is that the optimal cost does not increase while we trim off some states.

\begin{proposition}
    If there are two leaves $(S_{1},T)$ and $(S_{2},T)$ in the same $1$-level subtree such that $S_{1}\subseteq S_{2}$, then any move to $(S_{1}, T)$ can be replaced with one to $(S_{2},T)$ without increasing the cost.
\end{proposition}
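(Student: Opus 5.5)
The argument is an exchange: take an arbitrary trajectory of the server and modify it only at the visits to $(S_1,T)$, checking that the modified trajectory stays feasible and costs no more. Feasibility uses $S_1\subseteq S_2$. Cost uses the fact that, in an HST, distances between leaves depend only on the level of their lowest common ancestor.

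\emph{Feasibility.} Fix any trajectory $s_0,s_1,s_2,\ldots$ and replace every occurrence of the leaf $(S_1,T)$ with $t\geq 1$ by $(S_2,T)$. The initial state is left alone; this matches the convention that we only consider leaves other than the initial state. If $s_t=(S_1,T)$ served request $\rho_t=K_i$, then $i\in S_1\subseteq S_2$. Hence $(S_2,T)$ also lies in $K_i$, and the new trajectory is feasible at every step.

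\emph{Distances.} The leaves $(S_1,T)$ and $(S_2,T)$ are in the same $1$-level subtree, so they are at distance $1$ (i.e.\ $C_0$) from each other. Let $u$ be any leaf different from both. If $u$ is outside this subtree, its lowest common ancestor with either leaf is the root, so $d(u,(S_1,T))=d(u,(S_2,T))=C_1$. If $u$ is inside the subtree, both distances equal $1$. So, apart from the pair $(S_1,T),(S_2,T)$ itself, every leaf is equidistant from the two.

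\emph{Cost comparison, term by term in $\sum_t d(s_{t-1},s_t)$.} Consider a step whose endpoints are affected by the substitution.
\begin{itemize}
\item If one endpoint is $(S_1,T)$ and the other is a third leaf $u$, the distance is unchanged by the equidistance above.
\item If both endpoints are $(S_1,T)$, they stay equal, and the cost remains $0$.
\item A step between $(S_1,T)$ and $(S_2,T)$ had cost $1$. After the substitution it becomes a stay at $(S_2,T)$ with cost $0$, so it can only decrease.
\item If $s_0=(S_1,T)$ is the initial state and $s_1$ is replaced, the first step changes from $0$ to $1$. This one-time additive constant is exactly why the statement restricts to leaves other than the initial state (up to a constant factor).
\end{itemize}
Summing over all steps, the cost does not increase. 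Applying this to an optimal trajectory shows that deleting $(S_1,T)$ from the tree does not increase the optimal cost.

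The main subtlety is the equidistance argument. It must be justified from the HST definition (all edges from a node to its children have equal weight, so leaf distances depend only on the LCA level), not from anything specific to the lower-bound trees $\cT_{k,m}$. Once that is stated, the rest is routine bookkeeping.
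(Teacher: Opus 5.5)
Your proposal is correct and takes the same approach as the paper, which rests on the same two facts: feasibility is preserved because $S_1\subseteq S_2$, and every other leaf is equidistant from two leaves of the same $1$-level subtree in a $2$-level HST. Your term-by-term accounting over $\sum_t d(s_{t-1},s_t)$ and your handling of the initial state simply make the paper's two-line argument explicit. One cosmetic point: the concrete distances $1$ and $C_1$ come from the lower-bound trees, whereas here the HST is arbitrary, but your argument only uses that the two distances coincide, so nothing changes.
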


\begin{proof}
    Any request that is feasible at $(S_{1}, T)$ is also feasible at $(S_{2}, T)$ since $S_{1}\subseteq S_{2}$. Moreover, for any state different from $(S_{1}, T)$ and $(S_{2}, T)$, it has the same distance from those two nodes that are within the same $1$-level subtree $T$ according to the $2$-level HST structure.
\end{proof}

As a result, we can assume that every $1$-level subtree is an antichain over subsets of $[m]$, which contains at most $\binom{m}{\lceil m/2\rceil }=A_{1,m}$ elements according to Sperner's theorem~\cite{sperner1928satz}. The next thing is to consider the maximum amount of such $1$-level subtrees, which is equivalent to the amount of antichains over subsets of $[m]$. This amount is called Dedekind number $D(m)$ and its following asymptotic approximation property has been proven in~\cite{kleitman1975dedekind}:

\begin{lemma}[\cite{kleitman1975dedekind}]\label{dedekind}
    $\log D(m)=O(A_{1,m})$.
\end{lemma}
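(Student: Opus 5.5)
We will not reproduce Kleitman's sharp asymptotics $\log_2 D(m)=(1+o(1))\binom{m}{\lfloor m/2\rfloor}$. The lemma only asks for $O(A_{1,m})$, and Hansel's symmetric-chain argument gives this directly, namely $D(m)\leq 3^{A_{1,m}}$.

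\textbf{Step 1: reduce to monotone functions.} We first note that antichains of $2^{[m]}$ are in bijection with monotone Boolean functions $f:2^{[m]}\to\{0,1\}$. An antichain $\mathcal{A}$ corresponds to the indicator of its up-closure, and conversely $\mathcal{A}$ is recovered as the set of minimal elements of $f^{-1}(1)$. So it suffices to count monotone functions.

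\textbf{Step 2: a symmetric chain decomposition.} Next we invoke the de Bruijn--Tengbergen--Kruyswijk symmetric chain decomposition of $2^{[m]}$. Its properties are:
\begin{itemize}
\item every chain $a_1\subset\cdots\subset a_k$ has consecutive sizes, with $|a_1|+|a_k|=m$;
\item each chain meets the middle layer exactly once, so there are exactly $\binom{m}{\lfloor m/2\rfloor}=A_{1,m}$ chains.
\end{itemize}
We use the standard inductive construction, which builds the decomposition for $[m]$ from that for $[m-1]$ by splitting each chain $C$ into $C\cup\{a_k\cup\{m\}\}$ and $\{a_i\cup\{m\}: i<k\}$. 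This construction yields Hansel's key property. For every chain $a_1\subset\cdots\subset a_k$ and every $2\leq i\leq k-1$, the set $b_i:=a_{i-1}\cup(a_{i+1}\setminus a_i)$ lies on a chain of length exactly $k-2$, and it satisfies $a_{i-1}\subset b_i\subset a_{i+1}$ with $b_i$ incomparable to $a_i$.

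\textbf{Step 3: counting extensions.} We process the chains in order of increasing length. Fix a monotone $f$ restricted to all shorter chains, and consider a chain $a_1\subset\cdots\subset a_k$. On this chain, $f$ is a threshold function determined by the first index $j$ with $f(a_j)=1$. For each $2\leq i\leq k-1$ the value $f(b_i)$ is already known, and it constrains the threshold:
\begin{itemize}
\item if $f(b_i)=1$ then $f(a_{i+1})=1$;
\item if $f(b_i)=0$ then $f(a_{i-1})=0$.
\end{itemize}
Scanning $i=2,\dots,k-1$, these constraints pin the threshold down to one of at most two adjacent positions. Hence there are at most $3$ consistent extensions of $f$ to the chain; for chains of length $\leq 2$ this bound is immediate. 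Multiplying over all $A_{1,m}$ chains gives $D(m)\leq 3^{A_{1,m}}$, and therefore $\log D(m)\leq (\log 3)A_{1,m}=O(A_{1,m})$.

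\textbf{Main obstacle.} The one nontrivial step is verifying Hansel's property that $b_i$ lies on a chain of length exactly $k-2$ of the decomposition. We would prove it by induction along the recursive construction. The chains of $[m]$ are obtained from those of $[m-1]$ in two ways, either by appending $a_k\cup\{m\}$ or by adding $m$ to all elements except the top. In both cases, the $b_i$ for the new chain is either the old $b_i$ (with $m$ added where appropriate) or the top of the shortened sibling chain, which has length $k-2$. Checking these cases, together with the bookkeeping at the chain endpoints, is routine but is where care is needed.
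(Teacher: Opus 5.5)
Your argument is correct, but it takes a different route from the paper. The paper gives no proof and simply cites Kleitman--Markowsky \cite{kleitman1975dedekind}, whose result is the sharp asymptotic $\log_2 D(m)=(1+o(1))\binom{m}{\lfloor m/2\rfloor}$, obtained by a far more delicate counting argument. You instead prove the weaker bound $D(m)\le 3^{A_{1,m}}$ via Hansel's symmetric-chain argument. This loses a factor $\log_2 3$ in the exponent, which does not matter for the $O(A_{1,m})$ statement or for its only use in \cref{up2}. In return, the lemma becomes self-contained and elementary.

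Two details should be tightened when you write it out.

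\textbf{Step 3.} The threshold is not confined to two adjacent positions; it can take three values. What is true is that at most two consecutive chain elements remain undetermined. If $p$ is the largest $i$ with $f(b_i)=0$, then $f(a_{p-1})=0$, and $f(b_{p+1})=1$ forces $f(a_{p+2})=1$. This leaves only $a_p,a_{p+1}$ free (with the analogous statement at the ends of the chain), hence at most three monotone completions. That is exactly the factor $3$ you use.

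\textbf{Induction for Hansel's property.} The case of the shortened chain $a_1\cup\{m\}\subset\cdots\subset a_{k-1}\cup\{m\}$ needs one more fact. For $2\le i\le k-2$, the old $b_i$ must not be the top $d_{k-2}$ of its length-$(k-2)$ chain $D$. Otherwise $b_i\cup\{m\}$ would land in the extended copy of $D$ (length $k-1$) rather than the shortened one (length $k-3$). This follows from symmetry, since $|b_i|=|a_i|<|a_{k-1}|=|d_{k-2}|$.
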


\begin{theorem}\label{up2}
    $c_{2,m}=O(A_{1,m})$.
\end{theorem}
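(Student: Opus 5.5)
We reduce any $2$-level HST instance to an equivalent one on a $2$-level HST with few leaves, and then run the optimal randomized MTS algorithm for HSTs from \cite{DBLP:journals/siamcomp/BubeckCLL21}. That algorithm is $O(\log n)$-competitive on an $n$-point HST (hence on an $n$-state MSS instance). So it suffices to show that, after removing redundant states, the number of states is $n = 2^{O(A_{1,m})}$. Then $\log n = O(A_{1,m})$.

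\textbf{Within a $1$-level subtree.} By the preceding proposition, any leaf $(S_1,T)$ dominated by a leaf $(S_2,T)$ with $S_1\subseteq S_2$ can be deleted. Every move to it can be redirected to $(S_2,T)$ without increasing the cost of any solution, in particular the optimum. We can therefore assume that each $1$-level subtree's label $T$ is an antichain in $2^{[m]}$. By Sperner's theorem, each subtree then has at most $A_{1,m}$ leaves.

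\textbf{Across $1$-level subtrees.} If two $1$-level subtrees carry the same label $T$, they are isomorphic and equidistant from every other point, so we keep only one; the exception is the subtree containing the initial state, which we always keep. A similar domination argument could also delete a subtree whose label set is contained in another's, but this is not needed. The number of remaining $1$-level subtrees is then at most $D(m)+1$, where $D(m)$ is the Dedekind number. By Lemma~\ref{dedekind}, $\log D(m)=O(A_{1,m})$. The total number of states is therefore
\[
n\le (D(m)+1)\,A_{1,m}+1,
\qquad
\log n = O(A_{1,m}) + O(\log A_{1,m}) = O(A_{1,m}).
\]

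\textbf{Conclusion and main subtlety.} It remains to justify that the trimming preserves competitiveness. Each deletion leaves the offline optimum unchanged, since every optimal trajectory can be rerouted onto the remaining states at no extra cost. Meanwhile, any algorithm running on the trimmed metric produces a feasible solution on the original metric with the same cost, because the trimmed HST is an isometric sub-metric. Applying the $O(\log n)$ algorithm of \cite{DBLP:journals/siamcomp/BubeckCLL21} to the trimmed instance therefore gives competitive ratio $O(A_{1,m})$. The main point needing care is the handling of the initial state, which may itself be a dominated leaf or lie in a duplicated subtree. We resolve this by always keeping it, as the text preceding the theorem allows. This costs one extra state, or at most one extra subtree of at most $A_{1,m}+1$ leaves, which does not affect the bound. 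We also note that the trimmed tree is still an HST with the same edge weights.
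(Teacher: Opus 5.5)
Your proposal is correct and follows essentially the same route as the paper. Both prune dominated leaves so that each $1$-level subtree is an antichain of at most $A_{1,m}$ leaves (Sperner), bound the number of distinct subtrees by the Dedekind number $D(m)$ with $\log D(m)=O(A_{1,m})$, and run the $O(\log n)$-competitive HST algorithm of \cite{DBLP:journals/siamcomp/BubeckCLL21} on the resulting $2^{O(A_{1,m})}$ states; your explicit step of keeping one whole subtree per label, plus the one containing the initial state, is if anything a slightly cleaner version of the paper's per-tuple deduplication.
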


\begin{proof}
    It suffices to consider at most $O(D(m)A_{1,m})$ leaf nodes (including the initial state and at most $A_{1,m}$ other leaves in the same $1$-level subtree as they cannot be pruned by the above process) and apply the optimal online MTS algorithm on an HST that has competitive ratio $O(\log n)$, where $n$ is the number of points in the metric.  According to the asymptotic property of $D(m)$ as shown in Lemma~\ref{dedekind}, $\log (D(m)A_{1,m})=O(A_{1,m}+\log A_{1,m})=O(A_{1,m})$.
\end{proof}

%% file: two-sets.tex
\section{Chasing Two Sets on General Metrics}

In this section, we consider the case $m=2$, where there are only two possible request types $K_{1}$ and $K_{2}$ that can be generated by the adversary, and show that there is a constant-competitive algorithm that works on general metrics, with no requirement on the size $|X|$ nor the structure of the metric space.

Our method is an application of the Guess-and-Double Method. Here we address that our algorithm is strictly competitive without an additive constant. This matters when $m=2$ because it is trivial to design a non-strictly $1$-competitive algorithm in this case: suppose $s_{1}\in K_{1}$ and $s_{2}\in K_{2}$ are the points such that $d(s_{1},s_{2})=d(K_{1},K_{2})$, then the algorithm can move to $s_{1}$ upon the first request of $K_{1}$ and then always alternate between $s_{1}$ and $s_{2}$ upon every new request, and it is easy to see that the cost of this algorithm is no more than the optimal cost plus the diameter of the metric space. As a result, we can apply the Guess-and-Double Method to eliminate the additive constant and devise a strictly competitive algorithm.

\subsection{Counterexample to Greedy}

Before we show how to apply the Guess-and-Double Method to our problem setting, we first use a simple example to see how the classical ski rental problem can be reduced to our problem, which denies the attempt of applying the greedy algorithm that always moves to the closest feasible state from the current one.

\begin{example}
Suppose that we have a ski rental instance where the rental rate is $1$ and the buying cost is $B>1$. At every new step of the online instance, if the algorithm has not bought the ski set yet, then it needs to either rent it for one more step with cost $1$, or buy it with cost $B$. After buying the ski set, the algorithm does not need to pay new cost no matter how many steps are there in the future.

Consider a metric space where there are four points $s_{0},s_{1},s_{2},s_{3}$ such that $d(s_{0},s_{1})=d(s_{1},s_{2})=d(s_{0},s_{2})=1$ and $d(s_{i},s_{3})=B$ for $i=0,1,2$. We construct two sets $K_{1}$ and $K_{2}$ such that $K_{1}=\{s_{1},s_{3}\}$ and $K_{2}=\{s_{2},s_{3}\}$. The initial location of the server is at $s_{0}$. $s_{3}$ here is the safe location for both $K_{1}$ and $K_{2}$, which corresponds to the state of the ski rental instance after buying the ski set with cost $B$.

The input sequence of the two-set chasing problem instance that we construct is just $K_{1}$ and $K_{2}$ alternating. It is easy to see that if the server has not moved to the safe location $s_{3}$ with cost $B$ at some step, then every time when a new request appears, it always needs to switch between $s_{1}$ and $s_{2}$ with cost $1$. This is exactly equivalent to the two-state ski rental problem. Moreover, we can also see from this reduction that the simple memoryless greedy algorithm to move to the closest feasible state at every step has unbounded competitive ratio when the input sequence is sufficiently long, since it pays cost $1$ at every step and never visits $s_{3}$.
\end{example}

\subsection{Our Algorithm}
Without loss of generality, suppose that the first request in the sequence that forces the algorithm to move is $K_{1}$ and hence the whole sequence can be regarded as $K_{1},K_{2},K_{1},K_{2}\ldots$, strictly alternating between two sets. In this way, $s_{0}\notin K_{1}$ holds. We further assume that $d(s_{0},K_{1})=\min_{s\in K_{1}}d(s_{0},s)= 1$, which means that the cost of the optimal solution is always at least $1$ as long as the input sequence is non-empty. Moreover, if there is only one time step in the input sequence, then the optimal solution is exactly $1$.

Since the sequence is essentially in the alternating form $K_{1},K_{2},K_{1},K_{2}\ldots$, the only variable we have to deal with is the length of the sequence. Longer the sequence, bigger the value of the optimal solution, and at every step we have a lower bound of the optimal solution which is exactly that of the partial sequence that has appeared so far. To formalize, Let $\OPT_{t}$ be the cost of the optimal solution for the server to process the tasks that have arrived up to and including step $t$. At every step $t$ we maintain a guess of an integer $j\geq 0$ such that

\begin{equation}
    2^{j}\leq \OPT_{t}<2^{j+1}.
\end{equation}

Moreover, we can compute the input sequence with the largest number of steps whose optimal value is in the range $[2^{j}, 2^{j+1})$ for each non-negative integer $j$, and let $\widehat{\OPT}_{j}$ be the optimal solution of serving this sequence. We slightly abuse similar notations by using them to refer to both the solution itself and the cost value of it. Note that if $2^{j}\leq \OPT_{t}<2^{j+1}$ for some $t$, then $s_{t}(\widehat{\OPT}_{j})$, the state of $\widehat{\OPT}_{j}$ at step $t$, is well-defined, because the length of the sequence for which $\widehat{\OPT}_{j}$ is defined is at least $t$ based on our choice of the longest possible one. Our algorithm always moves to $s_{t}(\widehat{\OPT}_{j})$ as we have defined at every step $t$.

\begin{algorithm}
\caption{A deterministic algorithm for the two-set chasing problem based on the Guess-and-Double Method}\label{alg2}

\textbf{Initialize:} $j=0$ since $\OPT_{1}=1$.

At every time step $t$:

\begin{algorithmic}
    \If {$2^{j}\leq\OPT_{t}<2^{j+1}$}
    \State Move to state $s_{t}(\widehat{\OPT}_{j})$, the state of $\widehat{\OPT}_{j}$ at step $t$.
\Else

End the previous phase;

Update the value $j$ such that $2^{j}\leq\OPT_{t}<2^{j+1}$ for the new phase;

Move to state $s_{t}(\widehat{\OPT}_{j})$.
\EndIf
\end{algorithmic}

\end{algorithm}

We divide the whole process into phases according to the values $j$'s that are used in the Guess-and-Double Method where each phase contains all steps when the value $j$ stays a given value (see the description of Algorithm~\ref{alg2} for details). Suppose that there are $k$ phases in total chronologically with values $j_{1},...,j_{k}$ respectively, then the final optimal solution has value $2^{j_{k}}\leq\OPT<2^{j_{k}+1}$, according to the definition of the guessed value $j$. When the algorithm moves from state $s_{t}(\widehat{\OPT}_{j_{\ell}})$ to $s_{t+1}(\widehat{\OPT}_{j_{\ell}})$ within the same phase with the guessed value $j_{\ell}$, we charge the cost $d(s_{t}(\widehat{\OPT}_{j_{\ell}}), s_{t+1}(\widehat{\OPT}_{j_{\ell}}))$ to the \emph{internal cost} of the $l$'th phase; when it moves from $s_{t}(\widehat{\OPT}_{j_{\ell}})$ to $s_{t+1}(\widehat{\OPT}_{j_{\ell+1}})$ between two different phases, we first use the triangle inequality to give an upper bound
\begin{equation*}
   d(s_{t}(\widehat{\OPT}_{j_{\ell}}), s_{t+1}(\widehat{\OPT}_{j_{\ell+1}}))\leq d(s_{t}(\widehat{\OPT}_{j_{\ell}}), s_{0})+d(s_{0}, s_{t+1}(\widehat{\OPT}_{j_{\ell+1}})), 
\end{equation*}
and then charge $d(s_{0}, s_{t+1}(\widehat{\OPT}_{j_{\ell+1}}))$ to the internal cost of the $(\ell+1)$'th phase and  $d(s_{t}(\widehat{\OPT}_{j_{\ell}}), s_{0})$ to the \emph{external cost} between the $\ell$'th and the $(\ell+1)$'th phase.

\begin{lemma}
    The internal cost of the $\ell$'th phase is no more than $2^{j_{\ell}+1}$ for every $1\leq \ell\leq k$.
\end{lemma}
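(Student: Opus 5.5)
The plan is to bound the internal cost of phase $\ell$ by the total cost of the fixed offline solution $\widehat{\OPT}_{j_\ell}$. Since $\widehat{\OPT}_{j_\ell}$ is an optimal solution of a sequence whose optimal value lies in $[2^{j_\ell},2^{j_\ell+1})$, its total cost, measured as the length of its trajectory starting from $s_0$, is strictly less than $2^{j_\ell+1}$. So it suffices to show that the internal cost of phase $\ell$ is at most the length of some contiguous initial part of the trajectory of $\widehat{\OPT}_{j_\ell}$.

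Let phase $\ell$ consist of steps $t_\ell,\ldots,t'_\ell$. The internal cost has two parts.
\begin{itemize}
\item The entry term $d(s_0, s_{t_\ell}(\widehat{\OPT}_{j_\ell}))$. For $\ell=1$ this is the actual first move from $s_0$. For $\ell\ge 2$ it is the charged piece from the triangle inequality.
\item The within-phase moves $d(s_t(\widehat{\OPT}_{j_\ell}), s_{t+1}(\widehat{\OPT}_{j_\ell}))$ for $t_\ell\le t<t'_\ell$.
\end{itemize}
By the triangle inequality along the trajectory of $\widehat{\OPT}_{j_\ell}$, the entry term is at most $\sum_{t=1}^{t_\ell} d(s_{t-1}(\widehat{\OPT}_{j_\ell}), s_t(\widehat{\OPT}_{j_\ell}))$, where $s_0(\widehat{\OPT}_{j_\ell})=s_0$ is the common initial state. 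Adding the within-phase moves gives the telescoped total
\[
\sum_{t=1}^{t'_\ell} d\bigl(s_{t-1}(\widehat{\OPT}_{j_\ell}), s_t(\widehat{\OPT}_{j_\ell})\bigr),
\]
which is the cost of the prefix of $\widehat{\OPT}_{j_\ell}$ up to step $t'_\ell$. This prefix is well-defined because $\widehat{\OPT}_{j_\ell}$ is defined on a sequence of length at least $t'_\ell$, as noted before Algorithm~\ref{alg2}.

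Finally, the cost of this prefix is at most the total cost of $\widehat{\OPT}_{j_\ell}$, which is less than $2^{j_\ell+1}$. This uses that the request sequence is the fixed alternating sequence $K_1,K_2,\ldots$, so the first $t'_\ell$ requests of the sequence defining $\widehat{\OPT}_{j_\ell}$ coincide with the actual input.

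The main point to check is that the sequence defining $\widehat{\OPT}_{j}$ actually exists, i.e.\ that there is a longest sequence with optimum in $[2^j,2^{j+1})$. This needs $\OPT_t\to\infty$, or else the last phase never ends and the bound is still fine. If the lengths are unbounded, I would instead take $\widehat{\OPT}_j$ for any sequence longer than the current step; the same argument applies.
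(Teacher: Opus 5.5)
Your argument is correct and is essentially the paper's: the entry term $d(s_0, s_{t_\ell}(\widehat{\OPT}_{j_\ell}))$ together with the in-phase moves is a shortcut of the prefix of the trajectory of $\widehat{\OPT}_{j_\ell}$ up to the last step of the phase. That prefix has length at most $\widehat{\OPT}_{j_\ell}<2^{j_\ell+1}$. Your closing aside is about whether the algorithm is well defined, not about this lemma (the paper simply assumes the longest sequence exists), and its fallback does not work as stated: if you re-pick a longer reference sequence mid-phase, consecutive states come from different trajectories and the single-trajectory shortcut bound no longer applies, so you would instead need to fix one (possibly infinite) trajectory of cost at most $2^{j+1}$ for the whole phase.
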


\begin{proof}
    Suppose that the Guess-and-Double Method stays in the $\ell$'th phase from step $t_{\ell,1}$ to $t_{\ell,2}$, then the internal cost of this phase can be expressed as 
    \begin{equation*}
            d(s_{0}, s_{t_{\ell,1}}(\widehat{\OPT}_{j_{\ell}}))+\sum_{t=t_{\ell,1}}^{t_{\ell,2}-1}d(s_{t}(\widehat{\OPT}_{j_{\ell}}),s_{t+1}(\widehat{\OPT}_{j_{\ell}})).
    \end{equation*}

Based on our notation of the trajectory of $\widehat{\OPT}_{j_{\ell}}$ and the definition that it is the longest sequence whose optimal value is strictly less than $2^{j_{\ell}+1}$, it holds that 

\begin{equation*}
    \sum_{t=0}^{t_{\ell,2}-1}d(s_{t}(\widehat{\OPT}_{j_{\ell}}),s_{t+1}(\widehat{\OPT}_{j_{\ell}}))\leq\widehat{\OPT}_{j_{\ell}}<2^{j_{\ell}+1},
\end{equation*}

where $s_{0}(\widehat{\OPT}_{j_{\ell}})$ is defined to be the initial state $s_{0}$. Comparing two trajectories
\begin{equation*}
    s_{0},s_{1}(\widehat{\OPT}_{j_{\ell}}),\ldots,s_{t_{\ell,2}}(\widehat{\OPT}_{j_{\ell}}) \text{ and } s_{0},s_{t_{\ell,1}}(\widehat{\OPT}_{j_{\ell}}),\ldots,s_{t_{\ell,2}}(\widehat{\OPT}_{j_{\ell}}),
\end{equation*}
it is easy to see that the latter is a shortcut of the former, and hence the total length of the latter is no more than that of the former, which is no more than $2^{j_{\ell}+1}$ based on the definition of a phase.
\end{proof}
\begin{lemma}
    The external cost between the $\ell$'th and the $(\ell+1)$'th phase is no more than $2^{j_{\ell}+1}$ for every $1\leq \ell\leq k-1$.
\end{lemma}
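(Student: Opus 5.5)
The external cost between the $\ell$'th and $(\ell+1)$'th phase is the quantity $d(s_{t}(\widehat{\OPT}_{j_{\ell}}), s_{0})$, where $t=t_{\ell,2}$ is the last step of phase $\ell$. My plan is to bound it by the length of a trajectory of $\widehat{\OPT}_{j_\ell}$ using the triangle inequality. By the triangle inequality along the trajectory
\[
s_{0},\, s_{1}(\widehat{\OPT}_{j_{\ell}}),\, \ldots,\, s_{t_{\ell,2}}(\widehat{\OPT}_{j_{\ell}}),
\]
the distance from $s_0$ to $s_{t_{\ell,2}}(\widehat{\OPT}_{j_\ell})$ is at most
\[
\sum_{t=0}^{t_{\ell,2}-1} d\bigl(s_{t}(\widehat{\OPT}_{j_\ell}),\, s_{t+1}(\widehat{\OPT}_{j_\ell})\bigr).
\]
Here $s_0(\widehat{\OPT}_{j_\ell})$ is taken to be the initial state $s_0$, exactly as in the previous lemma.

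The key step is to verify that $s_{t_{\ell,2}}(\widehat{\OPT}_{j_\ell})$ is well-defined. This holds because step $t_{\ell,2}$ belongs to phase $\ell$, so $2^{j_\ell}\le \OPT_{t_{\ell,2}}<2^{j_\ell+1}$. Since $\widehat{\OPT}_{j_\ell}$ is the optimal solution of the longest alternating sequence whose optimum lies in $[2^{j_\ell},2^{j_\ell+1})$, that sequence has length at least $t_{\ell,2}$.

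The prefix sum above is at most the total cost of $\widehat{\OPT}_{j_\ell}$, since all distances are nonnegative. That total cost is less than $2^{j_\ell+1}$ by the choice of $\widehat{\OPT}_{j_\ell}$. Hence the external cost is at most $2^{j_\ell+1}$.

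There is no real obstacle in this argument. The only subtlety is the well-definedness just discussed, together with the monotonicity of $\OPT_t$ in $t$ for the alternating sequence. Monotonicity is what makes the phases correspond to nondecreasing $j$'s and ensures that the "longest sequence" is meaningful. It follows because any feasible solution for a longer prefix restricts to a feasible solution for a shorter one, at no greater cost.
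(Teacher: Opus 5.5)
Your proof is correct and follows the paper's argument: the external cost is $d(s_0, s_{t_{\ell,2}}(\widehat{\OPT}_{j_\ell}))$. By the triangle inequality this is at most the length of the prefix of $\widehat{\OPT}_{j_\ell}$'s trajectory up to step $t_{\ell,2}$, which is less than $2^{j_\ell+1}$. Your explicit check that $s_{t_{\ell,2}}(\widehat{\OPT}_{j_\ell})$ is well-defined repeats a point the paper already makes when defining the algorithm; the lemma's proof itself does not restate it.
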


\begin{proof}
    If $t_{\ell,2}$ is the last step of the $\ell$'th phase and the Guess-and-Double Method moves to the $(\ell+1)$'th phase at the next step, then the external cost between the two phases is $d(s_{0},s_{t_{\ell,2}}(\widehat{\OPT}_{j_{\ell}})$. Since the whole trajectory $s_{0},s_{1}(\widehat{\OPT}_{j_{\ell}}),\ldots,s_{t_{\ell,2}}(\widehat{\OPT}_{j_{\ell}})$ has length no more than $2^{j_{\ell}+1}$, the shortcut $d(s_{0},s_{t_{\ell,2}}(\widehat{\OPT}_{j_{\ell}})$ has length no more than $2^{j_{\ell}+1}$ based on triangle inequalities.
\end{proof}

\begin{theorem}
    The strict competitive ratio of Algorithm~\ref{alg2} on every two-set chasing instance is no more than $6$.
\end{theorem}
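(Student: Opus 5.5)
The plan is to add up the internal and external costs from the two preceding lemmas over all $k$ phases, and then compare the total against $\OPT \geq 2^{j_k}$.

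First I will check that the charging scheme accounts for every movement of Algorithm~\ref{alg2}. A move inside phase $\ell$ is charged to the internal cost of phase $\ell$. A move that crosses from phase $\ell$ to phase $\ell+1$ is split by the triangle inequality through $s_0$. One part goes to the external cost between the two phases, and the other part is exactly the first term $d(s_0, s_{t_{\ell+1,1}}(\widehat{\OPT}_{j_{\ell+1}}))$ in the internal cost of phase $\ell+1$. The very first move of phase $1$ starts at $s_0$, so it is also covered by that first term. Hence
\[
\cost_{\mathrm{ALG}} \leq \sum_{\ell=1}^{k} 2^{j_\ell+1} + \sum_{\ell=1}^{k-1} 2^{j_\ell+1}.
\]

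Next I need to bound these sums. The values $j_1<j_2<\dots<j_k$ are strictly increasing. This holds because $\OPT_t$ is non-decreasing in $t$: the optimal cost of a prefix is at most the optimal cost of a longer prefix, since truncating a solution only shortens it. A new phase therefore starts only when $\OPT_t$ reaches or exceeds $2^{j+1}$, which forces a strictly larger $j$. So the $j_\ell$ are distinct nonnegative integers bounded by $j_k$, and the sums are geometric:
\[
\sum_{\ell=1}^{k} 2^{j_\ell+1} \leq \sum_{i=0}^{j_k} 2^{i+1} < 2^{j_k+2},
\qquad
\sum_{\ell=1}^{k-1} 2^{j_\ell+1} \leq \sum_{i=0}^{j_k-1} 2^{i+1} < 2^{j_k+1}.
\]
Together this gives $\cost_{\mathrm{ALG}} < 6\cdot 2^{j_k} \leq 6\,\OPT$, because the final phase satisfies $2^{j_k}\leq \OPT$. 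There is no additive constant, so the ratio is strict.

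The step I expect to need the most care is not the arithmetic. It is confirming that the two lemmas apply as stated. First, $s_t(\widehat{\OPT}_{j})$ must be defined for every step of a phase; this is where the ``longest sequence'' choice is used. Second, $\widehat{\OPT}_j$ itself must exist, which requires the set of lengths with optimum in $[2^j,2^{j+1})$ to be finite. This follows because $\OPT_t\to\infty$ along an alternating sequence whenever $d(K_1,K_2)>0$. If $K_1\cap K_2\neq\varnothing$, the optimum is eventually constant, and I would handle that degenerate case by noting that the final phase's $\widehat{\OPT}_j$ can be taken to be any optimal trajectory, for which the same bounds hold. I would also make sure the normalization $d(s_0,K_1)=1$ gives $\OPT_1=1$, so that $j$ starts at $0$.
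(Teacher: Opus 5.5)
Your proposal is correct and follows essentially the same route as the paper. You sum the per-phase internal bounds and the inter-phase external bounds from the two preceding lemmas, use that the $j_\ell$ are strictly increasing to bound these by geometric series (below $2^{j_k+2}$ and $2^{j_k+1}$ respectively), and compare with $\OPT\geq 2^{j_k}$; your explicit argument that $\OPT_t$ is non-decreasing and your remarks on the well-definedness of $\widehat{\OPT}_j$ spell out points the paper leaves implicit.
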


\begin{proof}
    The internal cost of each phase with the guessed value $j_{\ell}$ is at most $2^{j_{\ell}+1}$, and the total internal cost is no more than 

    \begin{equation*}
        \sum_{\ell=1}^{k}2^{j_{\ell}+1}\leq \sum_{\ell=1}^{k}2^{j_{k}+1-k+\ell}=2^{j_{k}+1}\sum_{t=1}^{k}\frac{1}{2^{t-1}}< 2^{j_{k}+2}.
    \end{equation*}

Similarly, the total external cost is no more than $2^{j_{k-1}+2}\leq 2^{j_{k}+1}$. The algorithmic execution terminates in phase $k$ with guessed value $j_{k}$, so the final optimal solution is at least $2^{j_{k}}$, and \cref{alg2} achieves strict competitive ratio of no more than $\frac{2^{j_{k}+2}+2^{j_{k}+1}}{2^{j_{k}}}=6$.
\end{proof}

\begin{remark}
The Guess-and Double Method for the two-set chasing problem (parametrized metrical service system with $m=2$) can be generalized to the case where every possible request sequence is an unknown prefix of some fixed sequence that the algorithm knows in advance. In this case, the only uncertainty is the length of the actual input sequence, and $\widehat{\text{OPT}}_{j}$ for every $j\geq 0$ can be computed prior to the execution of the algorithm. Although the research of metrical task systems has been focused on finite discrete metrics (and the majority of the classic results express the competitive ratio as a function of the number of points $n$), this Guess-and Double Method actually works for every general metric space including the Euclidean space etc., since it only requires that every $\widehat{\text{OPT}}_{j}$ can be computed prior to the execution of the algorithm, which is guaranteed by that every possible request sequence is some prefix of a given infinite sequence, and that the triangle inequalities of the metric space hold.
\end{remark}

%% file: Reference.bib
@inproceedings{DBLP:conf/approx/BubeckR20,
  author       = {S{\'{e}}bastien Bubeck and
                  Yuval Rabani},
  editor       = {Jaroslaw Byrka and
                  Raghu Meka},
  title        = {Parametrized Metrical Task Systems},
  booktitle    = {Approximation, Randomization, and Combinatorial Optimization. Algorithms
                  and Techniques, {APPROX/RANDOM} 2020, August 17-19, 2020, Virtual
                  Conference},
  series       = {LIPIcs},
  volume       = {176},
  pages        = {54:1--54:14},
  publisher    = {Schloss Dagstuhl - Leibniz-Zentrum f{\"{u}}r Informatik},
  year         = {2020},
  url          = {https://doi.org/10.4230/LIPIcs.APPROX/RANDOM.2020.54},
  doi          = {10.4230/LIPICS.APPROX/RANDOM.2020.54},
  bibsource    = {dblp computer science bibliography, https://dblp.org}
}

@inproceedings{DBLP:conf/stoc/BorodinLS87,
  author       = {Allan Borodin and
                  Nathan Linial and
                  Michael E. Saks},
  editor       = {Alfred V. Aho},
  title        = {An Optimal Online Algorithm for Metrical Task Systems},
  booktitle    = {Proceedings of the 19th Annual {ACM} Symposium on Theory of Computing,
                  1987, New York, New York, {USA}},
  pages        = {373--382},
  publisher    = {{ACM}},
  year         = {1987},
  url          = {https://doi.org/10.1145/28395.28435},
  doi          = {10.1145/28395.28435},
  bibsource    = {dblp computer science bibliography, https://dblp.org}
}

@article{DBLP:journals/siamcomp/BubeckCLL21,
  author       = {S{\'{e}}bastien Bubeck and
                  Michael B. Cohen and
                  James R. Lee and
                  Yin Tat Lee},
  title        = {Metrical Task Systems on Trees via Mirror Descent and Unfair Gluing},
  journal      = {{SIAM} J. Comput.},
  volume       = {50},
  number       = {3},
  pages        = {909--923},
  year         = {2021},
  url          = {https://doi.org/10.1137/19M1237879},
  doi          = {10.1137/19M1237879},
  bibsource    = {dblp computer science bibliography, https://dblp.org}
}

@article{DBLP:journals/tecs/IraniSG03,
  author       = {Sandy Irani and
                  Sandeep K. Shukla and
                  Rajesh K. Gupta},
  title        = {Online strategies for dynamic power management in systems with multiple
                  power-saving states},
  journal      = {{ACM} Trans. Embed. Comput. Syst.},
  volume       = {2},
  number       = {3},
  pages        = {325--346},
  year         = {2003},
  url          = {https://doi.org/10.1145/860176.860180},
  doi          = {10.1145/860176.860180},
  bibsource    = {dblp computer science bibliography, https://dblp.org}
}

@inproceedings{DBLP:conf/stoc/BubeckCR23,
  author       = {S{\'{e}}bastien Bubeck and
                  Christian Coester and
                  Yuval Rabani},
  editor       = {Barna Saha and
                  Rocco A. Servedio},
  title        = {The Randomized k-Server Conjecture Is False!},
  booktitle    = {Proceedings of the 55th Annual {ACM} Symposium on Theory of Computing,
                  {STOC} 2023, Orlando, FL, USA, June 20-23, 2023},
  pages        = {581--594},
  publisher    = {{ACM}},
  year         = {2023},
  url          = {https://doi.org/10.1145/3564246.3585132},
  doi          = {10.1145/3564246.3585132},
  bibsource    = {dblp computer science bibliography, https://dblp.org}
}

@article{kleitman1975dedekind,
  title={On Dedekind’s problem: the number of isotone Boolean functions. II},
  author={Kleitman, Daniel and Markowsky, George},
  journal={Transactions of the American Mathematical Society},
  volume={213},
  pages={373--390},
  year={1975}
}

@article{DBLP:journals/cacm/SleatorT85,
  author       = {Daniel Dominic Sleator and
                  Robert Endre Tarjan},
  title        = {Amortized Efficiency of List Update and Paging Rules},
  journal      = {Commun. {ACM}},
  volume       = {28},
  number       = {2},
  pages        = {202--208},
  year         = {1985},
  url          = {https://doi.org/10.1145/2786.2793},
  doi          = {10.1145/2786.2793},
  bibsource    = {dblp computer science bibliography, https://dblp.org}
}

@article{DBLP:journals/jal/ManasseMS90,
  author       = {Mark S. Manasse and
                  Lyle A. McGeoch and
                  Daniel Dominic Sleator},
  title        = {Competitive Algorithms for Server Problems},
  journal      = {J. Algorithms},
  volume       = {11},
  number       = {2},
  pages        = {208--230},
  year         = {1990},
  url          = {https://doi.org/10.1016/0196-6774(90)90003-W},
  doi          = {10.1016/0196-6774(90)90003-W},
  bibsource    = {dblp computer science bibliography, https://dblp.org}
}

@inproceedings{DBLP:conf/focs/Yao77,
  author       = {Andrew Chi{-}Chih Yao},
  title        = {Probabilistic Computations: Toward a Unified Measure of Complexity
                  (Extended Abstract)},
  booktitle    = {18th Annual Symposium on Foundations of Computer Science, Providence,
                  Rhode Island, USA, 31 October - 1 November 1977},
  pages        = {222--227},
  publisher    = {{IEEE} Computer Society},
  year         = {1977},
  url          = {https://doi.org/10.1109/SFCS.1977.24},
  doi          = {10.1109/SFCS.1977.24},
  bibsource    = {dblp computer science bibliography, https://dblp.org}
}

@article{sperner1928satz,
  title={Ein satz {\"u}ber untermengen einer endlichen menge},
  author={Sperner, Emanuel},
  journal={Mathematische Zeitschrift},
  volume={27},
  number={1},
  pages={544--548},
  year={1928},
  publisher={Springer}
}

@inproceedings{DBLP:conf/dimacs/ChrobakL91,
  author       = {Marek Chrobak and
                  Lawrence L. Larmore},
  editor       = {Lyle A. McGeoch and
                  Daniel Dominic Sleator},
  title        = {The Server Problem and On-Line Games},
  booktitle    = {On-Line Algorithms, Proceedings of a {DIMACS} Workshop, New Brunswick,
                  New Jersey, USA, February 11-13, 1991},
  series       = {{DIMACS} Series in Discrete Mathematics and Theoretical Computer Science},
  volume       = {7},
  pages        = {11--64},
  publisher    = {{DIMACS/AMS}},
  year         = {1991},
  url          = {https://doi.org/10.1090/dimacs/007/02},
  doi          = {10.1090/DIMACS/007/02},
  bibsource    = {dblp computer science bibliography, https://dblp.org}
}

@article{DBLP:journals/algorithmica/BurleyI97,
  author       = {William R. Burley and
                  Sandy Irani},
  title        = {On Algorithm Design for Metrical Task Systems},
  journal      = {Algorithmica},
  volume       = {18},
  number       = {4},
  pages        = {461--485},
  year         = {1997},
  url          = {https://doi.org/10.1007/PL00009166},
  doi          = {10.1007/PL00009166},
  bibsource    = {dblp computer science bibliography, https://dblp.org}
}

@article{DBLP:journals/siamcomp/AugustineIS08,
  author       = {John Augustine and
                  Sandy Irani and
                  Chaitanya Swamy},
  title        = {Optimal Power-Down Strategies},
  journal      = {{SIAM} J. Comput.},
  volume       = {37},
  number       = {5},
  pages        = {1499--1516},
  year         = {2008},
  url          = {https://doi.org/10.1137/05063787X},
  doi          = {10.1137/05063787X},
  bibsource    = {dblp computer science bibliography, https://dblp.org}
}

@inproceedings{DBLP:conf/focs/FiatFKRRV91,
  author       = {Amos Fiat and
                  Dean P. Foster and
                  Howard J. Karloff and
                  Yuval Rabani and
                  Yiftach Ravid and
                  Sundar Vishwanathan},
  title        = {Competitive Algorithms for Layered Graph Traversal},
  booktitle    = {32nd Annual Symposium on Foundations of Computer Science, San Juan,
                  Puerto Rico, October 1-4, 1991},
  pages        = {288--297},
  publisher    = {{IEEE} Computer Society},
  year         = {1991},
  url          = {https://doi.org/10.1109/SFCS.1991.185381},
  doi          = {10.1109/SFCS.1991.185381},
  bibsource    = {dblp computer science bibliography, https://dblp.org}
}

@inproceedings{DBLP:conf/soda/Ramesh93,
  author       = {H. Ramesh},
  editor       = {Vijaya Ramachandran},
  title        = {On Traversing Layered Graphs On-line},
  booktitle    = {Proceedings of the Fourth Annual {ACM/SIGACT-SIAM} Symposium on Discrete
                  Algorithms, 25-27 January 1993, Austin, Texas, {USA}},
  pages        = {412--421},
  publisher    = {{ACM/SIAM}},
  year         = {1993},
  url          = {http://dl.acm.org/citation.cfm?id=313559.313843},
  bibsource    = {dblp computer science bibliography, https://dblp.org}
}

@article{DBLP:journals/jal/Burley96,
  author       = {William R. Burley},
  title        = {Traversing Layered Graphs Using the Work Function Algorithm},
  journal      = {J. Algorithms},
  volume       = {20},
  number       = {3},
  pages        = {479--511},
  year         = {1996},
  url          = {https://doi.org/10.1006/jagm.1996.0024},
  doi          = {10.1006/JAGM.1996.0024},
  bibsource    = {dblp computer science bibliography, https://dblp.org}
}

@article{DBLP:journals/mor/BuchbinderN09,
  author       = {Niv Buchbinder and
                  Joseph Naor},
  title        = {Online Primal-Dual Algorithms for Covering and Packing},
  journal      = {Math. Oper. Res.},
  volume       = {34},
  number       = {2},
  pages        = {270--286},
  year         = {2009},
  url          = {https://doi.org/10.1287/moor.1080.0363},
  doi          = {10.1287/MOOR.1080.0363},
  bibsource    = {dblp computer science bibliography, https://dblp.org}
}
